\def\doi{6 (4:4) 2010}
\begin{document} 

 
\newtheorem{theorem}{Theorem}[section]
\newtheorem{corollary}[theorem]{Corollary} 
\newtheorem{problem}{Problem} 
\newtheorem{lemma}[theorem]{Lemma} 
\newtheorem{remark}[theorem]{Remark} 
\newtheorem{observ}[theorem]{Observation} 
\newtheorem{defin}[theorem]{Definition} 
\newtheorem{example}[theorem]{Example} 
\newcommand{\PR}{{\bf Proof:\ }}             
\def\EPR{\hfill $\Box$\linebreak\vskip2mm}  
 
 
\def\Pol{{\sf Pol}} 
\def\Polo{{\sf Pol}_1\;}
\def\PPol{{\sf pPol\;}} 
\def\Inv{{\sf Inv}} 
\def\Clo{{\sf Clo}\;} 
\def\Con{{\sf Eqv}} 
\def\Sg{{\sf Sg}} 
\def\Cg{{\sf Eg}} 
\def\Un{{\sf Un}} 
\def\Str{{\sf Str}} 
\def\bin{{\sf bin}}
 
  
\let\cd=\cdot  
\let\meet=\wedge  
\let\join=\vee  
\let\tm=\times  
  
  
\let\wh=\widehat  
\def\ox{\ov x}  
\def\oy{\ov y}  
\def\oz{\ov z}  
\def\of{\ov f}  
\def\oa{\ov a}  
\def\ob{\ov b}  
\def\oc{\ov c} 
\let\un=\underline  
\let\ov=\overline  
\let\cc=\circ  
\let\rb=\diamond  
\def\oz{{\rm [}}
\def\sz{{\rm ]}}
\def\os{{\rm (}}
\def\ss{{\rm )}}
  
  
\def\B{\mathcal{B}}  
\def\P{\mathcal{P}}  
\def\zL{\mathbb{L}}  
\def\zD{\mathbb{D}}  
\def\zE{\mathbb{E}}  
\def\zG{\mathbb{G}}  
\def\zA{\mathbb{A}}  
\def\zB{\mathbb{B}}  
\def\zC{\mathbb{C}}  
\def\zM{\mathbb{M}}  
\def\zR{\mathbb{R}}  
\def\zS{\mathbb{S}}  
\def\zT{\mathbb{T}}  
\def\zN{\mathbb{N}}  
\def\zQ{\mathbb{Q}}  
\def\zW{\mathbb{W}}  
\def\C{\mathbf{C}}  
\def\M{\mathbf{M}}  
\def\E{\mathbf{E}}  
\def\N{\mathbf{N}}  
\def\O{\mathbf{O}}
\def\bN{\mathbf{N}}
\def\bX{\mathbf{X}}  
\def\GF{{\rm GF}}  
\def\Cc{\mathcal{C}}  
\def\cF{\mathcal{F}}
\def\cA{\mathcal{A}}  
\def\cG{\mathcal{G}}
\def\cH{\mathcal{H}}
\def\cL{\mathcal{L}}
\def\cP{\mathcal{P}}  
\def\cS{\mathcal{S}}  
\def\cT{\mathcal{T}}  
\def\tB{{\widetilde B}}  
\def\tD{{\widetilde D}}  
\def\ttB{{\widetilde{\widetilde B}}}  
\def\tba{{\tilde\ba}}  
\def\bS{\mathbf{S}}
\def\bT{\mathbf{T}}
  
  
\def\ba{{\bf a}}  
\def\bb{{\bf b}}  
\def\bc{{\bf c}}  
\def\bd{{\bf d}}  
\def\be{{\bf e}}  
\def\bbf{{\bf f}}  
\def\bg{{\bf g}}
\def\bh{{\bf h}}
\def\bs{{\bf s}}  
\def\bm{{\bf m}}
\def\br{{\bf r}}
\def\bt{{\bf t}}  
\def\bu{{\bf u}}  
\def\bv{{\bf v}}  
\def\bx{{\bf x}}  
\def\by{{\bf y}}  
\def\bw{{\bf w}}  
\def\bz{{\bf z}}  
\def\oal{{\ov\al}}  
\def\obeta{{\ov\beta}}  
\def\og{{\ov\gm}}  
\def\odl{{\ov\dl}}  
\def\oep{{\ov\varepsilon}}  
\def\ovm{{\ov\mu}}  
\def\ozero{{\ov0}}  
  
  
\def\CSP{{\rm CSP}} 
\def\CCSP{{\rm CCSP}} 
\def\ECCSP{{\rm ECCSP}} 
\def\NCSP{{\rm \#CSP}}
\def\NCCSP{{\rm \#CCSP}}
\def\FP{{\rm FP}} 
\def\CNF{{\rm CNF}} 
 
  
\let\sse=\subseteq  
\def\ang#1{\langle #1 \rangle}  
\def\dang#1{\ang{\ang{#1}}}  
\def\vc#1#2{#1 _1\zd #1 _{#2}}  
\def\zd{,\ldots,}  
\let\bks=\backslash  
\def\red#1{\vrule height7pt depth7pt width.4pt  
     \lower6pt\hbox{$\scriptstyle #1$}}  
\def\fac#1{/\lower4pt\hbox{$#1$}}  
\def\me{\stackrel{\mu}{\eq}}  
\def\nme{\stackrel{\mu}{\not\eq}}  
\def\eqc#1{\stackrel{#1}{\eq}}  
\def\sol{\stackrel{s}{\sim}}  
\def\choice#1#2{\arraycolsep0pt  
\left(\begin{array}{c} #1\\ #2 \end{array}\right)} 
\def\cl#1#2{  
\left(\begin{array}{c} #1\\ #2 \end{array}\right)}  
\def\cll#1#2#3{  
\left(\begin{array}{c} #1\\ #2\\ #3 \end{array}\right)}  
\def\clll#1#2#3#4{
\begin{array}{c} #1\\ #2\\ #3 \\#4\end{array}}  
\def\pr{{\rm pr}}  
\let\upr=\uparrow  
\def\ua#1{\hskip-1.7mm\uparrow^{#1}} 
\def\sua#1{\hskip-0.2mm\scriptsize\uparrow^{#1}} 
\def\lcm{{\rm lcm}}  
\def\w{$\wedge$}
\let\ex=\exists
\def\rel{R}
\def\relo{Q}
\def\MVM{{\sf MVM}}

  
\def\lb{$\linebreak$}  
\def\sl{\mbox{[}}
\def\sr{\mbox{]}}
  
  
\let\al=\alpha  
\let\gm=\gamma  
\let\dl=\delta
\let\ve=\varepsilon  
\let\kp=\kappa
\let\ld=\lambda  
\let\om=\omega  
\let\vf=\varphi  
\let\vr=\varrho  
\let\th=\theta  
\let\sg=\sigma  
\let\Gm=\Gamma  
\let\Dl=\Delta  
\let\Nb=\bigtriangledown
  
  
\font\tengoth=eufm10 scaled 1200  
\font\sixgoth=eufm6  
\def\goth{\fam12}  
\textfont12=\tengoth \scriptfont12=\sixgoth \scriptscriptfont12=\sixgoth  
\font\tenbur=msbm10  
\font\eightbur=msbm8  
\def\bur{\fam13}  
\textfont11=\tenbur \scriptfont11=\eightbur \scriptscriptfont11=\eightbur  
\font\twelvebur=msbm10 scaled 1200  
\textfont13=\twelvebur \scriptfont13=\tenbur  
\scriptscriptfont13=\eightbur  
\mathchardef\nat="0B4E  
\mathchardef\eps="0D3F

\newcommand{\relabel}[1]{\setcounter{theorem}{\ref{#1}}\addtocounter{theorem}{-1}}

  
\title[The complexity of global cardinality constraints]{The complexity of global cardinality constraints} 
\author[A.~Bulatov]{Andrei A.\ Bulatov\rsuper a}
\address{{\lsuper a}School of Computing Science,
 Simon Fraser University, Burnaby, Canada}
\email{abulatov@cs.sfu.ca}
\author[D.~Marx]{D\'aniel Marx\rsuper b}
\address{{\lsuper b}Tel Aviv University, Tel Aviv, Israel}
\email{dmarx@cs.bme.hu}
  
\keywords{constraint satisfaction problem, cardinality constraints, complexity}
\subjclass{F.2.2, F.4.1}
\amsclass{68Q25, 68W40}
\begin{abstract}
\noindent
  In a constraint satisfaction problem (CSP) the goal is to find an
  assignment of a given set of variables subject to specified
  constraints. A global cardinality constraint is an additional
  requirement that prescribes how many variables must be assigned a
  certain value. We study the complexity of the problem $\CCSP(\Gm)$,
  the constraint satisfaction problem with global cardinality
  constraints that allows only relations from the set $\Gm$. The main
  result of this paper characterizes sets $\Gm$ that give rise to
  problems solvable in polynomial time, and states that the remaining
  such problems are NP-complete. 
\end{abstract}
\maketitle  

\section{Introduction}
In a constraint satisfaction problem (CSP) we are given a set of
variables, and the goal is to find an assignment of the variables
subject to specified constraints.  A constraint is usually
expressed as a requirement that combinations of values of a certain
(usually small) set of variables belong to a certain relation. CSPs
have been intensively studied in both theoretical and practical
perspectives. On the theoretical side the key research direction has
been the complexity of the CSP when either the interaction of sets
constraints are imposed on, that is, the hypergraph formed by these
sets, is restricted
\cite{Gottlob01:hypertree,Grohe07:other-side,Grohe06:fractional}, or
restrictions are on the type of allowed relations
\cite{Jeavons97:closure,Bulatov05:classifying,Bulatov03:conservative,%
Bulatov06:3-element,Barto08:graphs}.
In the latter direction the main focus has been on the so called
\emph{Dichotomy conjecture} \cite{Feder98:monotone} suggesting that
every CSP restricted in this way is either solvable in polynomial time
or is NP-complete.  

This `pure' constraint satisfaction problem is sometimes not enough to
model practical problems, as some constraint that have to be satisfied
are not `local' in the sense that they cannot be viewed as applied to
only a limited number of variables. Constraints of this type are
called \emph{global}. Global constraints are very diverse, the current
Global Constraint Catalog (see {\small {\tt
http://www.emn.fr/x-info/sdemasse/gccat/}}) lists 348 types of
such constraints. In this paper we focus on \emph{global cardinality
  constraints}
\cite{Quimper04:cardinality,Bourdais03:hibiscus,Gomes04:cardinality}. A
global cardinality constraint $\pi$ is specified for a set of values
$D$ and a set of variables $V$, and is given by a mapping
$\pi:D\to\nat$ that assigns a natural number to each element of $D$
such that $\sum_{a\in D}\pi(a)=|V|$. An assignment of variables $V$
satisfies $\pi$ if for each $a\in D$ the number of variables that take
value $a$ equals $\pi(a)$.  In a CSP with global cardinality
constraints, given a CSP instance and a global cardinality constraint
$\pi$, the goal is to decide if there is a solution of the
CSP instance satisfying $\pi$. The restricted class of CSPs with global
cardinality constraints such that every instance from this class 
uses only relations from a fixed set $\Gm$ of relations (such a 
set is often called a \emph{constraint language}) is denoted by
$\CCSP(\Gm)$. We consider the following problem:
Characterize constraint languages $\Gm$ such that $\CCSP(\Gm)$ 
is solvable in polynomial time.
More general versions of global cardinality 
constraints have appeared in the literature, see, e.g.\ 
\cite{Quimper04:cardinality}, where the number of variables taking 
value $a$ has to belong to a prescribed set of cardinalities
(rather than being exactly $\pi(a)$). In this paper we call the CSP 
allowing such generalized constraints \emph{extended  CSP with 
cardinality constraints}. As we discuss later, our results apply to 
this problem as well.

The complexity of $\CCSP(\Gm)$ has been studied in
\cite{Creignou08:cardinality} for constraint languages $\Gm$ on a
2-element set. It was shown that $\CCSP(\Gm)$ is solvable in
polynomial time if every relation in $\Gm$ is
\emph{width-2-affine}, i.e.\ it can be expressed as the set of solutions of
a system of linear equations over a 2-element field containing at most 2
variables, or, equivalently, using the equality and disequality clauses; otherwise it is NP-complete (we assume P$\ne$NP). In the 2-element case $\CCSP(\Gm)$ is
also known as the {\sc $k$-Ones}$(\Gm)$ problem, since a global
cardinality constraint can be expressed by specifying how many ones
(the set of values is thought to be $\{0,1\}$) one wants to have among
the values of variables. The parameterized complexity of {\sc
  $k$-Ones}$(\Gm)$ has also been studied \cite{Marx05:parametrized},
where $k$ is used as a parameter.  

In the case of a 2-element domain, the polynomial classes can be handled
by a standard application of dynamic programming. Suppose that the
instance is given by a set of unary clauses and binary
equality/disequality clauses. Consider the graph formed by the binary
clauses. There are at most two possible assignments for each 
connected component of the graph: setting the value of a variable
uniquely determines the values of all the other variables in the
component. Thus the problem is to select one of the two assignments
for each component. Trying all possibilities would be exponential in
the number of components. Instead, for $i=1,2,\dots$, we compute the
set $S_i$ of all possible pairs $(x,y)$ such that there is a
partial solution on the first $i$ components containing exactly $x$
zeroes and exactly $y$ ones. It is not difficult to see that $S_{i+1}$
can be computed if $S_i$ is already known. 

We generalize the results of \cite{Creignou08:cardinality} for
arbitrary finite sets and arbitrary constraint languages.  As usual, the
characterization for arbitrary finite sets is significantly more
complex and technical than for the 2-element set. As a straightforward
generalization of the 2-element case, we can observe that the problem
is polynomial-time solvable if every relation can be expressed by
binary mappings. In this case, setting a single value in a component
uniquely determines all the values in the component. Therefore, if the
domain is $D$, then there are at most $|D|$ possible assignments in each
component, and the same dynamic programming technique can be applied
(but this time the set $S_i$ contains $|D|$-tuples instead of pairs).

One might be tempted to guess that the class described in the previous
paragraph is the only class where $\CCSP$ is polynomial-time solvable.
However, it turns out that there are more general tractable classes.
First, suppose that the domain is partitioned into equivalence
classes, and the binary constraints are mappings between the
equivalence classes. This means that the values in the same
equivalence class are completely interchangeable. Thus it is
sufficient to keep one representative from each class, and then the
problem can be solved by the algorithm sketched in the previous
paragraph. Again, one might believe that this construction gives all
the tractable classes, but the following example shows that it does
not cover all the tractable cases.

\begin{example}\label{exa:subcomponent}
  Let $R=\{(1,2,3),(1,4,5),(a,b,c),(d,e,c)\}$. We claim that
  $\CCSP(\{R\})$ is poly\-no\-mi\-al-time solvable. Consider the graph on
  the variables where two variables are connected if and only if they
  appear together in a constraint. As before, for each component, we
  compute a set containing all possible cardinality vectors, and then
  use dynamic programming. In each component, we have to consider only
  two cases: either every variable is in $\{1,2,3,4,5\}$ or every
  variable is in $\{a,b,c,d,e\}$. If every variable of component $K$
  is in $\{1,2,3,4,5\}$, then $R$ can be expressed by the unary
  constant relation $\{1\}$, and the binary relation
  $R'=\{(2,3),(4,5)\}$. The binary relations partition component $K$
  into sub-components $K_1$, $\dots$, $K_t$. Since $R'$ is a mapping,
  there are at most 2 possible assignments for each sub-component.
  Thus we can use dynamic programming to compute the set of all
  possible cardinality vectors on $K$ that use only the values in
  $\{1,2,3,4,5\}$. If every variable of $K$ is in $\{a,b,c,d,e\}$,
  then $R$ can be expressed as the unary constant relation $\{c\}$ and
  the binary relation $R''=\{(a,b),(d,e)\}$.  Again, binary relation
  $R''$ partitions $K$ into sub-components, and we can use dynamic
  programming on them. Observe that the sub-components formed by $R'$
  and the sub-components formed by $R''$ can be different: in the
  first case, $u$ and $v$ are adjacent if they appear in the second
  and third coordinates of a constraint, while in the second case, $u$
  and $v$ are adjacent if they appear in the first and second
  coordinates of a constraint.
\end{example}

It is not difficult to make Example~\ref{exa:subcomponent} more
complicated in such a way that we have to look at sub-subcomponents
and perform multiple levels of dynamic programming. This suggests that
it would be difficult to characterize the tractable relations in a
simple combinatorial way.

We give two characterizations of finite $\CCSP$, one more along the line of the usual approach to the CSP, using polymorphisms, and another more 
combinatorial one. The latter is more technical, but it is much more suitable for algorithms.
 
A polymorphism of a constraint language is an operation that preserves
every relation from the language. The types of polymorphisms we need
here are quite common and have appeared in the literature many times.
A ternary operation $m$ satisfying the equations
$m(x,x,y)=m(x,y,x)=m(y,x,x)=x$ is said to be \emph{majority}, and a ternary
operation $h$ satisfying $h(x,y,y)=h(y,y,x)=x$ is said to be \emph{Mal'tsev}.
An operation is \emph{conservative} if it always takes a value equal to one
(not necessarily the same one) of its arguments.
 
\begin{theorem}\label{the:main-poly}
For a constraint language $\Gm$, the problem $\CCSP(\Gm)$ is polynomial time solvable if and only if $\Gm$ has a majority polymorphism and a conservative Mal'tsev polymorphism.
Otherwise it is NP-complete.
\end{theorem}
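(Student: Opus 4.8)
The plan is to prove the theorem in two directions, establishing tractability for languages admitting both a majority polymorphism and a conservative Mal'tsev polymorphism, and NP-completeness otherwise.

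For the hardness direction, I would first invoke the known dichotomy-style machinery for $\CCSP$ on a 2-element domain \cite{Creignou08:cardinality}: if $\Gm$ lacks a majority polymorphism or a conservative Mal'tsev polymorphism, the goal is to exhibit (via gadget constructions and pp-definitions that respect cardinality constraints) a 2-element sub-instance that is not width-2-affine, hence NP-complete. The subtlety compared to ordinary CSP is that pp-definitions are not sound for $\CCSP$ in general, since projecting away auxiliary variables changes the cardinality bookkeeping; so I would instead work with a restricted class of reductions — say, pp-definitions without existential quantification, together with ``domain restriction'' to a two-element subset on which the structure witnesses non-width-2-affineness. The key lemma here is that absence of a majority polymorphism yields, by the standard characterization, a relation that is not expressible by binary relations, and absence of a conservative Mal'tsev polymorphism yields, on some two-element subdomain, a relation violating the affine condition; combining these I would reduce from a known NP-complete $\CCSP$ (the disequality/NAE-type instance of Creignou et al.) with the cardinality vector passed through unchanged.

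For the tractability direction, the heart of the argument is to combine the two polymorphisms structurally. A language with a majority polymorphism is of bounded width and every relation is determined by its binary projections; a conservative Mal'tsev polymorphism forces a compatible equivalence/coset structure on each domain and on the binary relations. I would first establish, using the conservative Mal'tsev operation, that on each connected component of the constraint (Gaifman) graph the solution set carries a nested sequence of equivalence relations — matching the ``sub-subcomponent'' phenomenon of Example~\ref{exa:subcomponent} — so that the set of achievable cardinality vectors on a component can be computed recursively by dynamic programming over this hierarchy, with the majority polymorphism guaranteeing that at each level the relevant relations are ``mapping-like'' (each local choice propagates deterministically). Concretely: solve the underlying CSP (polynomial, since majority gives bounded width and hence tractability), then on the space of solutions perform bottom-up dynamic programming computing, for each node of the recursion tree, the set of cardinality vectors realizable below it; the conservative Mal'tsev property is what keeps these sets of polynomial size (coset-like structure) and lets them be combined by a Minkowski-type sum at each merge.

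The main obstacle I expect is the tractability proof, specifically turning the algebraic hypotheses (majority $+$ conservative Mal'tsev) into the explicit combinatorial hierarchy of nested congruences on components and proving that the dynamic-programming tables stay polynomially bounded. Ordinary CSPs with a majority and a Mal'tsev polymorphism are of course easy, but here we must additionally control \emph{how many} solutions of each cardinality type exist, not merely whether a solution exists, and it is the conservativity of the Mal'tsev operation that prevents an exponential blow-up of realizable cardinality vectors. I anticipate needing a second, purely combinatorial characterization of the tractable languages (as the excerpt promises) precisely to make the recursion and its complexity analysis manageable; the equivalence of that combinatorial description with the polymorphism condition of Theorem~\ref{the:main-poly} is then a separate, and also nontrivial, algebraic step.
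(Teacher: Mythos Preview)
Your tractability sketch is close to the paper's approach: after establishing 2-consistency (majority gives global consistency from 2-consistency), the algorithm recursively decomposes by connected components of the ``non-trivial constraint'' graph and, on a connected piece, by the classes of a non-trivial equivalence $\eta_v$ whose non-triviality comes from the conservative Mal'tsev polymorphism. One minor misattribution: the DP tables are polynomial simply because $|D|$ is fixed (there are at most $|V|^{|D|}$ cardinality vectors), not because of any coset structure; conservativity of the Mal'tsev operation is used to prove the equivalence relations are non-crossing and that $\eta_v$ is non-trivial (so the recursion makes progress), not to bound table size.

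The hardness direction has a real gap. You correctly worry that naive pp-definitions are unsound for $\CCSP$, but your proposed workaround --- avoid existential quantifiers and rely only on domain restriction to a 2-element set --- is both unnecessary and insufficient. The paper proves (Theorem~\ref{the:decision-pp-defin}) that pp-definitions \emph{without equality} \emph{are} sound for $\CCSP$; the reduction blows up each variable into a block and uses Hall's marriage theorem to recover a solution with the right cardinalities. It also proves (Theorem~\ref{the:adding-constants}) that constant relations may be freely added, via a multi-valued-morphism gadget. These two reductions are essential: the hardness argument then works through the \emph{combinatorial} characterization (non-crossing decomposable), not directly through the polymorphism conditions, and splits into three cases --- a binary relation that is not a thick mapping, a crossing pair of equivalence relations, or a non-2-decomposable relation --- each of which needs pp-definability and constants to isolate a small witness. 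The central case (non-thick-mapping) is a direct reduction from \textsc{Bipartite Independent Set}, not from the 2-element $\CCSP$. Your plan to funnel everything to a 2-element subdomain witnessing non-width-2-affineness does not obviously work: absence of a conservative Mal'tsev polymorphism on $D$ does not in general localize to a 2-element obstruction, and without the pp-definability and constants reductions you have no mechanism to extract and manipulate the witnessing relations inside $\CCSP$.
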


Observe that for constraint languages over a 2-element domain,
Theorem~\ref{the:main-poly} implies the characterization of Creignou et
al.~\cite{Creignou08:cardinality}. Width-2 affine is equivalent to
affine and bijunctive (definable in 2SAT), and over a 2-element domain, affine is
equivalent to having a conservative Mal'tsev polymorphism and
bijunctive is equivalent to having a majority polymorphism.

The second characterization uses logical
definability. The right generalization of mappings is given by the notion of rectangularity. A binary relation $\rel$ is called \emph{rectangular} if $(a,c),(a,d),(b,d)\in\rel$ implies $(b,c)\in\rel$. We say that a pair of equivalence relations $\alpha$ and
$\beta$ over the same domain {\em cross}, if there is an $\al$-class
$C$ and a $\beta$-class $D$ such that none of $C\setminus D$,
$C\cap D$, and $D\setminus C$ is empty. A relation is {\em
2-decomposable} if it can be expressed as a conjunction of binary
relations. We denote by $\dang\Gm$ the set of all relations that are
primitive positive definable in $\Gm$. A constraint language is said to be \emph{non-crossing decomposable}
if every relation from $\dang\Gm$ is 2-decomposable, every binary
relation from $\dang\Gm$ is rectangular, and no pair of equivalence
relations from $\dang\Gm$ cross. For detailed definitions and discussion see
Section~\ref{sec:preliminaries}. 

\begin{theorem}\label{the:main-poly2}
For a constraint language $\Gm$, the problem $\CCSP(\Gm)$ is polynomial time solvable if and only if $\Gm$ is non-crossing decomposable.
Otherwise it is NP-complete.
\end{theorem}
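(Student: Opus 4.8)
The plan is to establish Theorem~\ref{the:main-poly2} together with Theorem~\ref{the:main-poly} by proving three things: that non-crossing decomposability of $\Gm$ is equivalent to $\Gm$ having both a majority polymorphism and a conservative Mal'tsev polymorphism; that this condition implies $\CCSP(\Gm)\in\mathrm{P}$; and that its failure implies $\CCSP(\Gm)$ is NP-complete. For the algebraic equivalence, one direction is essentially classical: a majority polymorphism forces every relation in $\dang\Gm$ to be 2-decomposable (Baker--Pixley), a Mal'tsev polymorphism forces every binary relation in $\dang\Gm$ to be rectangular (rectangularity is exactly the condition that a binary relation splits as a disjoint union of boxes, which is what a Mal'tsev operation can preserve), and conservativity of the Mal'tsev operation is what prevents two pp-definable equivalence relations from crossing — a crossing pair would yield, inside $\dang\Gm$, a relation on which no conservative Mal'tsev operation can act consistently. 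For the converse one argues locally-to-globally: 2-decomposability of $\dang\Gm$ means it suffices to define the two operations so that they preserve all \emph{binary} members of $\dang\Gm$, and there rectangularity together with the non-crossing property provides enough alignment among the box-partitions to patch together a majority operation and a conservative Mal'tsev operation.

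For the hardness direction I would first record that primitive positive definitions give polynomial-time reductions among $\CCSP$ problems; the one subtlety is that existentially quantified variables are not counted by the cardinality constraint, which is dealt with by padding an instance with fresh free gadget variables that can absorb the slack, so that $\CCSP(\dang\Gm)$ and $\CCSP(\Gm)$ are polynomially equivalent. It then suffices to show that each of the three ways non-crossing decomposability fails yields an NP-hard problem. If some binary $\rel\in\dang\Gm$ is not rectangular, a witnessing quadruple $(a,c),(a,d),(b,d)\in\rel$ with $(b,c)\notin\rel$ lets one simulate a disequality-type constraint and, using the cardinality budget, encode an NP-hard exact-assignment problem (over a two-element domain this is precisely the non-width-2-affine regime of Creignou et al.). A crossing pair of equivalences is reduced to, or handled by a direct gadget resembling, this case. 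Finally, if every binary member of $\dang\Gm$ is rectangular and no pair of equivalences crosses, yet some $R\in\dang\Gm$ of arity $k\ge 3$ is not 2-decomposable, then $R$ genuinely links its $k$ coordinates and one extracts from it a relation into which a known NP-hard $\CCSP$ (a not-all-equal or hypergraph-coloring variant) embeds.

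The substantial part is the tractability direction: turning the intuition of Example~\ref{exa:subcomponent} into a complete algorithm. The key structural payoff of non-crossing decomposability is that the equivalence relations available in $\dang\Gm$, restricted to a connected component of the constraint graph, form a \emph{laminar} family — any two are comparable or independent on each block, precisely because they do not cross — which yields a hierarchical decomposition of each component into sub-components, sub-sub-components, and so on, of depth at most $|D|$. Rectangularity ensures that at the finest level the binary constraints behave as mappings, so each smallest piece has at most $|D|$ admissible assignments, and 2-decomposability ensures that nothing is lost by working only with binary projections of the constraints. One then computes, bottom-up along the hierarchy and then component-by-component, the set of all achievable cardinality vectors, combining children by a convolution over the at most $|D|$ choices at each node; since all entries are bounded by $|V|$, this runs in polynomial time. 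I expect this to be the main obstacle: proving that the hierarchy has polynomially bounded size and depth, that the combine step is correct across the differing sub-component structures induced by different coordinates of the same relation (the subtlety flagged at the end of Example~\ref{exa:subcomponent}), and that the bookkeeping of which values survive at each level stays consistent with the global cardinality constraint — this is where the technical weight of the argument concentrates.
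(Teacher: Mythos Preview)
Your overall plan matches the paper's architecture: establish the algebraic equivalence (Section~\ref{sec:equiv-char}), run a hierarchical dynamic-programming algorithm that alternates between splitting disconnected components and partitioning a connected component by a maximal non-trivial pp-definable equivalence, with recursion depth at most $|D|$ (Section~\ref{sec:algorithm}), and prove NP-hardness by case analysis on which clause of non-crossing decomposability fails (Section~\ref{sec:hardness}), after first showing that pp-definable relations can be added via a padding construction (Theorem~\ref{the:decision-pp-defin}). Your ``laminar family'' picture is exactly the content of Lemmas~\ref{lem:variable-non-trivial} and~\ref{lem:partition}.

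There is, however, one genuine gap in your hardness outline. You record that pp-definable relations can be added, but the paper's hardness arguments also need the \emph{constant} relations $C_a$ freely available: they are used throughout Lemmas~\ref{lem:non-thick-mapping} and~\ref{lem:non-2-decomposable} to slice out subrelations such as $\{x\mid (a,x)\in R\}$ and to pin coordinates when cutting a non-2-decomposable relation down to a ternary one. Constants are not in general pp-definable from $\Gm$, and the reduction $\CCSP(\Gm\cup\{C_a\mid a\in D\})\le_p\CCSP(\Gm)$ (Theorem~\ref{the:adding-constants}) is itself non-trivial: it builds a multi-valued-morphism gadget of size roughly $n^{|D|}$ and uses an iterative cardinality-rebalancing argument. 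Without this step your hardness plan does not go through. Two smaller corrections: the non-rectangular binary case is handled by a reduction from \textsc{Bipartite Independent Set} with a lengthy case analysis over how $a,b,c,d$ may coincide, not by a single disequality gadget; and the non-2-decomposable case is ultimately reduced to Boolean affine $x+y+z=c$ (hard by Creignou et al.), not to NAE or hypergraph colouring.
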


The equivalence of the two characterizations will be proved in Section~\ref{sec:equiv-char}.

Following \cite{Creignou08:cardinality}, we also study the counting problem
$\NCCSP(\Gm)$ corresponding to $\CCSP(\Gm)$, in which the objective is to find the
number of solutions of a CSP instance that satisfy a global
cardinality constraint specified. 
Creignou et al.~\cite{Creignou08:cardinality} proved that if $\Gm$ is
a constraint language on a 2-element set, say, $\{0,1\}$, then
$\NCCSP(\Gm)$ are solvable in polynomial time exactly when $\CCSP(\Gm)$ is, that is, if
every relation from $\Gm$ is width-2-affine. Otherwise it is $\textup{P}^{\#\textup{P}}$-complete.

We prove that in the general case as well, $\NCCSP(\Gm)$ is polynomial time solvable if and only if $\CCSP(\Gm)$ is. However, in this paper we do not prove a
complexity dichotomy, as we do not determine the exact complexity of
the hard counting problems. All such problems are NP-hard as Theorems~\ref{the:main-poly} and~\ref{the:main-poly2} show; and we do not claim that the NP-hard cases are actually
$\textup{P}^{\#\textup{P}}$-hard.

\begin{theorem}\label{the:counting-main-poly}
For a constraint language $\Gm$, the problem $\NCCSP(\Gm)$ is polynomial time solvable if and only if $\Gm$ has a majority polymorphism and a conservative Mal'tsev polymorphism; or, equivalently, if and only if $\Gm$ is non-crossing decomposable.
Otherwise it is NP-hard.
\end{theorem}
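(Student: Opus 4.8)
The plan is to deduce Theorem~\ref{the:counting-main-poly} from Theorems~\ref{the:main-poly} and~\ref{the:main-poly2}: the hard direction is immediate, and the tractable direction is obtained by a ``counting version'' of the polynomial-time algorithm underlying the tractability part of Theorem~\ref{the:main-poly}. The ``equivalently'' clause in the statement needs no separate argument, since the equivalence of the two conditions on $\Gm$ is already established in the proof of the equivalence of Theorems~\ref{the:main-poly} and~\ref{the:main-poly2}.

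For the hardness direction, assume $\Gm$ does not have both a majority polymorphism and a conservative Mal'tsev polymorphism. By Theorem~\ref{the:main-poly}, $\CCSP(\Gm)$ is NP-complete. An oracle for $\NCCSP(\Gm)$ decides $\CCSP(\Gm)$, since a CSP instance has a solution satisfying a global cardinality constraint $\pi$ if and only if the number of such solutions is nonzero; hence $\NCCSP(\Gm)$ is NP-hard. (This matches the fact that we do not claim $\textup{P}^{\#\textup{P}}$-hardness.)

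For the tractable direction, assume $\Gm$ is non-crossing decomposable. I would revisit the decision algorithm and observe that it has the following shape. First, each constraint is replaced by the conjunction of its binary projections; this is legitimate because every relation in $\dang\Gm$ is $2$-decomposable, and it does not change the \emph{set} of solutions, hence preserves the number of solutions. The resulting instance is then split into connected components, and each component is recursively decomposed along rectangular binary relations and non-crossing equivalence relations, the partial information being combined by two kinds of step: a \emph{disjunction} step, where the partial solutions of a piece are split into finitely many classes according to which block of a rectangular relation, or which equivalence class, the variables lie in; and a \emph{product} step, where a piece breaks into sub-pieces on disjoint variable sets whose assignments can be chosen independently. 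At the leaves the pieces are small enough to enumerate directly. The key point is that each step is ``counting-friendly''. Rather than propagating, for each piece $K$, the set of realizable cardinality vectors $\mathbf m\colon D\to\nat$, the counting algorithm propagates the function $N_K$ sending each $\mathbf m$ to the number of partial solutions on $K$ with cardinality vector $\mathbf m$. There are at most $(n+1)^{|D|}$ cardinality vectors, with $n$ the number of variables, so $N_K$ has polynomial size, and all the integers involved have polynomially many bits (they are bounded by $|D|^n$). In a disjunction step $N_K$ is the pointwise sum of the functions of the contributing classes; in a product step $N_K$ is the convolution $(N_{K_1}*\dots*N_{K_t})(\mathbf m)=\sum_{\mathbf m_1+\dots+\mathbf m_t=\mathbf m}\prod_i N_{K_i}(\mathbf m_i)$. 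Both operations, applied to polynomially many polynomially-sized nonnegative integer functions over the set of cardinality vectors, are computable in polynomial time. Convolving the functions of the connected components and evaluating at $\pi$ yields the number of solutions satisfying $\pi$.

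I expect the main obstacle to be the disjunction step: one must verify, at every point where the decision algorithm branches, that the branches form a genuine \emph{partition} of the relevant set of partial solutions, not merely a cover, so that adding the counts does not double count. This is exactly where rectangularity (a rectangular relation being a disjoint union of direct products) and the absence of crossing equivalence relations must be used quantitatively rather than merely for the existence of a decomposition, and it requires going carefully through the recursion in the decision algorithm. By comparison, the product step is routine once the decomposition is shown to be into pieces with pairwise disjoint variable sets, and the polynomial running-time bound is straightforward given the bound on the number of cardinality vectors.
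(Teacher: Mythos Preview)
Your proposal is correct and follows essentially the same approach as the paper: the hardness direction is the obvious reduction from $\CCSP(\Gm)$, and the tractable direction upgrades the decision algorithm to a counting algorithm by replacing the set $\Pi$ of realizable cardinality vectors with the function $\vr$ giving the number of solutions for each vector, with disjunction steps becoming pointwise sums and product steps becoming convolutions. Your anticipated obstacle is exactly the right one, and it is handled by Lemma~\ref{lem:partition}: when $G(\cP)$ is connected, the $\eta_{v_0}$-classes induce a genuine partition of the solution set, so the branches in Step~4 are disjoint and no double counting occurs.
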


We also consider the so called \emph{meta-problem} for $\CCSP(\Gm)$ and $\NCCSP(\Gm)$:
Suppose set $D$ is fixed. Given a finite constraint language $\Gm$ on
$D$, decide whether or not $\CCSP(\Gm)$ (and $\NCCSP(\Gm)$) is solvable in polynomial
time. By Theorems~\ref{the:main-poly} and~\ref{the:counting-main-poly} it suffices to check if $\Gm$ has a majority and a conservative Mal'tsev polymorphism. Since the set $D$ is fixed, this can be done by checking, for each possible ternary function with the required properties, whether or not it is a polymorphism of $\Gm$.  To check if a ternary operation $f$ is a polymorphism of $\Gm$ one just needs for each relation $\rel\in\Gm$
to apply $f$ to every triple of tuples in $\rel$. This can be done in a time cubic in the total size of
relations in $\Gm$.

\begin{theorem}\label{the:meta-problem}
Let $D$ be a finite set. The meta-problem for $\CCSP(\Gm)$ and $\NCCSP(\Gm)$ is polynomial time solvable.
\end{theorem}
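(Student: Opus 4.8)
The plan is to obtain the result immediately from Theorems~\ref{the:main-poly} and~\ref{the:counting-main-poly}. Those theorems reduce the tractability of both $\CCSP(\Gm)$ and $\NCCSP(\Gm)$ to the purely algebraic question of whether $\Gm$ has a majority polymorphism and a conservative Mal'tsev polymorphism, with all remaining cases being NP-complete, respectively NP-hard. So the meta-problem becomes: given a finite constraint language $\Gm$ on the fixed set $D$, decide whether $\Gm$ admits a majority polymorphism and whether it admits a conservative Mal'tsev polymorphism.

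First I would use that, since $D$ is fixed, the collection of all ternary operations on $D$ has constant size $|D|^{|D|^{3}}$, independent of the input. I would enumerate this collection and, for each ternary operation $f$, test whether $f$ satisfies the majority identities $f(x,x,y)=f(x,y,x)=f(y,x,x)=x$, respectively the Mal'tsev identities $f(x,y,y)=f(y,y,x)=x$ together with conservativity $f(x,y,z)\in\{x,y,z\}$. Each such test inspects $f$ on only a fixed finite set of triples from $D$, so it runs in constant time; the whole enumeration therefore produces, at constant cost, the (constant-size) list of all majority operations on $D$ and the list of all conservative Mal'tsev operations on $D$.

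Next I would test each operation $f$ on these two lists for being a polymorphism of $\Gm$, that is, whether $f$ preserves every $\rel\in\Gm$. For a relation $\rel$ of arity $k$ this amounts to checking, for each of the $|\rel|^{3}$ triples of tuples of $\rel$, that applying $f$ coordinatewise yields a tuple again in $\rel$; each check costs $O(k)$ plus a membership test in $\rel$. Summing over all $\rel\in\Gm$ and over the constantly many candidate operations gives a running time polynomial (in fact cubic) in the total size of the representation of $\Gm$. The algorithm then reports that $\CCSP(\Gm)$ and $\NCCSP(\Gm)$ are polynomial time solvable exactly when at least one majority operation and at least one conservative Mal'tsev operation on $D$ has survived as a polymorphism of $\Gm$, and otherwise reports NP-completeness, respectively NP-hardness; correctness is exactly the content of the two cited theorems.

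I do not expect a genuine obstacle here. The only point that needs to be stated carefully is that fixing $D$ makes the search space of candidate polymorphisms of constant size, so that the enumeration over ternary operations contributes only a constant factor; everything else is a routine syntactic verification of identities and of closure of relations under an operation.
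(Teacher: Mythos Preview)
Your proposal is correct and is essentially identical to the paper's own argument: reduce via Theorems~\ref{the:main-poly} and~\ref{the:counting-main-poly} to testing for a majority and a conservative Mal'tsev polymorphism, enumerate the constantly many ternary operations on the fixed set $D$, and verify the polymorphism property in time cubic in the total size of $\Gm$.
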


Note that all the results use the assumption that the set $D$ is fixed 
(although the complexity of algorithms does not depend on a particular
constraint language). Without this assumption the algorithms given in 
the paper become exponential time, and Theorem~\ref{the:meta-problem} 
does not answer if the meta
problem is polynomial time solvable if the set $D$ is not fixed, and
is a part of the input. The algorithm sketched above is then super-exponential.

\section{Preliminaries}\label{sec:preliminaries}

\subsection*{Relations and constraint languages}

Our notation concerning tuples and relations is fairly standard.
The
set of all tuples of elements from a set $D$ is denoted by $D^n$. We
denote tuples in boldface, e.g., $\ba$, and their components by
$\ba[1],\ba[2],\ldots$. An $n$-ary \emph{relation} on set $D$ is any
subset of $D^n$. Sometimes we use instead of relation
$\rel$ the corresponding predicate $\rel(\vc xn)$. A set of relations on $D$ is called a \emph{constraint language} over $D$. 

For a subset $I=\{\vc ik\}\sse\{1\zd n\}$ with
$i_1<\ldots<i_k$ and an $n$-tuple $\ba$, by $\pr_I\ba$ we denote the
\emph{projection of} $\ba$ \emph{onto} $I$, the $k$-tuple
$(\ba[i_1]\zd\ba[i_k])$. The \emph{projection} $\pr_I\rel$ of $\rel$
is the $k$-ary relation $\{\pr_I\ba\mid\ba\in\rel\}$. Sometimes we need to emphasize that the unary projections
$\pr_1\rel$, $\pr_2\rel$ of a binary relation $\rel$ are 
sets $A$ and $B$. We denote this by $\rel\sse A\tm B$.

Pairs from equivalence relations play a special role, so such pairs
will be denoted by, e.g., $\ang{a,b}$. If $\al$ is an equivalence
relation on a set $D$ then $D\fac\al$ denotes the set of
$\al$-classes, and $a^\al$ for $a\in D$ denotes the $\al$-class
containing $a$. We say that the equivalence relation $\al$ on a set
$D$ is {\em trivial} if $D$ is the only $\al$ class.

\subsection*{Constraint Satisfaction Problem with cardinality constraints}

Let $D$ be a finite set (throughout the paper we assume it fixed) 
and $\Gm$ a constraint language over $D$. An
instance of the \emph{Constraint Satisfaction Problem} (CSP for short)
$\CSP(\Gm)$ is a pair $\cP=(V,\Cc)$, where $V$ is a finite set of
\emph{variables} and $\Cc$ is a set of \emph{constraints}. Every
constraint is a pair $C=\ang{\bs,\rel}$ consisting of an $n_C$-tuple $\bs$ of
variables, called the \emph{constraint scope} and an $n_C$-ary
relation $\rel\in\Gm$, called the \emph{constraint relation}. A
solution of $\cP$ is a mapping $\vf\colon V\to D$ such that for every
constraint $C=\ang{\bs,\rel}$ the tuple $\vf(\bs)$ belongs to $\rel$. 

A \emph{global cardinality constraint} for a CSP instance $\cP$ is a mapping
$\pi:D\to\nat$ with $\sum_{a\in D}\pi(a)=|V|$. A solution $\vf$ of
$\cP$ satisfies the cardinality 
constraint $\pi$ if the number of variables mapped to each $a\in D$ equals
$\pi(a)$. The variant of $\CSP(\Gm)$ allowing global
cardinality constraints will be denoted by
$\CCSP(\Gm)$; the question 
is, given an instance $\cP$ and a cardinality constraint $\pi$,
whether there is a solution of $\cP$ satisfying $\pi$. 

\begin{example}\label{exa:k-ones}
If $\Gm$ is a constraint language on the 2-element set $\{0,1\}$ then
to specify a global cardinality constraint it suffices to specify the
number of ones we want to have in a solution. This problem is also
known as the {\sc $k$-Ones}$(\Gm)$ problem
\cite{Creignou08:cardinality}. 
\end{example}

\begin{example}\label{exa:coloring}
  Let $\Gm_\textsc{3-Col}$ be the constraint language on $D=\{0,1,2\}$
  containing only the binary disequality relation $\ne$. It is
  straightforward that $\CSP(\Gm_\textsc{3-Col})$ is equivalent to the
  {\sc Graph 3-Colorability} problem. Therefore
  $\CCSP(\Gm_\textsc{3-Col})$ is equivalent to the {\sc Graph
    3-Colorability} problem in which the question is whether there is
  a coloring with a prescribed number of vertices colored each color.
\end{example}

Sometimes it is convenient to use arithmetic operations on cardinality
constraints. Let $\pi,\pi':D\to\nat$ be cardinality constraints on a
set $D$, and $c\in\nat$. Then $\pi+\pi'$ and $c\pi$ denote cardinality
constraints given by $(\pi+\pi')(a)=\pi(a)+\pi'(a)$ and
$(c\pi)(a)=c\cdot\pi(a)$, respectively, for any $a\in D$. Furthermore,
we extend addition to sets $\Pi$, $\Pi'$ of cardinality vectors in a
convolution sense: $\Pi+\Pi'$ is defined as $\{\pi+\pi'\mid \pi\in
\Pi, \pi'\in \Pi'\}$.

It is possible to consider an even more general CSP with global
cardinality constraints, in which every instance of $\CSP(\Gm)$ is
accompanied with a set of global cardinality constraints, and the
question is whether or not there exists a solution of the CSP instance
that satisfies one of the cardinality constraints. Sometimes such a
set of cardinality constraints can be represented concisely, for
example, all constraints $\pi$ with $\pi(a)=k$. We denote such
\emph{extended CSP with global cardinality constraints} corresponding
to a constraint language $\Gm$ by $\ECCSP(\Gm)$.  

\begin{example}\label{exa:extended-coloring}
The problem $\ECCSP(\Gm_\textsc{3-Col})$ admits a wide variety of questions, e.g.\, whether a given graph admits a 3-coloring with 25 vertices colored 0, and odd number of vertices colored 1.
\end{example}

However, in our
setting (as $|D|$ is a fixed constant and we are investigating
polynomial-time solvability) the extended problems are not very
interesting from the complexity point of view. 

\begin{lemma}\label{lem:eccsp-to-ccsp}
For any constraint language $\Gm$ the problem $\ECCSP(\Gm)$ is Turing
reducible to $\CCSP(\Gm)$. 
\end{lemma}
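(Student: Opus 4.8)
The plan is to reduce $\ECCSP(\Gm)$ to polynomially many instances of $\CCSP(\Gm)$, using the fact that $|D|$ is a fixed constant. Recall that an instance of $\ECCSP(\Gm)$ consists of a $\CSP(\Gm)$ instance $\cP=(V,\Cc)$ together with a set $\Pi$ of cardinality constraints, and the question is whether some solution of $\cP$ satisfies some $\pi\in\Pi$. The key observation is that a cardinality constraint $\pi\colon D\to\nat$ with $\sum_{a\in D}\pi(a)=|V|$ is determined by $|D|$ natural numbers, each bounded by $|V|=n$. Hence there are at most $(n+1)^{|D|}$ cardinality constraints for $\cP$ altogether, which is polynomial in $n$ since $|D|$ is fixed.

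First I would handle the case where $\Pi$ is given explicitly as a list: then $\Pi$ has at most $(n+1)^{|D|}$ elements, and we simply call the oracle for $\CCSP(\Gm)$ once on $(\cP,\pi)$ for each $\pi\in\Pi$, accepting iff at least one call accepts. This is clearly a polynomial-time Turing reduction. The slightly more delicate point is that $\ECCSP(\Gm)$ allows $\Pi$ to be represented \emph{concisely} (for instance, ``all $\pi$ with $\pi(a)=k$'', or the example with $25$ vertices colored $0$ and an odd number colored $1$). Here I would argue that regardless of the particular concise encoding, the set of cardinality constraints it describes is a subset of the at most $(n+1)^{|D|}$ possible ones, and membership of a candidate $\pi$ in the described set can be tested efficiently from the representation; so we enumerate all $(n+1)^{|D|}$ candidate vectors $\pi$, discard those not in $\Pi$, and query the $\CCSP(\Gm)$ oracle on the survivors. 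Accept iff some query accepts.

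The main obstacle — really the only thing that needs care — is making precise what counts as an admissible concise representation of $\Pi$, so that the filtering step ``is this $\pi$ in $\Pi$?'' is itself polynomial time; but since the problem is only interesting when such representations are at least decidable efficiently, this is a mild modeling assumption rather than a mathematical difficulty. Everything else is routine: correctness of the reduction is immediate (a solution of $\cP$ satisfies some $\pi\in\Pi$ iff one of the at most $(n+1)^{|D|}$ oracle calls returns yes), and the number of oracle calls, as well as the work between calls, is polynomial in the input size because $|D|$ is a fixed constant. This completes the reduction and hence the proof.
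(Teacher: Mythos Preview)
Your proof is correct and follows essentially the same approach as the paper: since $|D|$ is fixed, there are only polynomially many cardinality constraints on $n$ variables, so one can simply try each candidate $\pi$ via an oracle call to $\CCSP(\Gm)$. Your discussion of concise representations of $\Pi$ is more careful than the paper's one-line argument, but the underlying idea is identical.
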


\begin{proof}
  Since $D$ is fixed, for any instance $\cP$ of $\CSP(\Gm)$ there are
  only polynomially many global cardinality constraints. Thus we can
  try each of the cardinality constraints given in an instance of
  $\ECCSP(\Gm)$ in turn.
\end{proof}
Note that the algorithm in Section~\ref{sec:algorithm} for $\CCSP$
actually finds the set of all feasible cardinality constraints. Thus
$\ECCSP$ can be solved in a more direct way than the reduction in
Lemma~\ref{lem:eccsp-to-ccsp}.

\subsection*{Primitive positive definitions and polymorphisms}

Let $\Gm$ be a constraint language on a set $D$. A relation $\rel$ is
\emph{primitive positive} \emph{(pp-)} \emph{definable} in
$\Gm$ if it can be expressed using (a) relations from $\Gm$, (b)
conjunction, (c) existential quantifiers, and (d) the binary equality
relations (see, e.g.\ \cite{Denecke-Wismath02}). The set of all relations pp-definable in $\Gm$ will be
denoted by $\dang\Gm$. 

\begin{example}\label{exa:product}
An important example of pp-definitions that will be used throughout
the paper is the \emph{product} of binary relations. Let $\rel,\relo$
be binary relations. Then $\rel\circ\relo$ is the binary relation
given by 
$$
(\rel\circ\relo)(x,y)=\exists z\rel(x,z)\wedge\relo(z,y).
$$
\end{example}

In this paper we will need a slightly weaker notion of
definability. We say that $\rel$ is \emph{pp-definable} in $\Gm$
\emph{without equalities} if it can be expressed using only items
(a)--(c) from above. The set of all relations pp-definable in $\Gm$
without equalities will be denoted by $\dang\Gm'$. Clearly,
$\dang\Gm'\sse\dang\Gm$. 
The two sets are different only on
relations with redundancies.
 Let $\rel$ be a (say, $n$-ary) relation. A \emph{redundancy} of
 $\rel$ is a pair $i,j$ of its coordinate positions such that, for any
 $\ba\in\rel$, $\ba[i]=\ba[j]$.  
 
 \begin{example}\label{exa:redundancy}
 In some cases if a relation $\rel$ has redundancies, the equality relation is pp-definable in $\{\rel\}$ without equalities. Let $\rel$ be a ternary relation on $D=\{0,1,2\}$ given by (triples, members of the relation, are written vertically)
 $$
 \rel=\left(\begin{array}{cccccc}
 0&0&1&1&2&2 \\ 0&0&1&1&2&2 \\ 1&2&0&2&0&1
 \end{array}\right).
 $$
Then the equality relation is expressed by $\exists z \rel(x,y,z)$.

In other cases the equality relation cannot be expressed that easily, but its restriction onto a subset of $D$ can. Let $\relo$ be a 4-ary on $D=\{0,1,2\}$ given by
$$
 \rel=\left(\begin{array}{cccccc}
 0&0&0&2&2&2 \\ 0&0&0&2&2&2 \\ 1&2&0&2&0&1 \\ 0&0&1&1&2&2
 \end{array}\right).
 $$
 Then the formula $\exists z,t \rel(x,y,z,t)$ defines the equality relation on $\{0,2\}$.
 \end{example}

\begin{lemma}\label{lem:redundancy}
For every constraint language $\Gm$, every  $\rel\in\dang\Gm$ without
redundancies belongs to $\dang\Gm'$. 
\end{lemma}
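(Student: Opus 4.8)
The plan is to take a pp-definition of $\rel$ in $\Gm$ that uses equality atoms and eliminate all of them, arriving at an equality-free pp-definition. So suppose $\rel(\vc xn)$ is defined by a formula $\exists \vc y m\,\Phi$, where $\Phi$ is a conjunction of atoms, each atom being either $\relo(\ldots)$ for some $\relo\in\Gm$ applied to variables among the $x_i$ and $y_j$, or an equality atom $u = v$ for two such variables. First I would handle equality atoms involving an existentially quantified variable: if $y_j = u$ is a conjunct (with $u$ any variable), then every occurrence of $y_j$ elsewhere can be replaced by $u$, the quantifier $\exists y_j$ dropped, and the atom $y_j = u$ deleted; this does not change the defined relation. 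Iterating, we may assume every surviving equality atom is of the form $x_i = x_{i'}$ for two \emph{free} variables. But then, for every $\ba\in\rel$ we have $\ba[i] = \ba[i']$, i.e.\ $(i,i')$ is a redundancy of $\rel$. Since by hypothesis $\rel$ has no redundancies, there are no such atoms, and $\Phi$ now uses only atoms of type (a); hence $\rel\in\dang\Gm'$.

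One subtlety to address carefully: the variable-substitution step must not accidentally merge two distinct free variables. If a conjunct $x_i = y_j$ appears, substituting $x_i$ for $y_j$ is fine and keeps the formula equality-free in that atom; if a conjunct $x_i = x_{i'}$ appears with $i\ne i'$, we do \emph{not} substitute — instead we invoke no-redundancy to rule it out as above. The case $u = u$ (a trivial atom) is simply deleted. After processing, every atom $y_j = y_{j'}$ between two quantified variables is handled by substitution as well (replace $y_{j'}$ by $y_j$, drop $\exists y_{j'}$). Thus the only potentially obstructing atoms are the free-free ones, which is exactly where the redundancy hypothesis is used. I should also note that the empty relation and nullary corner cases are vacuous, and that if $n=0$ or $n=1$ there is nothing to do.

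The main obstacle is really a bookkeeping one rather than a conceptual one: one must argue that the iterative substitution terminates (it does — each step removes one quantified variable) and that it preserves the defined relation at each step (a standard fact about substituting equals for equals under existential quantification). There is no deep combinatorics here; the content is entirely the observation that a leftover equality atom between free variables would force a redundancy. So the proof is short, and the write-up mostly consists in stating the substitution procedure precisely and pointing to where the hypothesis is consumed.
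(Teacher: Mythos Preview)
Your proposal is correct and follows essentially the same approach as the paper: eliminate each equality atom involving a bound variable by substitution, and observe that any remaining equality between two free variables would force a redundancy. The paper's write-up is terser (it phrases termination as ``decreases the number of equalities used'' rather than ``removes one quantified variable''), but the argument is identical.
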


\begin{proof}
  Consider a pp-definition of $\rel$ in $\Gm$. Suppose that the
  definition contains an equality relation on the variables $x$ and
  $y$. If none of $x$ and $y$ is bound by an existential quantifier,
  then the relation $\rel$ has two coordinates that are always equal,
  i.e., $\rel$ is redundant. Thus one of the variables, say $x$, is
  bound by an existential quantifier. In this case, replacing $x$ with
  $y$ everywhere in the definition defines the same relation $\rel$
  and decreases the number of equalities used. Repeating this step, we
  can arrive to an equality-free definition of $\rel$.
\end{proof}

A \emph{polymorphism} of a (say, $n$-ary) relation $\rel$ on $D$ is a
mapping $f:D^k\to D$ for some $k$ such that for any tuples $\vc\ba
k\in\rel$ the tuple  
\[
f(\vc\ba k)=(f(\ba_1[1]\zd\ba_k[1])\zd f(\ba_1[n]\zd\ba_k[n]))
\]
belongs to $\rel$. Operation $f$ is a polymorphism of a constraint
language $\Gm$ if it is a polymorphism of every relation from
$\Gm$. There is a tight connection, a \emph{Galois correspondence}, between
polymorphisms of a constraint language and relations pp-definable in
the language, see \cite{Geiger68:closed,Bodnarchuk69:Galua1}. This
connection has been extensively exploited to study the ordinary
constraint satisfaction problems
\cite{Jeavons97:closure,Bulatov05:classifying}. Here we do
not need the full power of this Galois correspondence, we only need
the following folklore result:
 
\begin{lemma}\label{lem:Galois}
If operation $f$ is a polymorphism of a constraint language $\Gm$,
then it is also a polymorphism of any relation from $\dang\Gm$, and
therefore of any relation from $\dang\Gm'$. 
\end{lemma}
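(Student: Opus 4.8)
The plan is to prove Lemma~\ref{lem:Galois} by a direct verification argument that mirrors the structure of a pp-definition. Recall that a relation $\rel\in\dang\Gm$ is built from relations in $\Gm$ (and equalities) by conjunction and existential quantification. The strategy is to show that the class of relations preserved by a fixed operation $f$ is closed under exactly these operations, so that starting from $\Gm$ (all of whose relations are preserved by $f$ by hypothesis) and applying the construction rules keeps us inside this class. Since $\dang\Gm'\sse\dang\Gm$, the statement for $\dang\Gm'$ follows immediately once it is established for $\dang\Gm$.

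Concretely, I would proceed in three steps. First, the base case: every relation in $\Gm$ is preserved by $f$ by assumption, and the equality relation on $D$ is preserved by any operation whatsoever (applying $f$ componentwise to $k$ copies of a ``diagonal'' pair $(a,a)$ yields $(f(a\zd a),f(a\zd a))$, again a diagonal pair). Second, closure under conjunction: suppose $\rel_1,\rel_2$ are preserved by $f$ and $\rel(\ox)=\rel_1(\ox)\wedge\rel_2(\ox)$ after suitable identification of variables; given tuples $\vc\ba k\in\rel$, each $\ba_i$ lies in both $\rel_1$ and $\rel_2$ (under the appropriate projections), so $f(\vc\ba k)$ lies in both, hence in $\rel$. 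The same argument handles conjunctions with variable renaming and repetition, since those are just relational substitutions that commute with componentwise application of $f$. Third, closure under existential quantification: suppose $\rel(\ox)=\exists y\,\relo(\ox,y)$ with $\relo$ preserved by $f$. Given $\vc\ba k\in\rel$, pick witnesses $c_1\zd c_k\in D$ with $(\ba_i,c_i)\in\relo$ for each $i$; then $f$ applied componentwise to the tuples $(\ba_i,c_i)$ gives $(f(\vc\ba k),f(\vc ck))\in\relo$, and projecting out the last coordinate shows $f(\vc\ba k)\in\rel$, with witness $f(\vc ck)$.

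Assembling these, an induction on the structure of the pp-formula defining $\rel$ shows that any $\rel\in\dang\Gm$ is preserved by $f$; the case of $\dang\Gm'$ is the special case where no equality atoms appear. I do not expect any genuine obstacle here: the result is folklore precisely because each closure step is routine. The only point requiring mild care is bookkeeping with variable positions --- making sure that the componentwise application of $f$ respects the matching of coordinates across the conjuncts and across the existentially quantified witnesses --- but this is purely notational and does not affect the substance of the argument.
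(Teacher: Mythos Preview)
Your proposal is correct and follows the standard structural-induction argument for this folklore result. The paper itself does not supply a proof of Lemma~\ref{lem:Galois}; it simply states it as a well-known fact and moves on, so there is no ``paper's own proof'' to compare against.
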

For a (say, $n$-ary) relation $\rel$ over a set $D$ and a subset
$D'\sse D$, by $\rel_{|D'}$ we denote the relation $\{(\vc an)\mid
(\vc an)\in\rel \text{ and } \vc an\in D'\}$. For a constraint
language $\Gm$ over $D$ we use $\Gm_{|D'}$ to denote the constraint
language $\{\rel_{|D'}\mid \rel\in\Gm\}$.

If $f$ is a polymorphism of a constraint language $\Gm$ over $D$ and
$D'\subset D$, then $f$ is not necessarily a polymorphism of
$\Gm_{|D'}$. However, it remains a polymorphism in the following
special case.  A $k$-ary polymorphism $f$ is {\em conservative,} if
$f(a_1,\dots, a_k)\in \{a_1,\dots,a_k\}$ for every $a_1,\dots,a_k\in
D$. It is easy to see that if $f$ is a conservative polymorphism of
$\Gm$, then $f$ is a (conservative) polymorphism of $\Gm_{|D'}$
for every $D'\subseteq D$.

Polymorphisms help to express many useful properties of relations. 
A (say, $n$-ary) relation $\rel$ is said to be \emph{2-decomposable}
if $\ba\in\rel$ if and only if, for any $i,j\in\{1\zd n\}$,
$\pr_{i,j}\ba\in\pr_{i,j}\rel$, see
\cite{Baker75:chinese-remainder,Jeavons98:consist}.  Decomposability
sometimes is a consequence of the existence of certain polymorphisms.
A ternary operation $m$ on a set $D$ is said to be a \emph{majority
  operation} if it satisfies equations $m(x,x,y)=m(x,y,x)=m(y,x,x)=x$
for all $x,y\in D$. By \cite{Baker75:chinese-remainder} if a majority
operation $m$ is a polymorphism of a constraint language $\Gm$ then
$\Gm$ is 2-decomposable. The converse is not true: there are
2-decomposable relations not having a majority polymorphism.
Furthermore, 2-decomposability is not preserved by pp-definitions,
thus we cannot expect to characterize it by polymorphisms.

\begin{example}\label{exa:2decomp}
  Consider the disequality relation $\neq$ over the set $D=\{1,2,3\}$.
  Relation $\neq$ is trivially 2-decomposable, since it is binary. Let
  $R(x,y,z)=\exists q ((x\neq q)\wedge (y\neq q)\wedge (z\neq q))$.
  The binary projections of $R$ are $D\times D$, but $R$ is not
  $D\times D\times D$: it does not allow that $x$, $y$, $z$ are all
  different.
\end{example}

A binary relation $\rel$ is said to be \emph{rectangular} if for any
$(a,c),(a,d),(b,d)\in\rel$, the pair $(b,c)$ also belongs to $\rel$.
Rectangular relations and their generalizations play a very important
role in the study of CSP
\cite{Bulatov06:simple,Bulatov07:towards,Idziak07:tractability}.
A useful way to think about binary rectangular relations is to represent 
them as \emph{thick
mappings}.  A binary relation $\rel\sse A\tm B$ is called a thick
mapping if there are equivalence relations $\al$ and $\beta$ on $A$
and $B$, respectively, and a one-to-one mapping $\vf\colon A\fac\al\to
B\fac\beta$ (thus, in particular, $|A\fac\al|=|B\fac\beta|$) such that
$(a,b)\in\rel$ if and only if $b^\beta=\vf(a^\al)$. In this case we
shall also say that $\rel$ is a \emph{thick mapping with respect to} $\al$,
\label{page:with-respect}
$\beta$, and $\vf$. Given a thick mapping $\rel$ the corresponding
equivalence relations will be denoted by $\al_\rel^1$ and
$\al_\rel^2$. Observe that $\al_\rel^1=\rel\circ\rel^{-1}$ and
 $\al_\rel^2=\rel^{-1}\circ\rel$; therefore $\al_\rel^1,\al_\rel^2\in\dang{\{\rel\}}$. 
Thick mapping $\rel$ is said to be \emph{trivial} if
both $\al_\rel^1$ and $\al_\rel^2$ are the total equivalence relations
$(\pr_1\rel)^2$ and $(\pr_2\rel)^2$. In a graph-theoretical point of
view, a thick mapping defines a bipartite graph where every connected
component is a complete bipartite graph. 

As with decomposability,
rectangularity follows from the existence of a certain polymorphism. A
ternary operation $h$ is said to be \emph{Mal'tsev} if
$h(x,x,y)=h(y,x,x)=y$ for all $x,y\in D$. The first part of following lemma is straightforward, while the second part is folklore

\begin{lemma}\label{lem:thick-mapping}
(1) Binary relation $\rel$ is a thick mapping if and only if it is rectangular.\\[1mm]
(2) If a binary relation $\rel$ has a Mal'tsev polymorphism then it is rectangular.
\end{lemma}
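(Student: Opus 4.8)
The plan is to prove the two parts of Lemma~\ref{lem:thick-mapping} separately, both by direct verification.

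For part (1), the easy direction is that every thick mapping is rectangular: if $\rel$ is a thick mapping with respect to $\al$, $\beta$, $\vf$ and $(a,c),(a,d),(b,d)\in\rel$, then $c^\beta=\vf(a^\al)=d^\beta$ and $d^\beta=\vf(b^\al)$, so $\vf(b^\al)=c^\beta$, which gives $(b,c)\in\rel$. For the converse, suppose $\rel\sse A\tm B$ is rectangular, where $A=\pr_1\rel$ and $B=\pr_2\rel$. I would define $\al_\rel^1=\rel\cc\rel^{-1}$ on $A$ and $\al_\rel^2=\rel^{-1}\cc\rel$ on $B$, and check that rectangularity makes these equivalence relations. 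Reflexivity on $A$ (resp.\ $B$) holds because every element of $A$ (resp.\ $B$) occurs in some pair of $\rel$; symmetry is immediate from the definition; transitivity is exactly where rectangularity is used: if $(a,b),(b,c)$ witness $a\mathrel{\al_\rel^1}b$ and $b\mathrel{\al_\rel^1}c$ through common second coordinates $d$ and $e$ respectively, one unwinds the definitions to a configuration of four pairs in $\rel$ to which rectangularity applies, yielding $a\mathrel{\al_\rel^1}c$. Then I would define $\vf\colon A\fac{\al_\rel^1}\to B\fac{\al_\rel^2}$ by $\vf(a^{\al_\rel^1})=b^{\al_\rel^2}$ whenever $(a,b)\in\rel$, and verify using rectangularity that this is well defined (independent of the representatives $a$ and $b$), injective, surjective, and that $(a,b)\in\rel$ if and only if $b^{\al_\rel^2}=\vf(a^{\al_\rel^1})$. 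The main content is bookkeeping: each of these checks reduces to applying the rectangularity implication once or twice to an appropriately chosen quadruple of pairs.

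For part (2), suppose $h$ is a Mal'tsev polymorphism of $\rel$ and $(a,c),(a,d),(b,d)\in\rel$. Applying $h$ coordinatewise to the triple of tuples $(b,d),(a,d),(a,c)$ — all of which lie in $\rel$ — yields the tuple $\bigl(h(b,a,a),h(d,d,c)\bigr)=(b,c)$ by the Mal'tsev identities $h(y,x,x)=y$ and $h(x,x,y)=y$, so $(b,c)\in\rel$, as required.

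I do not expect any genuine obstacle here; the one point that needs care is the transitivity of $\al_\rel^1$ and $\al_\rel^2$ and the well-definedness of $\vf$ in part (1), where one must be attentive to choose the quadruple of pairs so that the rectangularity implication fires in the right direction (it is asymmetric in its hypotheses). Since these are folklore facts, the write-up can be kept brief, spelling out the Mal'tsev computation in full and indicating the rectangularity applications in part (1) without belaboring every instance.
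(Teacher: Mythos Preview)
Your proposal is correct. The paper does not actually give a proof of this lemma: it just remarks that part (1) is straightforward and part (2) is folklore, and leaves it at that. Your argument is precisely the standard one that underlies those remarks (including the use of $\al_\rel^1=\rel\cc\rel^{-1}$ and $\al_\rel^2=\rel^{-1}\cc\rel$, which the paper itself introduces just before the lemma), so there is nothing to compare.
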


A ternary operation $h$ satisfying equations $h(x,x,y)=h(x,y,x)=h(y,x,x)=y$
for all $x,y\in D$ is said to be a \emph{minority operation}. Observe that every Mal'tsev operation is minority, but not the other way round.

\subsection*{Consistency}

Let us fix a constraint language $\Gm$ on a set $D$ and let
$\cP=(V,\Cc)$ be an instance of $\CSP(\Gm)$. A \emph{partial solution}
of $\cP$ on a set of variables $W\sse V$ is a mapping $\psi:W\to D$ that
satisfies every constraint
$\ang{W\cap\bs,\pr_{W\cap\bs}\rel}$ where $\ang{\bs,\rel}\in\Cc$. 
Here $W\cap\bs$ denotes the
subtuple of $\bs$ consisting of those entries of $\bs$ that belong to
$W$, and we consider the coordinate positions of $\rel$ indexed by variables from $\bs$. Instance $\cP$ is said to be \emph{$k$-consistent} if for any
$k$-element set $W\sse V$ and any $v\in V\setminus W$ any partial
solution on $W$ can be extended to a partial solution on
$W\cup\{v\}$, see \cite{Jeavons98:consist}. As we only need $k=2$, all further definitions are given
under this assumption. 

Any instance $\cP=(V,\Cc)$ can be transformed to a 2-consistent
instance by means of the standard {\sc 2-Consistency} algorithm. This
algorithm works as follows. First, for each pair $v,w\in V$ it creates
a constraint $\ang{(v,w),\rel_{v,w}}$ where $\rel_{v,w}$ is the binary
relation consisting of all partial solutions $\psi$ on $\{v,w\}$,
i.e.\ $\rel_{v,w}$ includes pairs $(\psi(v),\psi(w))$. These new
constraints are added to $\Cc$, let the resulting instance be denoted
by $\cP'=(V,\Cc')$. Second, for each pair $v,w\in V$, every partial
solution $\psi\in\rel_{v,w}$, and every $u\in V\setminus\{v,w\}$, the
algorithm checks if $\psi$ can be extended to a partial solution of
$\cP'$ on $\{v,w,u\}$. If not, it updates $\cP'$ by removing $\psi$
from $\rel_{v,w}$. The algorithm repeats this step until no more
changes happen.  

\begin{lemma}\label{lem:consistency}
Let $\cP=(V,\Cc)$ be an instance of $\CSP(\Gm)$.\\
(a) The problem obtained from $\cP$ by applying {\sc 2-Consistency} is 2-consistent;\\
(b) On every step of {\sc 2-Consistency} for any pair $v,w\in V$ the
relation $\rel_{v,w}$ belongs to $\dang\Gm'$. 
\end{lemma}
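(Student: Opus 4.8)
The plan is to prove the two parts separately, with part (a) being essentially the standard correctness argument for the {\sc 2-Consistency} algorithm and part (b) being the key observation that the derived binary constraints never leave $\dang\Gm'$.

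For part (a): First I would note that the algorithm terminates, since each iteration of the second phase strictly decreases the total number of pairs stored across all the relations $\rel_{v,w}$, and this quantity is bounded by $|V|^2|D|^2$. Upon termination, by the stopping condition, for every pair $v,w\in V$, every $\psi\in\rel_{v,w}$, and every $u\in V\setminus\{v,w\}$, the partial solution $\psi$ extends to a partial solution on $\{v,w,u\}$. It then remains to check that this ``local'' extension property is exactly 2-consistency as defined in the preliminaries: a partial solution on a $2$-element set $W=\{v,w\}$ is, by construction of $\cP'$ in the first phase and the fact that removed pairs are never reinstated, precisely a member of the current $\rel_{v,w}$; so every such partial solution extends to each $W\cup\{u\}$. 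One small point to address is that we must verify the first-phase constraints $\ang{(v,w),\rel_{v,w}}$ and the original constraints are mutually consistent in the final instance, i.e.\ the final $\rel_{v,w}$ still equals the set of partial solutions on $\{v,w\}$ of the final instance; this follows because we only ever \emph{remove} pairs that provably cannot be part of any partial solution on three variables, hence of any solution.

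For part (b): I would argue by induction on the number of update steps performed. At the start (after the first phase), $\rel_{v,w}$ is the projection onto $\{v,w\}$ of the conjunction of all constraints touching $v$ or $w$, with the variables other than $v,w$ existentially quantified and no equalities introduced (the new constraint scopes use genuine variables of $\cP$, not equalities); hence $\rel_{v,w}\in\dang\Gm'$. For the inductive step, suppose at some step we remove $\psi=(\psi(v),\psi(w))$ from $\rel_{v,w}$ because it does not extend to $\{v,w,u\}$. The new relation is
\[
\rel_{v,w}^{\mathrm{new}}(x,y)=\exists z\,\bigl(\rel_{v,w}(x,y)\wedge\rel_{v,u}(x,z)\wedge\rel_{u,w}(z,y)\bigr),
\]
which is a primitive positive formula, without equalities, over relations that by the induction hypothesis all lie in $\dang\Gm'$; since $\dang\Gm'$ is closed under such pp-definitions without equalities (conjunction and existential quantification of relations already in $\dang\Gm'$ stays in $\dang\Gm'$ by substituting the definitions into one another), we get $\rel_{v,w}^{\mathrm{new}}\in\dang\Gm'$. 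All the other relations $\rel_{v',w'}$ are unchanged, so they remain in $\dang\Gm'$.

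The main obstacle, such as it is, is bookkeeping rather than mathematics: being careful that the intermediate relations are defined over the actual variables of $\cP$ so that no equality constraints sneak in, and being careful in part (a) that the notion of 2-consistency from the preliminaries (partial solutions on $W$ with respect to the \emph{projected} constraints) matches what the algorithm guarantees upon termination. Once the invariant ``$\rel_{v,w}$ is the set of partial solutions on $\{v,w\}$ of the current instance, and lies in $\dang\Gm'$'' is stated cleanly and shown to be preserved, both parts fall out immediately.
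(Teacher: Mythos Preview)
Your approach for (b) matches the paper's: induction on the algorithm's steps, showing each step is an equality-free pp-definition over relations already in $\dang\Gm'$. For (a) the paper simply cites \cite{Cooper89:kconsistency}, whereas you supply a direct termination-and-invariant argument; that is fine.

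Two bookkeeping corrections, neither of which affects your argument. First, after the first phase the relation $\rel_{v,w}$ is, by the paper's definition of partial solution, the \emph{intersection of the individual projections} onto $\{v,w\}$ of the constraints in $\Cc$, not the projection of their conjunction; the latter is in general a smaller relation, since shared auxiliary variables across different constraints get identified. (The paper's proof accordingly describes the initial $\rel_{v,w}$ as an intersection of unary and binary projections of relations from $\Gm$.) Second, your update formula $\exists z\,(\rel_{v,w}(x,y)\wedge\rel_{v,u}(x,z)\wedge\rel_{u,w}(z,y))$ omits the projection onto $\{v,w,u\}$ of any \emph{original} constraint in $\Cc$ whose scope contains all three variables: the algorithm checks extension to a partial solution of the full instance $\cP'$ on $\{v,w,u\}$, not merely compatibility with the three added binary constraints. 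The paper writes the update as $\rel_{v,w}\cap\pr_{v,w}\relo$, where $\relo$ is the set of all partial solutions of $\cP'$ on $\{v,w,u\}$. In both cases the corrected formula is still an equality-free pp-definition over relations in $\dang\Gm'$, so your inductive step goes through unchanged.
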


\begin{proof}
(a) follows from \cite{Cooper89:kconsistency}.

(b) Since after the first phase of the algorithm every relation $\rel_{v,w}$ is an intersection of unary and binary projections of relations from $\Gm$, it belongs to $\dang\Gm'$. Then when considering a pair $v,w\in V$ and $u\in V\setminus\{v,w\}$, the relation $\rel_{v,w}$ is replaced with $\rel_{v,w}\cap\pr_{v,w}\relo$, where $\relo$ is the set of all solution of the current instance on $\{v,w,u\}$. As every relation of the current instance belongs to $\dang\Gm'$, the relation $\relo$ is pp-definable in $\Gm$ without equalities. Thus the updated relation $\rel_{v,w}$ also belongs to $\dang\Gm'$.
\end{proof}
Note that Theorem~\ref{lem:consistency}(b) implies that any
polymorphism of $\Gamma$ is also a polymorphism of every $R_{v,w}$.

If $\Gm$ has a majority polymorphism, by Theorem~3.5 of 
\cite{Jeavons98:consist}, every 2-consistent problem instance is
\emph{globally consistent}, that is every partial solution can be
extended to a global solution of the problem. In particular, if $\cP$ is 2-consistent, then for any $v,w\in V$, any pair $(a,b)\in\rel_{v,w}$ can be extended to a solution of $\cP$. The same is true for any $a\in\pr_1\rel_{v,w}$ and any $b\in\pr_2\rel_{v,w}$.

\section{Equivalence of the characterizations}\label{sec:equiv-char}

In this section, we prove that the two characterizations in
Theorems~\ref{the:main-poly} and \ref{the:main-poly2} are equivalent. 
Recall that two equivalence relations $\al$ and $\beta$ over the same domain \emph{cross}, if there is an $\al$-class $A$ and a $\beta$-class $B$ such that none of $A\setminus B$, $A\cap B$, and $B\setminus A$ is empty. A constraint language $\Gm$ is said to be \emph{non-crossing decomposable} if every relation from $\dang\Gm$ is 2-decomposable, every binary relation from $\dang\Gm$ is a thick mapping, and no pair of equivalence relations from $\dang\Gm$ cross.

One of the directions is easy to see. 

\begin{lemma}\label{lem:equiv0}
If $\Gm$ has a majority
polymorphism $m$ and a conservative Mal'tsev polymorphism $h$, then $\Gm$ is non-crossing decomposable.
\end{lemma}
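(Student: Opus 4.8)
The plan is to verify the three defining conditions of non-crossing decomposability separately, each time exploiting Lemma~\ref{lem:Galois}: every relation in $\dang\Gm$ inherits the polymorphisms $m$ and $h$. First, since $m$ is a majority polymorphism of $\Gm$, it is a majority polymorphism of every $\rel\in\dang\Gm$ by Lemma~\ref{lem:Galois}, and then by the Baker--Pixley result cited after Example~\ref{exa:2decomp} (\cite{Baker75:chinese-remainder}), every such $\rel$ is 2-decomposable. Second, let $\rel\in\dang\Gm$ be binary. Then $h$ is a polymorphism of $\rel$, and $h$ is Mal'tsev (being conservative and Mal'tsev), so by Lemma~\ref{lem:thick-mapping}(2) $\rel$ is rectangular, hence by Lemma~\ref{lem:thick-mapping}(1) a thick mapping. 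These two parts are essentially immediate from results already in the excerpt.

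The substantive part is the no-crossing condition. Suppose for contradiction that $\al,\beta\in\dang\Gm$ are equivalence relations that cross, witnessed by an $\al$-class $A$ and a $\beta$-class $B$ with $A\setminus B$, $A\cap B$, $B\setminus A$ all nonempty; pick $p\in A\cap B$, $q\in A\setminus B$, $r\in B\setminus A$. The idea is to apply the conservative Mal'tsev polymorphism $h$ to suitable triples drawn from $\al$ and $\beta$ to force a contradiction. Consider the pairs $\ang{p,q}\in\al$ (since $p,q\in A$), $\ang{p,p}\in\al$, and $\ang{r,r}\in\al$ — wait, $r\notin A$, so instead use only pairs genuinely in $\al$: $\ang{p,p},\ang{p,q},\ang{q,q}\in\al$, and the analogous pairs in $\beta$ using $p,r\in B$. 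Applying $h$ coordinatewise to the triple of $\al$-pairs $(\ang{q,p},\ang{p,p},\ang{p,q})$ gives $\ang{h(q,p,p),h(p,p,q)}=\ang{q,q}$ — consistent, no contradiction yet. The real leverage comes from combining $\al$ and $\beta$: form pairs that are simultaneously constrained. Since $h$ is \emph{conservative}, $h(x,y,z)\in\{x,y,z\}$ always, which tightly restricts what the Mal'tsev identities can produce; the plan is to build a triple of elements $(p,q,r)$ and argue that whichever of the three values $h$ returns on the relevant application, one of the Mal'tsev identities $h(x,x,y)=h(y,x,x)=y$ forces an element into an $\al$-class or $\beta$-class it cannot belong to. Concretely, I expect to show that $h$ restricted to $\{p,q,r\}$ cannot simultaneously satisfy the Mal'tsev identities and respect both $\al$ and $\beta$: e.g.\ $h(q,p,r)$ must lie in $\{q,p,r\}$, and tracing the constraint that $\al$-related first-and-second arguments force $\al$-related outputs (and similarly for $\beta$) pins $h(q,p,r)$ to a value inconsistent with conservativity.

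The main obstacle is getting the combinatorial bookkeeping of this last step right — choosing exactly the right triples of pairs from $\al$ and from $\beta$ so that the conservativity of $h$ and the two Mal'tsev identities jointly become unsatisfiable. One clean way to organize it: note that a Mal'tsev operation on a two-element set $\{0,1\}$ is forced to be the minority operation $x+y+z \bmod 2$, which is \emph{not} conservative; so if we can project the situation down to two interchangeable "super-elements" — one representing $A\cap B$ and one representing a point outside — the contradiction with conservativity appears. Thus the cleanest route is probably: use rectangularity/thick-mapping structure of $\al$ and $\beta$ (already established) together with 2-decomposability to reduce to a small relation on $\{p,q,r\}$ on which $h$ would have to act as a Mal'tsev operation that is forced off conservativity. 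I would carry out the three parts in the order decomposability, thick-mapping, no-crossing, and spend the bulk of the argument on the third.
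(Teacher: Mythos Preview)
Your treatment of 2-decomposability and rectangularity matches the paper and is fine. The gap is in the no-crossing step: you set up the right triple ($p\in A\cap B$, $q\in A\setminus B$, $r\in B\setminus A$, so $\ang{p,q}\in\al\setminus\beta$ and $\ang{p,r}\in\beta\setminus\al$), and you correctly sense that conservativity plus the Mal'tsev identities should collide, but you never carry out the computation. What you wrote is a plan (``I expect to show\ldots'') rather than a proof. Moreover, your proposed ``clean way'' via projection to a two-element set is based on a false premise: on $\{0,1\}$ \emph{every} ternary operation is conservative (the output is always one of the two available values), so the minority operation $x+y+z\bmod 2$ is certainly conservative, and no contradiction arises that way.

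The paper's execution of the third step is a direct three-line computation that you were circling. In your notation: set $d=h(q,p,r)$, so $d\in\{q,p,r\}$ by conservativity. Apply $h$ columnwise to the triples $(q,q),(p,q),(r,r)\in\al$ to get $\ang{d,h(q,q,r)}=\ang{d,r}\in\al$; apply it to $(q,q),(p,r),(r,r)\in\beta$ to get $\ang{d,h(q,r,r)}=\ang{d,q}\in\beta$. From $\ang{d,r}\in\al$, the choices $d=p$ or $d=q$ both force $\ang{p,r}\in\al$ (directly or via transitivity with $\ang{p,q}\in\al$), so $d=r$. From $\ang{d,q}\in\beta$, the choices $d=p$ or $d=r$ both force $\ang{p,q}\in\beta$, so $d=q$. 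Since $q\ne r$, this is the desired contradiction. Note this uses that $\al$ and $\beta$ have the same domain (so $(r,r)\in\al$ and $(q,q)\in\beta$ hold reflexively); the paper's definition of ``cross'' assumes this, and you should make it explicit in your setup.
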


\begin{proof}
As $m$ is a polymorphism of every relation in $\dang{\Gm}$, by \cite{Baker75:chinese-remainder} every
relation in $\dang{\Gm}$ is 2-decomposable. Similarly, every binary
relation in $\dang{\Gm}$ is invariant under the  Mal'tsev polymorphism,
which, by Lemma~\ref{lem:thick-mapping}, implies that the binary relation is a
thick mapping.

Finally, suppose that there are two equivalence relations
$\al,\beta\in\dang\Gm$ over the same domain $D'$ that cross. This means that for some
$a,b,c\in D'$ we have that $\ang{a,b}\in\al$, $\ang{b,c}\in\beta$, but
$\ang{a,b}\not\in\beta$, $\ang{b,c}\not\in\al$. Let $h$ be a Mal'tsev
polymorphism of $\Gm$, and consider $d=h(a,b,c)$. First of all, as $h$
is conservative, $d\in\{a,b,c\}$. Then, since $h$ preserves $\al$ and
$\beta$,
$$
h\left(\begin{array}{ccc} a&b&c\\ a&a&c\end{array}\right)=\cl dc\in\al,\qquad
h\left(\begin{array}{ccc} a&b&c\\ a&c&c\end{array}\right)=\cl da\in\beta,
$$
which is impossible.
\end{proof}

To prove the other direction of the equivalence, we need to construct
the two polymorphisms. The following definition will be useful for
this purpose.

\begin{defin} Given a constraint language $\Gm$, we say that $(a|bc)$
is true if $\dang\Gm$ contains an equivalence relation $\al$ with
\[
\ang{a,a},\ang{b,b} ,\ang{c,c},\ang{b,c},\ang{c,b}\in \al \text{ and }\\
\ang{a,b},\ang{a,c},\ang{b,a},\ang{c,a}\not\in \al.
\]
\end{defin}
In other words, the domain of the equivalence relation contains all
three elements, $b$ and $c$ are in the same class, but $a$ is in a
different class. 

\begin{lemma}\label{lem:triple}
Let $\Gm$ be a non-crossing decomposable constraint language over $D$.
For every $a,b,c\in D$, at
most one of $(a|bc)$, $(b|ac)$, $(c|ab)$ is true.
\end{lemma}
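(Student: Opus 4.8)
The plan is to argue by contradiction: suppose two of the three statements hold, say $(a|bc)$ and $(b|ac)$ are both true, witnessed respectively by equivalence relations $\al,\beta\in\dang\Gm$. I would first normalize the situation so that $\al$ and $\beta$ live on a common domain. Each of $\al,\beta$ contains $\ang{a,a},\ang{b,b},\ang{c,c}$ in its domain, so $\{a,b,c\}$ is contained in both domains; by restricting each relation to the union of its domain with that of the other (or, more simply, intersecting the domains down to the block containing $\{a,b,c\}$ and using that projections/restrictions of pp-definable relations remain pp-definable), I can assume $\al$ and $\beta$ are equivalence relations on the same set $D'\supseteq\{a,b,c\}$. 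The key data is: under $\al$, $b$ and $c$ are together but $a$ is apart; under $\beta$, $a$ and $c$ are together but $b$ is apart.

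The heart of the argument is to show $\al$ and $\beta$ must cross, contradicting non-crossing decomposability. Consider the $\al$-class $A$ containing $b$ and $c$ (hence not $a$), and the $\beta$-class $B$ containing $a$ and $c$ (hence not $b$). Then $c\in A\cap B$, so $A\cap B\ne\emptyset$; $b\in A$ but $b\notin B$, so $A\setminus B\ne\emptyset$; and $a\in B$ but $a\notin A$, so $B\setminus A\ne\emptyset$. Thus $\al$ and $\beta$ cross, which is forbidden. This rules out $(a|bc)\wedge(b|ac)$; by the symmetry of the three roles, the same argument rules out each pair, so at most one of the three can be true.

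The main obstacle I anticipate is the bookkeeping in the first step: making precise that two equivalence relations from $\dang\Gm$ with overlapping but possibly different domains can be replaced by two equivalence relations on a common domain that still lie in $\dang\Gm$ and still witness the relevant separations. Concretely, if $\al$ has domain $D_\al$ and $\beta$ has domain $D_\beta$, one wants to work with $D'=D_\al\cap D_\beta$ (which contains $\{a,b,c\}$): the restrictions $\al|_{D'}$ and $\beta|_{D'}$ are again equivalence relations, they are pp-definable in $\Gm$ (conjoin with the pp-definable unary relation $D'$ on each coordinate — note $D_\al$ is the unary projection of $\al$, hence in $\dang\Gm$, and likewise $D_\beta$), and they still contain, respectively, $\ang{b,c}$ but not $\ang{a,b}$, and $\ang{a,c}$ but not $\ang{b,a}$. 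Once this reduction is in place, the crossing argument above is immediate. Everything else is a direct unwinding of the definitions of $(a|bc)$ and of "cross."
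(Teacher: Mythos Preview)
Your proposal is correct and essentially identical to the paper's proof: assume $(a|bc)$ and $(b|ac)$ hold via $\al,\beta\in\dang\Gm$, restrict both to the common domain $D'=D_\al\cap D_\beta$ (which is in $\dang\Gm$ since $D_\al,D_\beta$ are unary projections), and observe that the restricted relations cross exactly as you describe. The bookkeeping you flagged as the main obstacle is handled in the paper in the same way you propose.
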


\begin{proof}
  Suppose that, say, both $(a|bc)$ and $(b|ac)$ are true; let
  $\al,\beta$ be the corresponding equivalence relations from
  $\dang{\Gm}$. Let $D_\al$, $D_\beta$ be the domains of $\al$ and
  $\beta$, respectively. We can consider $D_\al$ and $D_\beta$ as
  unary relations, and they are in $\dang\Gm$. Therefore,
  $D'=D_\al\cap D_\beta$ and $\al'=\al\cap(D'\times D')$ and
  $\beta'=\beta\cap(D'\times D')$ are also in $\dang\Gm$. As is easily
  seen, $\al'$ and $\beta'$ are over the same domain $D'$ and they
  cross, a contradiction.
\end{proof}

\begin{lemma}\label{lem:triple2}
  Let $\Gm$ be a non-crossing decomposable constraint language over $D$. 
  Let $R\in \dang\Gm$ be a binary relation such
  that $(a,a'),(b,b'),(c,c')\in R$, but $(p,q)\not\in \rel$ for some
  $p\in\{a,b,c\}$ and $q\in \{a',b',c'\}$. Then
\begin{gather*}
(a|bc) \iff (a'|b'c')\\
(b|ac) \iff (b'|a'c')\\
(c|ab) \iff (c'|a'b').
\end{gather*}
\end{lemma}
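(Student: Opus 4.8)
The plan is to prove Lemma~\ref{lem:triple2} by showing the three biconditionals simultaneously, using a symmetry argument that reduces all cases to a single one. Note first that the hypothesis is symmetric in the three pairs: we have $(a,a'),(b,b'),(c,c')\in R$ and some ``missing pair'' $(p,q)\notin R$. So it suffices to prove one implication, say $(a|bc)\Rightarrow(a'|b'c')$, for an arbitrary labelling; the converse follows by replacing $R$ with $R^{-1}$ (which is also in $\dang\Gm$, being pp-definable, and which has $(a',a),(b',b),(c',c)$ as members and $(q,p)$ as a missing pair), and the other two biconditionals follow by permuting the roles of $a,b,c$.

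So assume $(a|bc)$ is true, witnessed by an equivalence relation $\al\in\dang\Gm$ with $b,c$ in one $\al$-class and $a$ in a different one (and all three in the domain $D_\al$ of $\al$). The key construction is to push $\al$ through $R$: define $\beta$ on the domain $D_\beta := \pr_2\big(R\cap(D_\al\times D)\big)$ by
\[
\ang{x',y'}\in\beta \iff \exists x,y\ \big(R(x,x')\wedge R(y,y')\wedge \ang{x,x}\in\al\wedge\ang{y,y}\in\al\wedge\ang{x,y}\in\al\big).
\]
This $\beta$ is pp-definable from $R$ and $\al$, hence lies in $\dang\Gm$. Since $R$ is a thick mapping (because binary relations in $\dang\Gm$ are thick mappings), $R$ restricted to $D_\al\times D$ is also a thick mapping — actually one should restrict more carefully so that domains line up, but the rectangularity of $R$ is what makes $\beta$ come out to be an equivalence relation rather than merely a reflexive symmetric relation: rectangularity gives transitivity of the pushed-forward relation. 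This is the step I expect to be the technical crux: verifying that $\beta$ really is an equivalence relation (reflexivity on $D_\beta$ and symmetry are immediate from the definition; transitivity needs that $R$ is a thick mapping, i.e.\ one uses $\al_R^2 = R^{-1}\circ R$ and the fact that $\al$-classes are unions of $\al_R^2$-classes on the relevant domain, or a direct rectangularity chase).

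Granting that $\beta\in\dang\Gm$ is an equivalence relation, it remains to check that $\beta$ witnesses $(a'|b'c')$. By construction $a',b',c'\in D_\beta$, and since $\ang{b,c}\in\al$ we get $\ang{b',c'}\in\beta$; it remains to show $\ang{a',b'}\notin\beta$ (and hence, by symmetry, $\ang{b',a'}\notin\beta$, and likewise for $a',c'$). Suppose for contradiction $\ang{a',b'}\in\beta$. Then there are $x,y$ with $R(x,a')$, $R(y,b')$, and $x,y$ in the same $\al$-class. Now I would use the missing pair: rectangularity of $R$ together with $(a,a'),(b,b')\in R$ and $R(x,a'),R(y,b')$ forces various other pairs into $R$; combined with $\ang{x,y}\in\al$ but (recall) $a$ being $\al$-separated from $b$, one derives either that some forbidden pair is in $R$ or that $\al$ and $\beta$ would cross — the latter being exactly what ``non-crossing decomposable'' forbids (via Lemma~\ref{lem:triple}-style reasoning restricted to a common domain, as in Lemma~\ref{lem:triple}'s proof). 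The precise bookkeeping of which element plays the role of $p$ and which of $q$ is where the hypothesis ``$(p,q)\notin R$ for some $p\in\{a,b,c\}$, $q\in\{a',b',c'\}$'' gets used, and this case analysis — there are nominally nine $(p,q)$ choices, cut down by symmetry — is the main obstacle; everything else is routine pp-definability and the already-established facts that binary relations in $\dang\Gm$ are thick mappings and no two equivalence relations in $\dang\Gm$ cross.
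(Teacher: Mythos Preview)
Your overall plan — push the witnessing equivalence relation $\al$ through $R$ to get an equivalence relation on the other side — is the same as the paper's, and the symmetry reduction to one implication is fine. But two parts of the execution are off.

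First, you are working too hard on the ``technical crux''. Your $\beta$ is just $R^{-1}\circ\al\circ R$ (restricted to the right domain), which equals $\al^2_{R'}$ for $R'=\al\circ R$. Since $R'\in\dang\Gm$ is binary and $\Gm$ is non-crossing decomposable, $R'$ is automatically a thick mapping, and hence $\al^2_{R'}$ is automatically an equivalence relation in $\dang\Gm$. No rectangularity chase is needed; the hypothesis on $\Gm$ gives this for free. The paper simply sets $R'=\al\circ R$ and reads off $\beta^2_{R'}$.

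Second — and this is the real gap — you have misplaced the role of the missing pair $(p,q)$, and your contradiction argument is not pinned down. The paper's proof begins with a case split on $\alpha^1_R$: if two of $a,b,c$ share an $\alpha^1_R$-class, then (since the missing pair prevents all three from sharing one) the third is alone, and $\alpha^1_R,\alpha^2_R$ directly witness the matching pair of statements; Lemma~\ref{lem:triple} then kills the other four. \emph{That} is where the missing-pair hypothesis is consumed. In the remaining case $a,b,c$ lie in three distinct $\alpha^1_R$-classes, and the missing pair is no longer referenced. Here, assuming $\ang{a',b'}\in\beta^2_{R'}$, one uses that $R'$ is a thick mapping to get $(a,b'),(b,a')\in R'$, unwinds to find $x,y$ with $\ang{a,x},\ang{b,y}\in\al$ and $(x,b'),(y,a')\in R$, and then observes that $\al$ witnesses $(y|ax)$ while $\alpha^1_R$ witnesses $(x|ay)$ — two incompatible statements by Lemma~\ref{lem:triple}. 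The contradiction is between $\al$ and $\alpha^1_R$, not between $\al$ and your $\beta$, and there is no nine-case bookkeeping on $(p,q)$ at this stage.
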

\begin{proof}
If, say, $a$ and $b$ are in the same $\alpha^1_{R}$ class, then $c$
has to be in a different class. Since $R$ is a thick mapping, this means that $a'$ and $b'$ are in the
same $\alpha^2_R$ class and $c'$ is in a different class. Therefore,
both $(c|ab)$ and $(c'|a'b')$ are true, and by Lemma~\ref{lem:triple},
none of the other statements can be true. Therefore, we can assume that
  $a$, $b$, $c$ are in different $\alpha^1_R$ classes, and hence $a'$,
  $b'$, $c'$ are in different $\alpha^2_R$ classes.

Suppose that $(a|bc)$ is true; let $\al\in\dang\Gm$ be the
corresponding equivalence relation. 
Consider the relation $\rel'=\al \circ \rel$, that is, $\rel'(x,y)=\exists z
(\al(x,z)\wedge\rel(z,y))$,  which has to be a thick mapping. 
Let $\beta^1_{R'}, \beta^2_{R'}$ be the
equivalence relations of $\rel'$. Observe that $\beta^1_{R'}, 
\beta^2_{R'}\in\dang{\rel'}\sse\dang\Gm$. We claim that $\ang{b',c'}\in
\beta^2_{\rel'}$ and $\ang{a',b'}\not\in \beta^2_{\rel'}$, showing
that $(a'|b'c')$ is true. It is clear that $\ang{b',c'}\in
\beta^2_{R'}$: as $(b,b),(b,c)\in \al$, we have $(b,b'),(b,c')\in
\rel'$. To get a contradiction suppose that $\ang{a',b'}\in
\beta^2_{\rel'}$. Since $(a,a'),(b,b')\in\rel'$ and $\rel'$ is a 
thick mapping, the pairs $(a,b'),(b,a')$ has to belong to $\rel'$  
as well. That is, there are $x,y$ such that
$\ang{a,x},\ang{b,y}\in \al$ and $(x,b'),(y,a')\in \rel$. Now equivalence
relation $\al$ shows that $(y|ax)$ is true (since 
$\ang{a,x},\ang{b,y}\in \al$ and $\ang{b,a}\not\in \al$) and 
equivalence relation $\alpha^1_{\rel}$
shows that $(x|ay)$ is true (since $(a,a'),(y,a')\in \rel$ shows
$\ang{a,y}\in\alpha^1_{\rel}$, $(b,b'),(x,b')\in \rel$ shows
$\ang{b,x}\in \alpha^1_{\rel}$,  and we know that $\ang{a,b}\not\in
\alpha^1_{\rel}$). By Lemma~\ref{lem:triple}, $(y|ax)$ and $(x|ay)$
cannot be both true, a contradiction.
\end{proof}

Let $\Gm$ be a non-crossing decomposable language over $D$.  Let
$\text{minor}(a,b,c)$ be $a$ if $a,b,c$ are all different or all the
same, otherwise let it be the value that appears only once among $a,b,c$.
Similarly, let $\text{major}(a,b,c)$ be $a$ if $a,b,c$ are all
different or all the same, otherwise let it be the value that appears
more than once among $a,b,c$. Because of Lemma~\ref{lem:triple}, the
following two functions are well defined:

\begin{gather*}
m(a,b,c)=
\begin{cases}
b& \text{if $(a|bc)$ is true,}\\
a& \text{if $(b|ac)$ is true,}\\
b& \text{if $(c|ab)$ is true,}\\
\text{major}(a,b,c) & \text{if none of $(a|bc), (b|ac), (c|ab)$ is true,}\\
\end{cases}\\
h(a,b,c)=
\begin{cases}
a& \text{if $(a|bc)$ is true,}\\
b& \text{if $(b|ac)$ is true,}\\
c& \text{if $(c|ab)$ is true,}\\
\text{minor}(a,b,c) & \text{if none of $(a|bc), (b|ac), (c|ab)$ is true,}\\
\end{cases}
\end{gather*}

\begin{lemma}\label{lem:maltsev}
Operations $m$ and $h$ are conservative majority and minority
operations, respectively, and $\dang\Gm$ is invariant under $m$ and $h$.
\end{lemma}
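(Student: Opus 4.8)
The plan is to verify the three required properties of $m$ and $h$ in turn. First I would check that $m$ is a majority operation and $h$ is a minority operation. This is a direct case analysis on how many of $a,b,c$ coincide. If two of the arguments are equal, say we must evaluate $m(x,x,y)$, $m(x,y,x)$, or $m(y,x,x)$: in each case at most one of the statements $(x|xy)$-type can hold, and by the definition of $(p|qr)$ the repeated element can never play the role of the ``lonely'' element $p$ when it equals one of the others (since $(a|bc)$ requires $\ang{a,b},\ang{a,c}\notin\al$ while $\ang{b,c}\in\al$, and if $a=b$ this forces $\ang{a,a}\notin\al$, impossible). So the only statement that can be true is the one naming the distinct element as lonely, e.g.\ for $m(x,x,y)$ only $(y|xx)$ can hold, and then $m$ returns $x$ as required; if none holds, $\text{major}$ returns $x$. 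The same bookkeeping handles $h$: for $h(x,x,y)$ only $(y|xx)$ can be true and then $h$ returns $y$; for $h(x,y,x)$ only $(y|xx)$ can be true and $h$ returns $y$; for $h(y,x,x)$ only $(y|xx)$ can be true and $h$ returns $y$; if none holds, $\text{minor}$ returns $y$. Conservativity is immediate from the definitions, since every branch returns one of $a,b,c$ (recall $\text{major}$ and $\text{minor}$ always return one of the arguments).

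The main work is showing $\dang\Gm$ is invariant under $m$ and $h$. Since $m$ and $h$ are conservative, it suffices (by the remarks following Lemma~\ref{lem:Galois}) to show they preserve every relation in $\dang\Gm$; and because every relation in $\dang\Gm$ is $2$-decomposable, it is enough to show $m$ and $h$ preserve every \emph{binary} relation $R\in\dang\Gm$. So fix such an $R$ and three pairs $(a,a'),(b,b'),(c,c')\in R$; we must show $(m(a,b,c),m(a',b',c'))\in R$ and likewise for $h$. The key distinction is whether the ``column pattern'' $(p,q)$ with $p\in\{a,b,c\}$, $q\in\{a',b',c'\}$ is entirely contained in $R$ or not. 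If $R$ contains all nine pairs $\{a,b,c\}\times\{a',b',c'\}$, then since $m$ and $h$ are conservative, $(m(a,b,c),m(a',b',c'))$ and $(h(a,b,c),h(a',b',c'))$ are among these nine pairs and we are done. Otherwise, some pair $(p,q)$ is missing, and this is exactly the hypothesis of Lemma~\ref{lem:triple2}: the statements $(a|bc),(b|ac),(c|ab)$ match up coordinatewise with $(a'|b'c'),(b'|a'c'),(c'|a'b')$.

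In that second case I would argue as follows. If one of the three statements, say $(a|bc)$, is true, then by Lemma~\ref{lem:triple2} so is $(a'|b'c')$, and then by definition $m(a,b,c)=b$, $m(a',b',c')=b'$, so the image is $(b,b')\in R$; similarly $h(a,b,c)=a$, $h(a',b',c')=a'$, giving $(a,a')\in R$. The symmetric branches for $(b|ac)$ and $(c|ab)$ are identical. If none of the three statements is true, then by Lemma~\ref{lem:triple2} none of $(a'|b'c'),(b'|a'c'),(c'|a'b')$ is true either, so $m$ returns $\text{major}(a,b,c)$ on the left and $\text{major}(a',b',c')$ on the right, and $h$ returns $\text{minor}$ on both sides. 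Here one must check that the ``major/minor'' choices are consistent across the two coordinates --- i.e.\ that the multiplicity pattern of $(a,b,c)$ forces a compatible pattern on $(a',b',c')$. This follows because $R$ is a thick mapping: if, say, $a=b\ne c$ on the left then $\ang{a,b}\in\alpha^1_R$, so $\ang{a',b'}\in\alpha^2_R$; and since $(p,q)\notin R$ for some column pair, $a',b',c'$ cannot be mutually related in a way that collapses the pattern --- working out the few cases shows $\text{major}(a',b',c')$ is the partner-consistent choice, so $(\text{major}(a,b,c),\text{major}(a',b',c'))\in R$, and similarly for $\text{minor}$. The hard part is precisely this last consistency check for the ``none true'' branch; it is the one place that genuinely uses the thick-mapping structure of $R$ together with the missing-column hypothesis, and it requires carefully enumerating the possible equality patterns of $(a,b,c)$ and tracking what the thick mapping forces on the primed side.
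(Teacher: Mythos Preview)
Your approach matches the paper's: reduce to binary relations via 2-decomposability, then case-split on whether all nine column pairs lie in $R$, invoking Lemma~\ref{lem:triple2} when one is missing. The one place you overcomplicate is the ``none true'' subcase: there is no equality-pattern enumeration to do, because if (in the missing-pair case) any two of $a,b,c$ shared an $\alpha^1_R$-class, then $\alpha^1_R\in\dang\Gm$ would itself witness one of $(a|bc),(b|ac),(c|ab)$; hence $a,b,c$ lie in three distinct $\alpha^1_R$-classes (so are pairwise distinct), the thick mapping forces the same for $a',b',c'$, and both $\mathrm{minor}$ and $\mathrm{major}$ simply return the first argument on each side, giving $(a,a')\in R$.
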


\begin{proof}
It is clear that $m$ and $h$ are conservative. To show that $m$ is a
majority operation, by definition of major, it is sufficient to consider the case when 
one of $(a|bc), (b|ac), (c|ab)$ is true. If $b=c$ (resp., $a=c$, $a=b$), then
only $(a|bc)$ (resp., $(b|ac)$, $(c|ab)$) can be true, which means
that $m(a,b,c)=b$ (resp., $a$, $b$), as required. A similar argument
shows that $h$ is a minority function.

Since every relation in $\dang\Gm$ is 2-decomposable, it is sufficient
to show that every binary relation in $\dang\Gm$ is invariant under
$m$ and $h$. We show invariance under $h$, the proof is similar for $m$.
Let $R\in \dang\Gm$ be a binary relation, which is a thick mapping by
assumption. 
 Take
$(a,a'),(b,b'),(c,c')\in\rel$. If $a,b,c$ are in the same $\al_\rel^1$-class
then $a',b',c'$ are in the same $\al_\rel^2$-class. Since
$h(a,b,c)\in\{a,b,c\}$ and $h(a',b',c')\in\{a',b',c'\}$, it follows
that $(h(a,b,c),h(a',b',c'))\in\rel$. If $a,b,c$ are not all in the
same $\al_\rel^1$-class, and one of  $(a|bc), (b|ac), (c|ab)$ is
true, then by Lemma~\ref{lem:triple2}, the corresponding statement from
$(a'|b'c'), (b'|a'c'), (c'|a'b')$ is also true. Now the pair
$(h(a,b,c),h(a',b',c'))$ has to be one of $(a,a')$, $(b,b')$,
$(c,c')$, hence it is in $R$. If none of $(a|bc), (b|ac), (c|ab)$ is
true, then $a,b,c$ are in different $\al^1_\rel$-classes. Moreover,  if
none of $(a'|b'c')$, $(b'|a'c'), (c'|a'b')$ is
true, then $a',b',c'$ are in different $\al^2_\rel$-classes. Therefore
$(h(a,b,c),h(a',b',c'))=(\text{minor}(a,b,c),\text{minor}(a',b',c'))=
(a,a')\in\rel$.
\end{proof}

\begin{remark}\label{rem:minority}
Interestingly, Lemma~\ref{lem:maltsev} (along with Lemma~\ref{lem:equiv0}) 
gives more than just the existence of a majority polymorphism, and 
a conservative Mal'tsev polymorphism. The operations $m$ and 
$h$ are both conservative, and $h$ is a minority operation, not just 
a Mal'tsev one. Therefore, we have that a constraint language has 
a majority and conservative Mal'tsev polymorphisms if and only if it 
has a majority and minority polymorphisms, both conservative.
\end{remark}

The following two consequences of having a conservative Mal'tsev
polymorphism will be used in the algorithm.
\begin{lemma}\label{lem:transitive}
Let $\Gm$ be a constraint language having a conservative
Mal'tsev polymorphism. Let $\rel,\rel'\in \dang\Gm$ be two nontrivial thick
mappings such
that $\pr_2\rel=\pr_1\rel'$. Then $\rel\circ\rel'$ is also
non-trivial. 
\end{lemma}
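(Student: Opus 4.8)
The plan is to reduce the statement to a purely combinatorial fact about non-crossing equivalence relations, the key players being $\beta:=\al^2_\rel$ and $\gm:=\al^1_{\rel'}$, which both live on the \emph{same} set $B:=\pr_2\rel=\pr_1\rel'$ and both belong to $\dang\Gm$ (being $\rel^{-1}\cc\rel$ and $\rel'\cc(\rel')^{-1}$). First I would record that $\rel\cc\rel'$ is itself a thick mapping: it lies in $\dang\Gm$, so by Lemma~\ref{lem:Galois} the conservative Mal'tsev polymorphism of $\Gm$ preserves it, whence Lemma~\ref{lem:thick-mapping} gives that it is rectangular and hence a thick mapping. Writing $A=\pr_1\rel$ and $C=\pr_2\rel'$, a one-line check shows $\pr_1(\rel\cc\rel')=A$ and $\pr_2(\rel\cc\rel')=C$ (any $a\in A$ has an $\rel$-neighbour in $B$, which has an $\rel'$-neighbour in $C$). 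Since a thick mapping is trivial precisely when it coincides with the full product of its two projections, it suffices to show $\rel\cc\rel'\neq A\tm C$.

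Next I would show that $\beta$ and $\gm$ do not cross, which is exactly the computation in the proof of Lemma~\ref{lem:equiv0} (it uses only a conservative Mal'tsev polymorphism): a crossing pair would yield $b_1,b_2,b_3\in B$ with $\ang{b_1,b_2}\in\beta$, $\ang{b_1,b_2}\notin\gm$, $\ang{b_2,b_3}\in\gm$, $\ang{b_2,b_3}\notin\beta$, and then $d=h(b_1,b_2,b_3)$ — conservative, so $d\in\{b_1,b_2,b_3\}$ — would have to satisfy $\ang{d,b_3}\in\beta$ and $\ang{d,b_1}\in\gm$ simultaneously, which is impossible for each of the three choices of $d$. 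Also, because $\rel$ and $\rel'$ are nontrivial, $\beta$ and $\gm$ each have at least two classes.

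To finish, assume for contradiction that $\rel\cc\rel'=A\tm C$, and let $\vf\colon A\fac\al\to B\fac\beta$ and $\psi\colon B\fac\gm\to C\fac\dl$ be the bijections witnessing the thick-mapping structure of $\rel$ and $\rel'$. For an arbitrary $\beta$-class $P$ and $\gm$-class $Q$, pick $a$ with $\vf(a^\al)=P$ and $c\in\psi(Q)$; from $(a,c)\in\rel\cc\rel'$ we get $b$ with $(a,b)\in\rel$ and $(b,c)\in\rel'$, which forces $b\in P$ and, since $\psi$ is injective, $b\in Q$, so $P\cap Q\neq\emptyset$. Thus every $\beta$-class meets every $\gm$-class. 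But two equivalence relations on $B$ each having at least two classes cannot have this property without crossing: choosing distinct $\beta$-classes $P_1,P_2$ and distinct $\gm$-classes $Q_1,Q_2$, the class $Q_1$ meets both $P_1$ and $P_2$ while $P_1$ also meets $Q_2$ (which is disjoint from $Q_1$), so $P_1\cap Q_1$, $P_1\setminus Q_1$ and $Q_1\setminus P_1$ are all nonempty, contradicting the previous step. Hence $\rel\cc\rel'\neq A\tm C$, i.e.\ $\rel\cc\rel'$ is a nontrivial thick mapping.

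The only genuine difficulty I anticipate is conceptual rather than computational: spotting that the correct invariants to track are $\al^2_\rel$ and $\al^1_{\rel'}$ on the shared domain $B$, and that the non-crossing of this pair — available for free from the conservative Mal'tsev polymorphism — is exactly what forbids the degenerate case in which $\rel\cc\rel'$ collapses to a full Cartesian product. The remaining ingredients (that $\rel\cc\rel'$ is a thick mapping, the projection identities, and the final ``laminar'' counting argument) are all routine.
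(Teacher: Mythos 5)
Your proof is correct, but it takes a genuinely different route from the paper's. The paper argues directly: it picks witnesses of nontriviality on the \emph{outer} coordinates, namely $a,a'$ with $\ang{a,a'}\notin\al^1_{\rel}$ and $b,b'$ with $\ang{b,b'}\notin\al^2_{\rel'}$, assumes $\rel\circ\rel'$ is trivial so that $(a,b'),(a,b),(a',b)$ all lie in it, extracts the three middle witnesses $x,y,z$, and applies the conservative Mal'tsev operation once to get $q=m(x,y,z)\in\{x,y,z\}$ with $(a',q)\in\rel$ and $(q,b')\in\rel'$; each of the three possible values of $q$ then contradicts one of the two non-memberships. You instead work with the \emph{middle} equivalence relations $\al^2_{\rel}$ and $\al^1_{\rel'}$ on the shared domain $B$: you show they are non-crossing (correctly observing that the crossing computation in Lemma~\ref{lem:equiv0} uses only the conservative Mal'tsev polymorphism, and that both relations lie in $\dang\Gm$), note that nontriviality of $\rel,\rel'$ gives each at least two classes, and then show that a full-product composition would force every class of one to meet every class of the other, which is incompatible with non-crossing. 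All the supporting steps you use (the composition is a thick mapping with full projections; ``trivial'' equals ``full product''; the injectivity of the witnessing bijections pins the middle element $b$ into $P\cap Q$) check out against the paper's definitions. The trade-off: the paper's argument is shorter and entirely self-contained, using a single three-tuple application of $m$; yours is longer but isolates the structural reason for the lemma --- the laminarity of the two middle partitions --- and reuses an existing computation, which makes it arguably more transparent and closer in spirit to how Lemma~\ref{lem:non-trivial} is later derived from this one.
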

\begin{proof}
Since $\rel$, $\rel'$ are nontrivial, there are values $a$, $a'$, $b$,
$b'$ such that $\ang{a,a'}\not\in \alpha^1_{\rel}$, $\ang{b,b'}\not\in
\alpha^2_{\rel'}$. If $\rel\circ\rel'$ is trivial, then
$(a,b'),(a,b),(a',b)\in \rel\circ\rel'$, which means that there are
(not necessarily distinct) values $x,y,z$ such that
\begin{gather*}
(a,x)\in \rel,\ (x,b')\in \rel',\\
(a,y)\in \rel,\ (y,b)\in \rel',\\
(a',z)\in \rel,\ (z,b)\in \rel'.\\
\end{gather*}
Let $m$ be a conservative Mal'tsev polymorphism. Let
$q=m(x,y,z)\in \{x,y,z\}$. Applying $m$ on the pairs above, we get
that $(a',q)\in \rel$ and $(q,b')\in \rel'$. It is not possible that
$q\in \{x,y\}$, since this would mean that
$\ang{a,a'}\in\alpha^1_{\rel}$. It is not possible that $q=z$ either, since
that would imply $\ang{b,b'}\in
\alpha^2_{\rel'}$, a contradiction.
\end{proof}

Recall that if $\alpha$ and $\beta$ are equivalence relations on the
same set $S$, then $\alpha \join\beta$ denotes the smallest (in
terms of the number of pairs it contains) equivalence relation on $S$
containing both $\alpha$ and $\beta$. It is well known that if $\al$
and $\beta$ have a Mal'tsev polymorphism then they \emph{permute},
that is, $\al\circ\beta=\beta\circ\al=\al\vee\beta$.

\begin{lemma}\label{lem:non-trivial}
Let $\Gm$ be a finite constraint language having a conservative
Mal'tsev polymorphism. Let $\alpha,\beta\in \dang\Gm$ be two
nontrivial equivalence relations on the same set $S$. Then
$\alpha\join\beta$ is also non-trivial. 
\end{lemma}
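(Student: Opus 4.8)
The plan is to mimic the structure of Lemma~\ref{lem:transitive}, but now arguing about the join of two equivalence relations rather than the product of two thick mappings. Since $\al,\beta\in\dang\Gm$ have a conservative Mal'tsev polymorphism $m$, they permute, so $\al\join\beta=\al\cc\beta=\beta\cc\al$; in particular $\al\join\beta\in\dang\Gm$ as well. Suppose for contradiction that $\al\join\beta$ is trivial, i.e.\ it is the total relation $S^2$. Because $\al$ is nontrivial there are $a,a'\in S$ with $\ang{a,a'}\notin\al$, and because $\beta$ is nontrivial there are $b,b'\in S$ with $\ang{b,b'}\notin\beta$. The idea is to locate three ``bridging'' elements produced by the triviality of the join, apply $m$ to them, and derive $\ang{a,a'}\in\al$ or $\ang{b,b'}\in\beta$, a contradiction.

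Concretely, triviality of $\al\join\beta=\al\cc\beta$ gives us: $\ang{a,a'}\in\al\cc\beta$, so there is $x$ with $\ang{a,x}\in\al$ and $\ang{x,a'}\in\beta$; also $\ang{b,b'}\in\al\cc\beta$, so there is $z$ with $\ang{b,z}\in\al$ and $\ang{z,b'}\in\beta$; and we need a third triple tying these together — for instance using $\ang{a,b}$ or $\ang{x,z}$ lying in the total relation $\al\cc\beta$, giving some $y$ with $\ang{a,y}\in\al$ (or $\ang{x,y}\in\al$) and the corresponding $\beta$-pair. Then set $q=m$ applied coordinatewise to the three $\al$-pairs, which lies in the first column, and simultaneously $q$ is the $m$-value of the three $\beta$-pairs in the second column; conservativity forces $q$ to coincide with one of the three candidate values, and in each case transitivity of $\al$ (through the first coordinate) or of $\beta$ (through the second coordinate) collapses one of the forbidden pairs into its relation. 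This is exactly the bookkeeping pattern of Lemma~\ref{lem:transitive}, with ``$\rel$'' replaced by $\al$ and ``$\rel'$'' by $\beta$.

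The main obstacle — and the step that needs care — is choosing the three triples so that the conservative Mal'tsev value $q$ is simultaneously reachable from $a$ (resp.\ $a'$) via $\al$ and from $b$ (resp.\ $b'$) via $\beta$, while no choice of $q\in\{x,y,z\}$ is innocuous. In Lemma~\ref{lem:transitive} this worked because the witnesses sat in a product of two different relations; here both relations are equivalences on the same set $S$, so one must be slightly more deliberate about which pairs $\ang{a,a'}$, $\ang{b,b'}$, and a third ``diagonal'' pair to decompose through $\al\cc\beta$, and then verify that $q\in\{x,y\}$ forces $\ang{a,a'}\in\al$ (using reflexivity/symmetry/transitivity of $\al$ on the first coordinate) whereas $q=z$ forces $\ang{b,b'}\in\beta$ (on the second coordinate). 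Once the three triples are set up correctly, the contradiction is immediate from applying $m$ and invoking conservativity, exactly as in the proof of Lemma~\ref{lem:transitive}.
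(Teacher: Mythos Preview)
Your instinct to invoke permutability and reach a contradiction via the conservative Mal'tsev polymorphism is right, but you are doing far more work than the paper does, and the part you flag as ``needing care'' is in fact a genuine gap.

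The paper's proof is one line: an equivalence relation on $S$ \emph{is} a thick mapping (with $\al_\al^1=\al_\al^2=\al$), and it is nontrivial as a thick mapping exactly when it is nontrivial as an equivalence relation. Since $\pr_2\al=S=\pr_1\beta$, Lemma~\ref{lem:transitive} applies verbatim and tells you $\al\cc\beta$ is a nontrivial thick mapping. By permutability $\al\join\beta=\al\cc\beta$, and you are done. There is no need to re-run the argument of Lemma~\ref{lem:transitive}.

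If you insist on re-running it, the three pairs you decompose through $\al\cc\beta$ must be chosen so that the Mal'tsev identities fire on \emph{both} outer coordinates simultaneously: that is why Lemma~\ref{lem:transitive} decomposes $(a,b')$, $(a,b)$, $(a',b)$ --- the first coordinates are $a,a,a'$ and the last are $b',b,b$, so $m$ collapses them to $a'$ and $b'$ respectively. Your choice of decomposing $\ang{a,a'}$ and $\ang{b,b'}$ (and some third ``diagonal'' pair such as $\ang{a,b}$) gives outer coordinates like $a,b,a$ and $a',b',b$, on which a Mal'tsev operation is \emph{not} determined; so the computation $(q,b')\in\beta$ or $(a',q)\in\al$ that you need does not follow, and no case analysis on $q\in\{x,y,z\}$ produces the contradiction. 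The fix is precisely to use the triples from Lemma~\ref{lem:transitive} --- which is the same as just citing that lemma.
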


\begin{proof}
Since $\alpha \join
\beta=\alpha\circ\beta$, by Lemma~\ref{lem:transitive}, it is non-trivial.
\end{proof}

We will also need the following observation.

\begin{lemma}\label{lem:join-union}
Let $\Gm$ be a finite constraint language having a majority 
polymorphism and a conservative
Mal'tsev polymorphism. Let $\alpha,\beta\in \dang\Gm$ be two
equivalence relations on the same set $S$. Then
$\alpha\join\beta=\al\cup\beta$. 
\end{lemma}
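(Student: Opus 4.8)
The plan is to show $\al\join\beta=\al\cup\beta$ by proving that $\al\cup\beta$ is already transitive, hence an equivalence relation; since it obviously contains both $\al$ and $\beta$ and is contained in any equivalence relation containing them, it must equal the join. First I would recall from the paragraph preceding the statement that, because $\Gm$ has a conservative Mal'tsev polymorphism, the equivalence relations $\al$ and $\beta$ permute, so $\al\join\beta=\al\circ\beta=\beta\circ\al$. Thus it suffices to show $\al\circ\beta\sse\al\cup\beta$.

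So take $\ang{a,c}\in\al\circ\beta$, witnessed by some $b$ with $\ang{a,b}\in\al$ and $\ang{b,c}\in\beta$. The goal is $\ang{a,c}\in\al$ or $\ang{a,c}\in\beta$. The natural move is to invoke the non-crossing hypothesis, which is available since, by Lemma~\ref{lem:equiv0} together with the characterization results, or more directly since the hypotheses of this lemma are exactly those of Lemma~\ref{lem:equiv0}, $\Gm$ is non-crossing decomposable, so no pair of equivalence relations from $\dang\Gm$ cross. Suppose for contradiction that $\ang{a,c}\notin\al$ and $\ang{a,c}\notin\beta$. Consider the common domain $S$ (both $\al$ and $\beta$ are on $S$, and $S\in\dang\Gm$); I would look at the $\al$-class $A=a^\al$ and the $\beta$-class $B=c^\beta$. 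We have $a\in A$ and $a\notin B$ (since $\ang{a,c}\notin\beta$), so $A\setminus B\ne\emptyset$; we have $c\in B$ and $c\notin A$ (since $\ang{a,c}\notin\al$), so $B\setminus A\ne\emptyset$; and $b\in A$ (as $\ang{a,b}\in\al$) and $b\in B$ (as $\ang{b,c}\in\beta$), so $A\cap B\ne\emptyset$. Hence $\al$ and $\beta$ cross, contradicting non-crossing decomposability. Therefore $\ang{a,c}\in\al\cup\beta$, which gives $\al\circ\beta\sse\al\cup\beta$ and the reverse inclusion $\al\cup\beta\sse\al\join\beta$ is trivial, so $\al\join\beta=\al\cup\beta$.

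The only real subtlety — and the thing I would be careful to state cleanly — is the bookkeeping that $\al,\beta$ are over the literal same set $S$, that $S\in\dang\Gm$ (it is a unary relation, namely a projection of $\al$), and hence the "no crossing" property genuinely applies to this pair; this parallels the domain-intersection argument already used in Lemma~\ref{lem:triple}. There is no heavy computation here: the permutability of $\al$ and $\beta$ is quoted, and the crossing contradiction is a two-line set-theoretic check using the witness $b$. So the main obstacle is essentially presentational rather than mathematical — making sure the crossing configuration is exhibited with the correct $\al$-class and $\beta$-class — and once that is in place the lemma follows immediately.
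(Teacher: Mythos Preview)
Your proof is correct and follows essentially the same approach as the paper: invoke Lemma~\ref{lem:equiv0} to conclude that $\al$ and $\beta$ are non-crossing, and then show that $\ang{a,b}\in\al$, $\ang{b,c}\in\beta$ forces $\ang{a,c}\in\al\cup\beta$. The paper's proof is terser (it does not separately invoke permutability and phrases the transitivity check directly), but the key step---exhibiting a crossing pair of classes from a putative counterexample---is exactly what you do.
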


\begin{proof}
By Lemma~\ref{lem:equiv0} $\al$ and $\beta$ are non-crossing. 
Then if $\ang{a,b}\in\al$ and $\ang{b,c}\in\beta$ then $\ang{b,c},
\ang{a,c}\in\al$. The result follows.
\end{proof}

\section{Algorithm}\label{sec:algorithm}

In this section we fix a constraint language $\Gm$ that has
conservative majority and minority polymorphisms.
We present a
polynomial-time algorithm for solving $\CCSP(\Gm)$ and $\NCCSP(\Gm)$ in this case.

\subsection{Prerequisites}

In this subsection we prove several properties of instances of $\CCSP(\Gm)$ and 
$\NCCSP(\Gm)$ that will be very instrumental for our algorithms. First of all, we 
show that every such instance can be supposed binary, that is, that every its 
constraint is imposed only on two variables. Then we introduce a graph 
corresponding to such an instance, and show that if this graph is disconnected 
then a solution to the whole problem can be obtained by combining arbitrarily 
solutions for the connected components. Finally, if the graph is connected, the 
set of possible values for each variable can be subdivided into several subsets, 
so that if the variable takes a value from one of the subsets, then each of the 
remaining variables is forced to take values from a particular subset of the 
corresponding partition. 

Observe that if a constraint language $\Gm$ satisfies the conditions of Theorem~\ref{the:main-poly} then by Remark~\ref{rem:minority} the constraint language $\Gm'$ obtained from $\Gm$ by adding all unary relations also satisfies the conditions of Theorem~\ref{the:main-poly}. Indeed, $\Gm$ has conservative majority and minority polymorphisms that are also polymorphisms of $\Gm'$. Therefore we will assume that $\Gm$ contains all unary relations. It will also be convenient to assume that $\Gm$ contains all the binary relations from $\dang\Gm$.

Let $\Gm$ be a constraint language and let $\cP=(V,\Cc)$ be a
2-consistent instance of $\CSP(\Gm)$. By $\bin(\cP)$ we denote the
instance $(V,\Cc')$ such that $\Cc'$ is the set of all constraints of
the form $\ang{(v,w),\rel_{v,w}}$ where $v,w\in V$ and $\rel_{v,w}$ is
the set of all partial solutions on $\{v,w\}$. 

\begin{lemma}\label{lem:binary-solutions}
Let $\Gm$ be a constraint language with a majority polymorphism. Then
if $\cP$ is a 2-consistent instance 
of $\CSP(\Gm)$ then $\bin(\cP)$ has the same solutions as $\cP$.  
\end{lemma}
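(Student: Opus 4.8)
The plan is to show the two inclusions $\text{solutions}(\cP)\subseteq\text{solutions}(\bin(\cP))$ and $\text{solutions}(\bin(\cP))\subseteq\text{solutions}(\cP)$ separately, the first being immediate and the second being the heart of the matter. For the easy direction: if $\vf\colon V\to D$ is a solution of $\cP$, then for every pair $v,w\in V$ the pair $(\vf(v),\vf(w))$ is by definition a partial solution on $\{v,w\}$, hence lies in $\rel_{v,w}$; thus $\vf$ satisfies every constraint of $\bin(\cP)$.

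For the nontrivial direction, let $\vf$ be a solution of $\bin(\cP)$, so $(\vf(v),\vf(w))\in\rel_{v,w}$ for all $v,w$. Fix an arbitrary constraint $\ang{\bs,\rel}\in\Cc$ of the original instance, say with $\bs=(v_1\zd v_n)$, and we must show $(\vf(v_1)\zd\vf(v_n))\in\rel$. Since $\Gm$ has a majority polymorphism $m$, the relation $\rel$ is $2$-decomposable by \cite{Baker75:chinese-remainder}; therefore it suffices to check that for every pair $i,j$ the projection $\pr_{i,j}(\vf(v_1)\zd\vf(v_n))=(\vf(v_i),\vf(v_j))$ lies in $\pr_{i,j}\rel$. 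But $\cP$ is $2$-consistent, and in particular after running the {\sc 2-Consistency} algorithm the relation $\rel_{v_i,v_j}$ was initialized as the intersection of all binary projections $\pr_{v_i,v_j}\relo$ over constraints $\ang{\bt,\relo}\in\Cc$ with $v_i,v_j$ in the scope, and then only shrunk; hence $\rel_{v_i,v_j}\subseteq\pr_{i,j}\rel$. Since $(\vf(v_i),\vf(v_j))\in\rel_{v_i,v_j}$, we get $(\vf(v_i),\vf(v_j))\in\pr_{i,j}\rel$, as needed. (If $v_i=v_j$ the claim is trivial; the coordinates of $\rel$ are indexed by the variables of $\bs$ in the usual way, so repeated variables in $\bs$ cause no difficulty.)

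The one point needing care — and the place I expect the main subtlety — is the precise bookkeeping of what $\rel_{v,w}$ is: in $\bin(\cP)$ it is defined as the set of \emph{all} partial solutions on $\{v,w\}$ in $\cP$, whereas the {\sc 2-Consistency} algorithm maintains its own relations $\rel_{v,w}$ which a priori are only a superset-then-refinement. One must observe that for a $2$-consistent instance these two notions coincide, or at least that the algorithm's $\rel_{v,w}$ is contained in every $\pr_{i,j}\rel$; this is exactly what Lemma~\ref{lem:consistency} and the definition of $2$-consistency give us, since a partial solution on $\{v,w\}$ must by definition satisfy $\ang{\{v,w\}\cap\bs,\pr_{\{v,w\}\cap\bs}\rel}$ for every original constraint. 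Once this identification is in place, the argument above goes through; the majority polymorphism enters only through $2$-decomposability, and no further properties of $\Gm$ are used.
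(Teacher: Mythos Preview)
Your argument is correct, and in fact takes a more elementary route than the paper's own proof. The paper works with the single $|V|$-ary relation $\rel$ of \emph{all} solutions of $\cP$: it invokes the global-consistency result of \cite{Jeavons98:consist} (which genuinely uses the 2-consistency hypothesis together with the majority polymorphism) to conclude that every pair in $\rel_{v,w}$ extends to a full solution, i.e.\ $\rel_{v,w}=\pr_{v,w}\rel$, and then applies 2-decomposability to $\rel$. You instead apply 2-decomposability to each individual constraint relation $\relo\in\Gm$ and use only the immediate containment $\rel_{v,w}\subseteq\pr_{i,j}\relo$, which follows straight from the definition of a partial solution. This is strictly simpler: your proof does not need global consistency, and in fact never uses the 2-consistency hypothesis at all---the lemma holds for an arbitrary instance $\cP$ once $\Gm$ has a majority polymorphism.

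Two small cleanups. First, your middle paragraph's detour through the {\sc 2-Consistency} algorithm is unnecessary and slightly muddled (the algorithm only ever shrinks $\rel_{v,w}$, it is not ``a superset-then-refinement''); you can delete that discussion, since the containment $\rel_{v_i,v_j}\subseteq\pr_{i,j}\rel$ is immediate from the definition of partial solution and of $\bin(\cP)$. Second, the repeated-variable case $v_i=v_j$ is not quite ``trivial'': one still needs that $(\vf(v_i),\vf(v_i))\in\pr_{i,j}\rel$, which follows because a partial solution on the singleton $\{v_i\}$ must already satisfy the restriction of $\ang{\bs,\rel}$ to all occurrences of $v_i$ in $\bs$. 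This is easy, but deserves one sentence rather than a parenthetical dismissal.
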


\begin{proof}
Let us denote by $\rel,\rel'$ the $|V|$-ary relations
consisting of all solutions of $\cP$ and $\bin(\cP)$, respectively.
Relations $\rel,\rel'$ are pp-definable in $\Gm$ without equalities,
and $\rel\sse\rel'$. To show that $\rel=\rel'$ we use the result from
\cite{Jeavons98:consist} stating that, since 
$\Gm$ has a majority polymorphism for any
$v,w\in V$ and any $(a,b)\in\rel_{v,w}$ we have
$(a,b)\in\pr_{v,w}\rel$, i.e.\ $\pr_{v,w}\rel=\pr_{v,w}\rel'$. Since
$\rel$ is 2-decomposable, if $\ba\in\rel'$, that is,
$\pr_{v,w}\ba\in\rel_{v,w}=\pr_{v,w}\rel$ for all $v,w\in V$, then
$\ba\in\rel$.
\end{proof}

Let $\cP=(V,\Cc)$ be an instance of $\CSP(\Gm)$. Applying algorithm
{\sc 2-Consistency} we may assume that $\cP$ is 2-consistent. 
By the assumption about $\Gm$,
all constraint relations from $\cP$ are 
2-decomposable, and $\bin(\cP)$ has the same solutions as $\cP$ itself. Therefore, replacing $\cP$ with $\bin(\cP)$, if necessary, every constraint of 
$\cP$ can be assumed to be binary.

Let constraints of $\cP$ be $\ang{(v,w),\rel_{vw}}$ for each
pair of different $v,w\in V$. Let $\cS_v$, $v\in V$, denote the set of
$a\in D$ such that there is a solution $\vf$ of $\cP$ such that
$\vf(v)=a$. By \cite{Jeavons98:consist}, $\cP$ is globally
consistent, therefore, $\cS_v=\pr_1\rel_{vw}$ 
for any $w\in V$, $w\ne v$. 

Constraint $C_{vw}=\ang{(v,w),\rel_{vw}}$ is said
to be \emph{trivial} if $\rel_{vw}=\cS_v\tm\cS_w$, otherwise it is
said to be \emph{non-trivial}. 
The \emph{graph of} $\cP$, denoted $G(\cP)$, is a graph with vertex
set $V$ and edge set $E=\{vw\mid v,w\in V \text{ and $C_{vw}$
is non-trivial}\}$.  

The 2-consistency of $\cP$ implies, in particular, the following simple property.

\begin{lemma}\label{lem:2-consistency}
By the 2-consistency of $\cP$, for any $u,v,w\in V$,
$\rel_{uv}\sse\rel_{uw}\circ\rel_{wv}$. 
\end{lemma}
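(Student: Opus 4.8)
The plan is to invoke the definition of 2-consistency directly, so the argument will be short. Fix $u,v,w\in V$ and pick an arbitrary pair $(a,b)\in\rel_{uv}$; I must produce some $c\in D$ with $(a,c)\in\rel_{uw}$ and $(c,b)\in\rel_{wv}$. First I would check that the assignment $\psi_0$ sending $u$ to $a$ and $v$ to $b$ is genuinely a partial solution of $\cP$ on $\{u,v\}$, so that 2-consistency is applicable to it. Since $\cP$ has been made binary, the only constraint whose scope lies inside $\{u,v\}$ is $C_{uv}$ (or its reverse), which $\psi_0$ satisfies because $(a,b)\in\rel_{uv}$; and the projection onto $\{u\}$ (resp.\ $\{v\}$) of any other constraint is the unary relation $\cS_u$ (resp.\ $\cS_v$), which contains $a=\psi_0(u)$ and $b=\psi_0(v)$ since $a\in\pr_1\rel_{uv}=\cS_u$ and $b\in\pr_2\rel_{uv}=\cS_v$.

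Next, applying 2-consistency with $W=\{u,v\}$ and the extra variable $w$, the partial solution $\psi_0$ extends to a partial solution $\psi$ on $\{u,v,w\}$. Setting $c=\psi(w)$ and restricting $\psi$ to the scopes $(u,w)$ and $(w,v)$ yields $(a,c)=(\psi(u),\psi(w))\in\rel_{uw}$ and $(c,b)=(\psi(w),\psi(v))\in\rel_{wv}$. Taking $c$ as the existential witness in the definition of relational composition (Example~\ref{exa:product}) gives $(a,b)\in\rel_{uw}\circ\rel_{wv}$, and since $(a,b)\in\rel_{uv}$ was arbitrary, $\rel_{uv}\sse\rel_{uw}\circ\rel_{wv}$.

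I do not expect any real obstacle here; this is essentially an unwinding of definitions. The only point that needs a moment of care is the first step, namely that every pair in $\rel_{uv}$ is an honest partial solution (otherwise 2-consistency would not apply to it), and this is immediate from the fact that $\cP$ has already been made binary and 2-consistent, so that $\rel_{uv}$ is exactly the set of partial solutions on $\{u,v\}$ and the unary projections of all constraints agree with the sets $\cS_v$.
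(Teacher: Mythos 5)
Your proof is correct and is exactly the intended argument: the paper gives no explicit proof of this lemma, treating it as an immediate unwinding of the definition of 2-consistency, which is precisely what you carry out (including the worthwhile check that every pair in $\rel_{uv}$ is a genuine partial solution on $\{u,v\}$).
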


Therefore, by Lemma~\ref{lem:transitive}, the graph $G(\cP)$ is
transitive, i.e.,  every connected component is a clique.

If $G(\cP)$ is not connected, every combination of solutions for its 
connected components give rise to a solution of the entire problem. 
More precisely, let $V_1\zd V_k$ be the connected components 
of $G(\cP)$, and let $\cP_{|V_i}$ denote the instance $(V_i,\Cc_i)$ 
where $\Cc_i$ includes all the constraints $\ang{(v,w),\rel_{vw}}$ 
for which $v,w\in V_i$. We will use the following observation.

\begin{lemma}\label{lem:disconnected}
Let $\vf_1\zd\vf_k$ be solutions of $\cP_{|V_1}\zd\cP_{|V_k}$. Then the mapping $\vf: V\to D$ such that $\vf(v)=\vf_i(v)$ whenever $v\in V_i$ is a solution of $\cP$.
\end{lemma}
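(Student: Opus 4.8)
The plan is to show that $\vf$ satisfies every constraint of $\cP$. Since $\cP$ has been reduced to a binary instance (this is exactly the standing assumption after applying \textsc{2-Consistency} and replacing $\cP$ with $\bin(\cP)$), every constraint of $\cP$ has the form $\ang{(v,w),\rel_{vw}}$ for a pair of distinct variables $v,w\in V$. So it suffices to check that $(\vf(v),\vf(w))\in\rel_{vw}$ for each such constraint.

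I would split into two cases according to whether $v$ and $w$ lie in the same connected component of $G(\cP)$. If $v,w\in V_i$ for some $i$, then the constraint $\ang{(v,w),\rel_{vw}}$ is one of the constraints of $\cP_{|V_i}$, and since $\vf$ agrees with the solution $\vf_i$ of $\cP_{|V_i}$ on $V_i$, we get $(\vf(v),\vf(w))=(\vf_i(v),\vf_i(w))\in\rel_{vw}$ immediately. The remaining case is when $v$ and $w$ lie in different components, say $v\in V_i$ and $w\in V_j$ with $i\ne j$. Then the edge $vw$ is not present in $G(\cP)$, so by definition of $G(\cP)$ the constraint $C_{vw}$ is trivial, i.e.\ $\rel_{vw}=\cS_v\tm\cS_w$. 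Now $\vf(v)=\vf_i(v)\in\cS_v$ because $\vf_i$ is a solution of $\cP_{|V_i}$ and hence, by global consistency of $\cP_{|V_i}$ (or simply because $\cS_v=\pr_1\rel_{vw'}$ for a neighbour $w'$ of $v$ inside $V_i$, using 2-consistency), the value $\vf_i(v)$ lies in $\cS_v$; one must be slightly careful here if $V_i$ is a singleton, in which case $\cS_v$ is just the set of admissible values of $v$ and $\vf_i(v)$ is by definition one of them. Symmetrically $\vf(w)=\vf_j(w)\in\cS_w$. Hence $(\vf(v),\vf(w))\in\cS_v\tm\cS_w=\rel_{vw}$.

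Since every constraint of $\cP$ is of the binary form above and is satisfied by $\vf$ in both cases, $\vf$ is a solution of $\cP$, which is what we wanted.

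The step I expect to require the most care is the triviality argument in the cross-component case: one needs to make sure that for $v$ and $w$ in distinct components the binary constraint $\rel_{vw}$ really is the full product $\cS_v\tm\cS_w$, and that the value assigned by the component solution $\vf_i$ to $v$ genuinely lies in the ``global'' set $\cS_v$ rather than just in some partial-solution projection — this is where global consistency of the component instances (a consequence of the majority polymorphism, via Theorem~3.5 of \cite{Jeavons98:consist}) and the definition of $G(\cP)$ via trivial versus non-trivial constraints are both used. Everything else is bookkeeping.
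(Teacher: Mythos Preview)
Your proof is correct and follows essentially the same approach as the paper's: check each binary constraint $\ang{(v,w),\rel_{vw}}$, handle the same-component case by appealing to $\vf_i$ being a solution of $\cP_{|V_i}$, and handle the cross-component case by invoking triviality of $\rel_{vw}$. The paper's proof is terser and simply says ``$\rel_{vw}$ is trivial, and so $C_{vw}$ is satisfied'' without spelling out why $\vf_i(v)\in\cS_v$; your added discussion of that point (and of the singleton case) is extra care the paper omits.
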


\begin{proof}
  We need to check that all constraints of $\cP$ are satisfied.
  Consider $C_{vw}=\ang{(v,w),\rel_{vw}}$. If $v,w\in V_i$ for a
  certain $i$, then $(\vf(v),\vf(w))=(\vf_i(v),\vf_i(w))\in\rel_{vw}$
  since $\vf_i$ is a solution to $\cP_i$. If $v,w$ belong to different
  connected components, then $\rel_{vw}$ is trivial, and so $C_{vw}$ is
  satisfied.
\end{proof}

Suppose that $G(\cP)$ is connected and fix $v\in V$. In this case, the
graph is a clique, and therefore for any $w\in V$ the constraint
$C_{vw}$ is non-trivial. Note that due to 2-consistency, every
relation $\al_{\rel_{vw}}^1$ for $w\in V\setminus\{v\}$ is over the
same set $\cS_v$. Set $\eta_v=\bigvee_{w\in
  V\setminus\{v\}}\al_{\rel_{vw}}^1$; as every $\al_{\rel_{vw}}$ is
non-trivial, Lemma~\ref{lem:non-trivial} implies that $\eta_v$ is
non-trivial.

\begin{lemma}\label{lem:variable-non-trivial}
Suppose $G(\cP)$ is connected. Equivalence relations $\eta_v$ and $\al_{\rel_{vw}}^1$ (for any $w\in
V\setminus\{v\}$) are non-trivial.
\end{lemma}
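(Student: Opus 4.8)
Looking at Lemma~\ref{lem:variable-non-trivial}, I need to prove that when $G(\cP)$ is connected, both $\eta_v$ and each $\al_{\rel_{vw}}^1$ are non-trivial equivalence relations on $\cS_v$.

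\medskip

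The plan is to reduce everything to facts already established in the paragraph immediately preceding the statement. First I would observe that since $G(\cP)$ is connected and the graph is a clique (by transitivity, via Lemma~\ref{lem:2-consistency} and Lemma~\ref{lem:transitive}), every edge $vw$ with $w\in V\setminus\{v\}$ is present, which by definition of $G(\cP)$ means that every constraint $C_{vw}$ is non-trivial, i.e.\ $\rel_{vw}\subsetneq\cS_v\times\cS_w$. Since $\rel_{vw}$ is a thick mapping (it lies in $\dang\Gm$, which is a binary relation, and $\Gm$ is non-crossing decomposable / has a conservative Mal'tsev polymorphism by Lemma~\ref{lem:thick-mapping}), being non-trivial as a constraint forces it to be non-trivial as a thick mapping: if both $\al_{\rel_{vw}}^1$ and $\al_{\rel_{vw}}^2$ were total, then $\rel_{vw}$ would equal $(\pr_1\rel_{vw})\times(\pr_2\rel_{vw})=\cS_v\times\cS_w$ by 2-consistency, contradicting non-triviality. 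Hence $\al_{\rel_{vw}}^1$ is non-trivial for every $w\in V\setminus\{v\}$.

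\medskip

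For $\eta_v$, I would simply invoke what was already noted: all the equivalence relations $\al_{\rel_{vw}}^1$ for $w\in V\setminus\{v\}$ live on the same set $\cS_v$ (this follows from 2-consistency, since $\pr_1\rel_{vw}=\cS_v$ for all $w$), each is non-trivial by the previous paragraph, and $\eta_v$ is their join $\bigvee_{w\in V\setminus\{v\}}\al_{\rel_{vw}}^1$. Since $V\setminus\{v\}$ is nonempty (the graph is connected on at least two vertices when $v$ has a neighbour; if $V=\{v\}$ there is nothing to prove), the join contains at least one non-trivial $\al_{\rel_{vw}}^1$, and repeated application of Lemma~\ref{lem:non-trivial} shows that the join of finitely many non-trivial equivalence relations on $\cS_v$ is non-trivial. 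This yields that $\eta_v$ is non-trivial.

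\medskip

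I do not expect any serious obstacle here; the statement is essentially a bookkeeping consequence of the definitions together with Lemma~\ref{lem:non-trivial} and Lemma~\ref{lem:thick-mapping}. The one point requiring a little care is the equivalence between ``$C_{vw}$ non-trivial as a constraint'' and ``$\rel_{vw}$ non-trivial as a thick mapping,'' which is exactly where 2-consistency (giving $\rel_{vw}\subseteq\pr_1\rel_{vw}\times\pr_2\rel_{vw}$ with equality iff both associated equivalences are total) gets used. The degenerate case $|V|=1$ should be mentioned in passing so that $\eta_v$, defined as an empty join, is not an issue — or one simply assumes $|V|\ge 2$, which is harmless since otherwise the whole instance is trivial.
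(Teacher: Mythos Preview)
Your proposal is correct and is essentially the argument the paper gives: the paper does not supply a separate proof of this lemma, but rather states it as a summary of the paragraph immediately preceding it, where it already observes that connectedness makes each $C_{vw}$ non-trivial and then invokes Lemma~\ref{lem:non-trivial} to conclude that the join $\eta_v$ is non-trivial. Your added detail on why ``$C_{vw}$ non-trivial as a constraint'' forces $\al_{\rel_{vw}}^1$ to be non-trivial, and your remark on the degenerate case $|V|=1$, are reasonable elaborations but do not depart from the paper's line of reasoning.
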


\begin{lemma}\label{lem:partition}
Suppose $G(\cP)$ is connected.\\[1mm]
(1) For any $v,w\in V$ there is a one-to-one correspondence
$\psi_{vw}$ between $\cS_v\fac{\eta_v}$ and $\cS_w\fac{\eta_w}$ such
that for any solution $\vf$ of $\cP$ if $\vf(v)\in
A\in\cS_v\fac{\eta_v}$, then $\vf(w)\in \psi_{vw}(A)\in\cS_w\fac{\eta_w}$.\\[1mm] 
(2) The mappings $\psi_{vw}$ are consistent, i.e.\ for any $u,v,w\in
V$ we have $\psi_{uw}(x)=\psi_{vw}(\psi_{uv}(x))$ for every $x$.
\end{lemma}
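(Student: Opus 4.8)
The plan is to build the correspondence $\psi_{vw}$ directly from the thick mapping $\rel_{vw}$ and then verify the two claims. Since $G(\cP)$ is connected it is a clique, so $C_{vw}$ is non-trivial for all $v,w\in V$; as $\rel_{vw}\in\dang\Gm$ and $\Gm$ has a conservative Mal'tsev polymorphism, $\rel_{vw}$ is a thick mapping with respect to $\al^1_{\rel_{vw}}$, $\al^2_{\rel_{vw}}$, and a bijection $\vf_{vw}\colon\cS_v\fac{\al^1_{\rel_{vw}}}\to\cS_w\fac{\al^2_{\rel_{vw}}}$. The first thing I would check is that $\al^2_{\rel_{vw}}=\eta_w$: by 2-consistency and global consistency ($\Gm$ has a majority polymorphism) every solution-consistent fact about $w$ is visible in the binary relations; more precisely, for any $w'\in V\setminus\{v,w\}$ one has $\rel_{vw'}\sse\rel_{vw}\circ\rel_{ww'}$ (Lemma~\ref{lem:2-consistency}), and chasing a pair through this containment, together with the definition $\eta_w=\bigvee_{w'}\al^1_{\rel_{ww'}}$, should force $\al^1_{\rel_{ww'}}\sse\al^2_{\rel_{vw}}$ for every $w'$, hence $\eta_w\sse\al^2_{\rel_{vw}}$; the reverse inclusion holds because $\al^2_{\rel_{vw}}=\al^1_{\rel_{wv}}$ is one of the joinands defining $\eta_w$. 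By symmetry $\al^1_{\rel_{vw}}=\eta_v$ as well. Thus $\vf_{vw}$ already \emph{is} a one-to-one correspondence between $\cS_v\fac{\eta_v}$ and $\cS_w\fac{\eta_w}$, and I set $\psi_{vw}=\vf_{vw}$.

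For part~(1), the forcing statement is then immediate from global consistency: if $\vf$ is a solution with $\vf(v)=a\in A\in\cS_v\fac{\eta_v}$, then $(a,\vf(w))\in\rel_{vw}$, so by the thick-mapping property $\vf(w)^{\eta_w}=\vf(w)^{\al^2_{\rel_{vw}}}=\vf_{vw}(a^{\al^1_{\rel_{vw}}})=\psi_{vw}(A)$. I should also record that $\psi_{vw}$ is genuinely realized — every class $A$ contains the value of some solution — which follows because $\cP$ is globally consistent and $\cS_v=\pr_1\rel_{vw}$, so each $a\in\cS_v$ extends to a solution.

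For part~(2), fix $u,v,w\in V$ and $x\in\cS_u\fac{\eta_u}$. Pick $a\in x$; by global consistency extend to a solution $\vf$ with $\vf(u)=a$. Then by part~(1), $\vf(v)\in\psi_{uv}(x)$ and $\vf(w)\in\psi_{uw}(x)$; applying part~(1) again along the edge $vw$ gives $\vf(w)\in\psi_{vw}(\psi_{uv}(x))$. Since $\psi_{uw}(x)$ and $\psi_{vw}(\psi_{uv}(x))$ are both $\eta_w$-classes containing $\vf(w)$, they coincide. The only real obstacle is the identification $\al^2_{\rel_{vw}}=\eta_w$ in the first paragraph: it is where 2-consistency, the Mal'tsev polymorphism (via Lemma~\ref{lem:transitive}, ensuring the composed thick mappings stay non-trivial and their equivalence relations behave), and the join defining $\eta_w$ all have to be combined carefully; once that is in place, (1) and (2) are routine diagram chases using global consistency.
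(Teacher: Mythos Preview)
Your argument hinges on the identity $\al^2_{\rel_{vw}}=\eta_w$ (equivalently $\al^1_{\rel_{vw}}=\eta_v$), and this is simply false in general. Consider three variables $u,v,w$ with $\cS_u=\cS_w=\cS_v=\{0,1,2,3\}$, $\rel_{vw}=\rel_{vu}=\{0,1\}^2\cup\{2,3\}^2$, and $\rel_{wu}=\{(0,0),(1,1)\}\cup\{2,3\}^2$; you can check 2-consistency directly. Then $\al^1_{\rel_{wu}}$ has classes $\{0\},\{1\},\{2,3\}$ while $\eta_w$ has classes $\{0,1\},\{2,3\}$, so $\al^2_{\rel_{uw}}=\al^1_{\rel_{wu}}\ne\eta_w$. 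Your proposed derivation ``$\rel_{vw'}\sse\rel_{vw}\circ\rel_{ww'}$ forces $\al^1_{\rel_{ww'}}\sse\al^2_{\rel_{vw}}$'' does not survive this example: take the pair $(u,w)$ in the role of $(v,w)$ and $w'=v$; the containment holds but $\al^1_{\rel_{wv}}\not\sse\al^2_{\rel_{uw}}$.

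What the paper proves is the weaker (and correct) statement that the thick-mapping bijection $\vr\colon\cS_v/\al^1_{\rel_{vw}}\to\cS_w/\al^2_{\rel_{vw}}$ \emph{descends} to a bijection $\psi_{vw}\colon\cS_v/\eta_v\to\cS_w/\eta_w$, even though $\al^1_{\rel_{vw}}$ may be strictly finer than $\eta_v$. The argument is by contradiction: if $\ang{a,b}\in\eta_v$ but the images $a',b'$ under $\rel_{vw}$ are not $\eta_w$-equivalent, then by Lemma~\ref{lem:join-union} one finds a single $u$ with $\ang{a,b}\in\al^1_{\rel_{vu}}$, and then pushes the common witness $c\in\cS_u$ back through $\rel_{vu}\sse\rel_{vw}\circ\rel_{wu}$ to produce $d_1,d_2\in\cS_w$ with $\ang{a',d_1},\ang{b',d_2}\in\al^2_{\rel_{vw}}\sse\eta_w$ and $\ang{d_1,d_2}\in\al^1_{\rel_{wu}}\sse\eta_w$, yielding $\ang{a',b'}\in\eta_w$. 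Your part~(2) via global consistency is fine once part~(1) is fixed; the paper argues (2) by the containment $\rel_{uw}\sse\rel_{uv}\circ\rel_{vw}$, which is essentially the same.
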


\begin{proof}
(1) Let $\rel_{vw}$ be a thick mapping with respect to 
    a mapping $\vr$, and $\al=\al_{\rel_{vw}}^1$, $\al'=\al_{\rel_{vw}}^2$. 
    Recall that $\vr$ is a one-to-one mapping from
    $\cS_v\fac\al$ to $\cS_w\fac{\al'}$. Suppose, to obtain a contradiction, 
    that $\vr$ does not
    induce a one-to-one mapping between $\cS_v\fac{\eta_v}$ and
    $\cS_w\fac{\eta_w}$. Then without loss of generality there are
    $a,b\in\cS_v$ such that $\ang{a,b}\in\eta_v$, but for certain
    $a',b'\in\cS_w$ we have $(a,a'),(b,b')\in\rel_{vw}$ and
    $\ang{a',b'}\not\in\eta_w$. Since $\al'\sse\eta_w$,
    $\ang{a',b'}\not\in\al'$, hence $\ang{a,b}\not\in\al$. By 
    Lemma~\ref{lem:join-union} there is
    $u\in V$ such that $\rel_{vu}$ is a thick mapping with respect to
    some $\beta,\beta'$ and $\ang{a,b}\in\beta$. Therefore for some
    $c\in\cS_u$ we have $(a,c),(b,c)\in\rel_{vu}$. Since
    $\rel_{vu}\sse\rel_{vw}\circ\rel_{wu}$, there exist
    $d_1,d_2\in\cS_w$ satisfying the conditions
    $(a,d_1),(b,d_2)\in\rel_{vw}$ and
    $(d_1,c),(d_2,c)\in\rel_{wu}$. The first pair of inclusions imply
    that $\ang{a',d_1},\ang{b',d_2}\in\al'$, while the second one
    implies that $\ang{d_1,d_2}\in\eta_w$. Since $\al'\sse\eta_w$, we
    obtain $\ang{a',b'}\in\eta_w$, a contradiction.  

(2) If for some $u,v,w\in V$ there is a class $A\in\cS_u\fac{\eta_u}$
    such that $\psi_{vw}(\psi_{uv}(A))\ne\psi_{uw}(A)$ then
    $\rel_{uw}\not\sse\rel_{uv}\circ\rel_{vw}$, a contradiction. 
\end{proof}

Fix a variable $v_0$ of $\cP$ and take a $\eta_{v_0}$-class $A$. Let 
$\cP_A=(V,\Cc_A)$ denote the problem instance over the same 
variables, where for every
$v,w\in V$ the set $\Cc_A$ includes the constraint $\ang{(v,w),\rel^A_{vw}}$ with
$\rel^A_{vw}=\rel_{vw}\cap(\psi_{v_0v}(A)\tm\psi_{v_0w}(A))$.

\begin{lemma}\label{lem:unary restrictions}
Problem $\cP_A$ belongs to $\CCSP(\Gm)$.
\end{lemma}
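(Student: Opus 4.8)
The plan is simply to check that every constraint relation $\rel^A_{vw}$ of $\cP_A$ lies in $\Gm$; since $\cP_A$ has the same variable set $V$ as $\cP$, this is all that is needed for $\cP_A$ to be an instance of $\CCSP(\Gm)$ (any cardinality constraint $\pi$ with $\sum_{a\in D}\pi(a)=|V|$ may then be attached). Recall the standing assumptions of this section: $\Gm$ contains all unary relations on $D$ and all binary relations from $\dang\Gm$. So it suffices to show that $\rel^A_{vw}\in\dang\Gm$ for every pair $v,w\in V$; being binary, it then automatically lies in $\Gm$.

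First I would note that $\rel_{vw}\in\dang\Gm$. Indeed, $\cP$ has been made 2-consistent, so by Lemma~\ref{lem:consistency}(b), $\rel_{vw}\in\dang\Gm'\sse\dang\Gm$. Next, the sets $U_v:=\psi_{v_0v}(A)$ and $U_w:=\psi_{v_0w}(A)$ occurring in the definition of $\rel^A_{vw}$ are, by Lemma~\ref{lem:partition}, an $\eta_v$-class and an $\eta_w$-class, respectively; in particular each is a subset of $D$, i.e.\ a unary relation, and hence belongs to $\Gm$ by the standing assumption.

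Finally, $\rel^A_{vw}=\rel_{vw}\cap(U_v\tm U_w)$ is defined by the primitive positive formula
\[
\rel^A_{vw}(x,y)\ =\ \rel_{vw}(x,y)\ \wedge\ U_v(x)\ \wedge\ U_w(y),
\]
which uses only relations from $\dang\Gm$ and contains no existential quantifiers and no equalities; hence $\rel^A_{vw}\in\dang\Gm$, and being binary it lies in $\Gm$. As this holds for each pair $v,w\in V$, the instance $\cP_A$ uses only relations from $\Gm$, so $\cP_A$ is an instance of $\CCSP(\Gm)$.

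This statement is essentially bookkeeping, so there is no genuine obstacle; the only points requiring attention are invoking the correct standing assumptions on $\Gm$ (presence of all unary relations, and of all binary relations from $\dang\Gm$) and recalling that 2-consistency already guarantees $\rel_{vw}\in\dang\Gm$. Everything else is an immediate pp-definition.
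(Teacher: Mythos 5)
Your proof is correct and follows the same route as the paper's: invoke the standing assumptions that $\Gm$ contains all unary relations and all binary relations from $\dang\Gm$, note $\rel_{vw}\in\dang\Gm$ via Lemma~\ref{lem:consistency}, and conclude that $\rel^A_{vw}$ is a binary relation pp-definable in $\Gm$ and hence belongs to it. The only difference is that you spell out the pp-formula explicitly, which the paper leaves implicit.
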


\begin{proof}
It suffices to show that $\rel^A_{vw}\in\Gm$ for any $v,w\in V$. By 
Lemma~\ref{lem:consistency} $\rel_{vw}\in\dang\Gm$, and as we 
assumed that  $\Gm$ contains all binary relations from $\dang\Gm$, 
we have $\rel_{vw}\in\Gm$. By the assumption made, all 
unary relations including $\psi_{v_0v}(A)$ and $\psi_{v_0w}(A)$ 
belong to $\Gm$. Therefore relation $\rel^A_{vw}$ is pp-definable 
in $\Gm$, and, as a binary relation, belongs to it.
\end{proof}

\subsection{Algorithm: The decision problem}

We split the algorithm into two parts. Algorithm {\sc Cardinality}
(Figure~\ref{fig:alg-cardinality}) just ensures 2-consistency and initializes
a recursive process. The main part of the work is done by
{\sc Ext-Cardinality} (Figure~\ref{fig:alg-vector}).

Algorithm {\sc Ext-Cardinality} solves the more general problem of
computing the set of all cardinality constraints $\pi$ that can be
satisfied by a solution of $\cP$. Thus it can be used to solve
directly CSP with extended global cardinality constraints, see Preliminaries.

The algorithm considers three
cases. Step 2 handles the trivial case when the instance consists of a
single variable and there is only one possible value it can be
assigned. Otherwise, we decompose the instance either by partitioning
the variables or by partitioning the domain of the variables. If
$G(\cP)$ is not connected, then the satisfying assignments of $\cP$
can be obtained from the satisfying assignments of the connected
components. Thus a cardinality constraint $\pi$ can be satisfied if it
arises as the sum $\pi_1+\dots+\pi_k$ of cardinality constraints such
that the $i$-th component has a solution satisfying $\pi_i$. Instead
of considering all such sums (which would not be possible in
polynomial time), we follow the standard dynamic programming approach
of going through the components one by one, and determining all
possible cardinality constraints that can be satisfied by a solution
for the first $i$ components (Step 3).

If the graph $G(\cP)$ is connected, then we fix a variable $v_0$ and
go through each class $A$ of the partition $\eta_{v_0}$ (Step 4). If $v_0$ is
restricted to $A$, then this implies a restriction for every other
variable $w$. We recursively  solve the problem for the restricted
instance $\cP_A$ arising for each class $A$; if  constraint $\pi$
can be satisfied, then it can be satisfied for one of
the restricted instances.

The correctness of the algorithm follows from the discussion
above. The only point that has to be verified is that the instance
remains 2-consistent after the recursion. This is obvious if we
recurse on the connected components (Step 3). In Step 4, 2-consistency follows
from the fact that if $(a,b)\in R_{vw}$ can be extended by $c\in S_u$,
then in every subproblem either these three values satisfy the
instance restricted to $\{v,w,u\}$ or $a$, $b$, $c$ do not appear in
the domain of $v$, $w$, $u$, respectively. 

To show that the algorithm runs in polynomial time, observe first that
every step of the algorithm (except the recursive calls) can be done
in polynomial time. Here we use that $D$ is fixed, hence the size of
the set $\Pi$ is polynomially bounded. Thus we only need to bound the size of the
recursion tree. If we recurse in Step~3, then we produce instances
whose graphs are connected, thus it cannot be followed by recursing
again in Step~3. In Step~4, the domain of every variable is decreased:
by Lemma~\ref{lem:variable-non-trivial}, $\eta_w$ is nontrivial for
any variable $w$. Thus in any branch of the recursion tree, recursion
in Step 4 can occur at most
$|D|$ times, hence the depth of the recursion tree is $O(|D|)$. As the
number of branches is polynomial in each step, 
the size of the recursion tree is polynomial.
{\small
\begin{figure}[ht]
\begin{tabbing}
INPUT: \ \ \ \ \ \ \= An instance $\cP=(V,\Cc)$ of $\CCSP(\Gm)$ with
a cardinality constraint $\pi$\\ 
OUTPUT: \> YES if $\cP$ has a solution satisfying $\pi$, NO otherwise\\[2mm]
{\em Step 1.} \> {\bf apply} {\sc 2-Consistency} to $\cP$\\
{\em Step 2.} \> {\bf set} $\Pi:=${\sc Ext-Cardinality}$(\cP)$\\
{\em Step 3.} \> {\bf if} $\pi\in\Pi$ {\bf output} YES\\
\> {\bf else output} NO
\end{tabbing}
\caption{Algorithm {\sc Cardinality}.}\label{fig:alg-cardinality}
\end{figure}

\begin{figure}[ht]
\begin{tabbing}
INPUT: \ \ \ \ \ \ \= A 2-consistent instance $\cP=(V,\Cc)$ of $\CCSP(\Gm)$\\
OUTPUT: \> The set of cardinality constraints $\pi$ such that $\cP$
has a solution that satisfies $\pi$\\[2mm] 
{\em Step 1.} \> {\bf construct} the graph $G(\cP)=(V,E)$\\
{\em Step 2.} \> {\bf if} $|V|=1$ and the domain of this variable is
a singleton $\{a\}$ {\bf then do}\\
{\em Step 2.1} \> \ \ \ \ \= {\bf set} $\Pi:=\{\pi\}$ where $\pi(x)=0$
except for $\pi(a)=1$\\
{\em Step 3.} \> {\bf else if} $G(\cP)$ is disconnected and 
$G_1=(V_1,E_1)\zd G_k=(V_k,E_k)$ are\\
\> its connected components {\bf then do}\\ 
{\em Step 3.1} \> \> {\bf set} $\Pi:=\{\pi\}$ where $\pi:D\to\nat$ is given by $\pi(a)=0$ for $a\in D$\\
{\em Step 3.2} \> \> {\bf for} $i=1$ {\bf to} $k$ {\bf do}\\
{\em Step 3.2.1} \> \> \ \ \ \ \= {\bf set} $\Pi:=\Pi\;+\;${\sc
  Ext-Cardinality}$(\cP_{|V_i})$\\ 
\> \> {\bf endfor}\\
\> {\bf endif}\\
{\em Step 4.} \> {\bf else do}\\
{\em Step 4.1} \> \> {\bf for each} $v\in V$ {\bf find} $\eta_v$\\
{\em Step 4.2} \> \> {\bf fix} $v_0\in V$ {\bf and set} $\Pi:=\emptyset$\\
{\em Step 4.3} \> \> {\bf for each} $\eta_{v_0}$-class $A$ {\bf do}\\
{\em Step 4.3.1} \> \> \> {\bf set} $\cP_A:=(V,\Cc_A)$ where for every
$v,w\in V$ the set $\Cc_A$ includes\\ 
\> \> \> the constraint $\ang{(v,w),\rel_{vw}\cap(\psi_{v_0v}(A)\tm\psi_{v_0w}(A))}$\\
{\em Step 4.3.2} \> \> \> {\bf set} $\Pi:=\Pi\; \cup \;${\sc Ext-Cardinality}$(\cP_A)$\\
\> \> {\bf endfor}\\
\> {\bf enddo}\\
{\em Step 4.} \> {\bf output} $\Pi$
\end{tabbing}
\caption{Algorithm {\sc Ext-Cardinality}.}\label{fig:alg-vector}
\end{figure}
}
\subsection{Solving the counting problem}

In this section we observe that algorithm {\sc Cardinality} can be
modified so that it also solves counting CSPs with global cardinality constraints,
provided $\Gm$ satisfies the conditions of
Theorem~\ref{the:main-poly}. 

The counting algorithm works
very similar to algorithm {\sc Cardinality}, except that instead of
determining the set of satisfiable cardinality constraints, it keeps track of the
number of solutions that satisfy every cardinality constraint
possible. It considers the same 3 cases. In the trivial case of a
problem with one variable and one possible value for this variable,
the algorithm assigns 1 to the cardinality constraint satisfied by the
only solution of the problem and 0 to all other cardinality
constraints. In the case of disconnected graph $G(\cP)$ if a
cardinality constraint can be represented in the form
$\pi=\pi_1+\ldots+\pi_k$, then solutions on the connected components of
$G(\cP)$ satisfying $\vc \pi k$, respectively, contribute the product
of their numbers into the number of solutions satisfied by $\pi$. We again use the dynamic programming approach, and, for each $i$ compute the number of solutions on $V_1\cup\ldots\cup V_i$ satisfying every possible cardinality constraint. Observe, that the set of cardinality constraints considered is also changed dynamically, as the number of variables grows.
Finally, if $G(\cP)$ is connected, then the different restrictions
have disjoint sets of solutions, hence the numbers of solutions are computed independently.
{\small
\begin{figure}[ht]
\begin{tabbing}
INPUT: \ \ \ \ \ \ \= An instance $\cP=(V,\Cc)$ of $\NCCSP(\Gm)$ with a cardinality constraint $\pi$\\ 
OUTPUT: \> The number of solutions of $\cP$ that satisfy $\pi$\\[2mm]
{\em Step 1.} \> {\bf apply} {\sc 2-Consistency} to $\cP$\\
{\em Step 2.} \> {\bf set} $\vr:=${\sc \#Ext-Cardinality}$(\cP)$\\
\> \ \ \ \ \ \% $\vr(\pi')$ is the number of solutions of $\cP$ satisfying cardinality constraint $\pi'$\\
{\em Step 3.} \> {\bf output} $\vr(\pi)$\\
\end{tabbing}
\caption{Algorithm {\sc \#Cardinality}.}\label{fig:count-card}
\end{figure}

\begin{figure}[ht]
\begin{tabbing}
INPUT: \ \ \ \ \ \ \= A 2-consistent instance $\cP=(V,\Cc)$ of $\NCCSP(\Gm)$\\ 
OUTPUT: \> Function $\vr$ that assigns to every cardinality constraint $\pi$ with $\sum_{a\in D}\pi(a)=|V|$,\\
\> the number $\vr(\pi)$ of solutions of $\cP$ that satisfy $\pi$\\[2mm] 
{\em Step 1.} \> {\bf construct} the graph $G(\cP)=(V,E)$\\
{\em Step 2.} \> {\bf if} $|V|=1$ and the domain of this variable is a singleton $\{a\}$ {\bf then do}\\ 
{\em Step 2.1} \> \ \ \ \ \= {\bf set} $\vr(\pi):=1$ where $\pi(x)=0$ except 
for $\pi(a)=1$, and $\vr(\pi'):=0$ for all $\pi'\ne\pi$\\
\> \> with $\sum_{x\in D}\pi'(x)=1$\\ 
{\em Step 3.} \> {\bf else if} $G(\cP)$ is disconnected and 
$G_1=(V_1,E_1)\zd G_k=(V_k,E_k)$ are its\\
\> connected components {\bf then do}\\ 
{\em Step 3.1} \> \> {\bf set}
$\Pi:=\{\pi\}$ where $\pi:D\to\nat$ is given by $\pi(a)=0$ for $a\in D$, $\vr(\pi):=1$ for $\pi\in\Pi$\\
{\em Step 3.2} \> \> {\bf for} $i=1$ {\bf to} $k$ {\bf do}\\
{\em Step 3.2.1} \> \> \ \ \ \ \= {\bf set} $\Pi':=\{\pi:D\to\nat\mid
\sum_{a\in D}\pi(a)=|V_i|\}$ and $\vr':=${\sc
  \#Ext-Cardinality}$(\cP_{|V_i})$\\ 
{\em Step 3.2.2} \> \> \> {\bf set} $\Pi'':=\{\pi:D\to\nat\mid
\sum_{a\in D}\pi(a)=|V_1|+\ldots+|V_i|\}$, $\vr''(\pi):=0$ for $\pi\in\Pi''$\\
{\em Step 3.2.3} \> \> \> {\bf for each} $\pi\in\Pi$ {\bf and}
$\pi'\in\Pi'$ {\bf set} $\vr''(\pi+\pi'):=\vr''(\pi+\pi')+\vr(\pi)\cdot\vr'(\pi')$\\ 
{\em Step 3.2.4} \> \> \> {\bf set} $\Pi:=\Pi''$, $\vr:=\vr''$\\
\> \> {\bf endfor}\\
\> {\bf endif}\\
{\em Step 4.} \> {\bf else do}\\
{\em Step 4.1} \> \> {\bf for each} $v\in V$ {\bf find} $\eta_v$\\
{\em Step 4.2} \> \> {\bf fix} $v_0\in V$ {\bf and set} $\vr(\pi):=0$
for $\pi$ with $\sum_{a\in D}\pi(a)=|V|$\\
{\em Step 4.3} \> \> {\bf for each} $\eta_{v_0}$-class $A$ {\bf do}\\
{\em Step 4.3.1} \> \> \> {\bf set} $\cP_A:=(V,\Cc_A)$ where for every 
$v,w\in V$ the set $\Cc_A$ includes the constraint\\ 
\> \> \> $\ang{(v,w),\rel_{vw}\cap(\psi_{v_0v}(A)\tm\psi_{v_0w}(A))}$\\
{\em Step 4.3.2} \> \> \> {\bf set} $\vr':=${\sc \#Ext-Cardinality}$(\cP_A)$\\
{\em Step 4.3.3} \> \> \> {\bf set} $\vr(\pi):=\vr(\pi)+\vr'(\pi)$\\
\> \> {\bf endfor}\\
\> {\bf enddo}\\
{\em Step 4.} \> {\bf output} $\vr$
\end{tabbing}
\caption{Algorithm {\sc \#Ext-Cardinality}.}\label{fig:count-ext-card}
\end{figure}
}
\section{Definable relations, constant relations, and the complexity
  of CCSP}
\label{sec:defin-relat-const}
We present two reductions that will be crucial for
the proofs in Section~\ref{sec:hardness}. In
Section~\ref{sec:pp-definitions}, we show that adding relations
that are pp-definable (without equalities) does not make the problem
harder, while in Section~\ref{sec:constants}, we show the
same for unary constant relations.

\subsection{Definable relations and the complexity of cardinality
  constraints}\label{sec:pp-definitions} 

\begin{theorem}\label{the:decision-pp-defin}
Let $\Gm$ be a constraint language and $\rel$ a relation
pp-definable in $\Gm$ without equalities. Then
$\CCSP(\Gm\cup\{\rel\})$ is polynomial time reducible to $\CCSP(\Gm)$. 
\end{theorem}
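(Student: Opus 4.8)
The plan is to carry out the familiar substitution of the pp-definition of $\rel$ by relations from $\Gm$, the only genuinely delicate point being to keep track of the cardinality contributed by the auxiliary (existentially quantified) variables. Concretely, I would first fix an equality-free pp-definition $\rel(\vc xn)\equiv\exists y_1,\ldots,y_m\,(Q_1\wedge\cdots\wedge Q_\ell)$ of $\rel$ in $\Gm$, where each $Q_i$ is a constraint over $\Gm$ on the variables $\vc xn,y_1,\ldots,y_m$. Given an instance $(\cP,\pi)$ of $\CCSP(\Gm\cup\{\rel\})$ with $\cP=(V,\Cc)$, let $t$ be the number of constraints of $\cP$ that use $\rel$. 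For the $j$-th such constraint, with scope $(v^j_1,\ldots,v^j_n)$, I introduce $m$ fresh variables $y^j_1,\ldots,y^j_m$ and replace the constraint by the $\Gm$-constraints obtained from $Q_1,\ldots,Q_\ell$ by substituting $v^j_i$ for $x_i$ and $y^j_i$ for $y_i$. This produces an instance $\cP'$ of $\CSP(\Gm)$ on $|V|+tm$ variables whose solutions are exactly the mappings obtained from a solution $\vf$ of $\cP$ by extending $\vf$ to the fresh variables with a witness for each $\rel$-constraint. Here it is essential that the pp-definition uses no equalities, so that $\cP'$ really is an instance of $\CSP(\Gm)$; note that the case where $\rel$ admits an equality-free pp-definition with no existential quantifiers at all is immediate, since then the $\rel$-constraint is just rewritten as a conjunction of $\Gm$-constraints on copies of the original variables, no fresh variables appear, and $\pi$ is unchanged.

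The step I expect to be the main obstacle is dealing with the cardinality count: the $tm$ fresh variables are counted by $\pi$, but their values are not determined by $\vf$. Since $D$ is fixed, there are only polynomially many vectors $\sigma\colon D\to\nat$ with $\sum_{a\in D}\sigma(a)=tm$, so one is tempted to decide, for each such $\sigma$, whether $\cP'$ has a solution in which the fresh variables realize $\sigma$ and the whole assignment realizes $\pi+\sigma$. The catch is that a bare $\CCSP(\Gm)$ query on $\cP'$ with cardinality $\pi+\sigma$ does not pin the fresh block to $\sigma$: a solution with total cardinality $\pi+\sigma$ may place cardinality $\pi'\neq\pi$ on the original variables, which would give a false positive. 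The remedy I would use is to arrange that the fresh block has a \emph{rigid} cardinality type, i.e.\ to choose the pp-definition of $\rel$ so that in every witness assignment the tuple $(y_1,\ldots,y_m)$ realizes one fixed vector $\gamma$ of weight $m$; this can be obtained by padding the witness block with further existentially quantified variables suitably constrained by relations from $\dang\Gm'$ (for instance, forcing the enlarged witness block to realize one fixed multiset that dominates every cardinality vector occurring among the original witnesses). Once this is arranged, the fresh variables of $\cP'$ always realize $t\gamma$, hence $\cP$ has a solution satisfying $\pi$ if and only if $\cP'$ has a solution satisfying $\pi+t\gamma$, which is a single call to $\CCSP(\Gm)$.

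Thus the proof splits into three pieces: the syntactic substitution (routine, given that equalities are absent), the extraction of a rigid equality-free pp-definition of $\rel$ over $\Gm$ (the technical heart, and the part I would allocate most effort to), and a bookkeeping check that the whole construction runs in time polynomial in the size of $\cP$, using that $|D|$, $n$, $m$, $\ell$ are all constants independent of the instance.
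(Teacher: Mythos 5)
The difficulty you isolate --- that the auxiliary variables' contribution to the cardinality count is not controlled --- is exactly the crux, but the remedy you propose does not go through, and the paper resolves it by a completely different device. Your plan hinges on extracting a ``rigid'' equality-free pp-definition in which every witness block realizes one fixed cardinality vector $\gamma$, to be obtained by ``padding the witness block with further existentially quantified variables suitably constrained by relations from $\dang\Gm'$.'' But you only have $\Gm$ to constrain those padding variables with, and nothing guarantees that $\Gm$ can express ``this block of variables realizes the multiset $M$'' or even couple a padding variable to the witness in any useful way. If you leave the padding variables unconstrained (which is what the paper in fact does with its $w_i^1,\ldots,w_i^{|D|-1}$), rigidity holds only in the forward direction: given a solution of $\cP$ you may \emph{choose} the padding so that each auxiliary block covers $D$ exactly once, but in the backward direction a solution of the reduced instance is free to put all auxiliary variables on a single value, and your claimed equivalence ``$\cP$ satisfies $\pi$ iff $\cP'$ satisfies $\pi+t\gamma$'' fails in the ``if'' direction. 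So the ``technical heart'' you defer is not a routine lemma; it is, as far as I can see, false in general, and no alternative mechanism is supplied.

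The paper's proof (by induction on the pp-formula, eliminating one existential quantifier at a time) instead \emph{amplifies the original variables}: each $v\in V$ is replaced by a block $W_v$ of $q|D|$ copies ($q$ the number of $\rel$-constraints), every constraint is replicated over all combinations of copies, and the target cardinality becomes $\pi'=q|D|\cdot\pi+q$. The scaling makes the $q|D|$ auxiliary variables too few to shift any value's count by a full unit of $q|D|$, so for each $a$ with $\pi(a)\neq 0$ the value $a$ must appear in at least $\pi(a)$ distinct blocks $W_v$; a Hall's-theorem matching between blocks and (copies of) values then extracts an assignment $\vf$ on $V$ that picks, for each $v$, a value actually occurring on $W_v$ --- which satisfies all constraints because all combinations were replicated --- and realizes $\pi$ exactly. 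This amplification-plus-matching step is the idea your proposal is missing, and without it (or a genuine proof of your rigidity claim) the reduction as you describe it is not correct.
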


\begin{proof}
We proceed by induction on the structure of pp-formulas. The base case
of induction is given by $\rel\in\Gm$. We need to consider two cases. 

\medskip

{\sc Case 1.} $\rel(\vc xn)=\rel_1(\vc xn)\wedge\rel_2(\vc xn)$.

\smallskip

Observe that by introducing `fictitious' variables for predicates
$\rel_1,\rel_2$ we may assume that both relations involved have the
same arity. A reduction from $\CCSP(\Gm\cup\{\rel\})$ to $\CCSP(\Gm)$
is trivial: in a given instance of the first problem replace each
constraint of the form $\ang{(\vc vn),\rel}$ with two constraints
$\ang{(\vc vn),\rel_1}$ and $\ang{(\vc vn),\rel_2}$. 

\medskip

{\sc Case 2.} $\rel(\vc xn)=\exists x \rel'(\vc xn, x)$.

\medskip

Let $\cP=(V,\Cc)$ be a $\CCSP(\Gm\cup\{\rel\})$ instance. Without loss
of generality let $\vc Cq$ be the constraints that involve
$\rel$. Instance $\cP'$ of $\CCSP(\Gm)$ is constructed as follows. 
\begin{enumerate}[$\bullet$]
\item
Variables: Replace every variable $v$ from $V$ with a set $W_v$ of variables of
size $q|D|$ and introduce a set of $|D|$ variables for each constraint
involving $R$. More formally, 
$$
W=\bigcup_{v\in V} W_v\cup\{\vc wq\}\cup \bigcup_{i=1}^q \{w^1_i\zd w^{|D|-1}_i\}.
$$
\item
Non-$\rel$ constraints: For every $C_i=\ang{(\vc v\ell),\relo}$ with
$i>q$, introduce all possible constraints of the form $\ang{(\vc
  u\ell),\relo}$, where $u_j\in W_{v_j}$ for $j\in\{1\zd\ell\}$. 
\item
$\rel$ constraints: For every $C_i=\ang{(\vc v\ell),\rel}$, $i\le q$,
  introduce all possible constraints of the form $\ang{(\vc
    u\ell,w_i),\rel'}$, where $u_j\in W_{v_j}$, $j\in\{1\zd \ell\}$. 
\end{enumerate}

{\sc Claim 1.} If $\cP$ has a solution satisfying cardinality
constraint $\pi$ then $\cP'$ has a solution satisfying the cardinality
constraint $\pi'=q|D|\cdot\pi+q$. 

\smallskip 

Let $\vf$ be a solution of $\cP$ satisfying $\pi$. It is
straightforward to verify that the following mapping $\psi$ is a
solution of $\cP'$ and satisfies $\pi'$: 
\begin{enumerate}[$\bullet$]
\item
for each $v\in V$ and each $u\in W_v$ set $\psi(u)=\vf(v)$;
\item
for each $w_i$, where $C_i=\ang{(\vc vn),\rel}$, set $\psi(w_i)$ to be
a value such that\linebreak  $(\vf(v_1)\zd\vf(v_n),\psi(w_i))\in\rel'$. 
\item
for each $i\le q$ and $j\le|D|-1$ set $\psi(w_i^j)$ to be such that
$\{\psi(w_i),\psi(w_i^1)\zd \psi(w_i^{|D|-1|})\}=D$. 
\end{enumerate}

{\sc Claim 2.} If $\cP'$ has a solution $\psi$ satisfying the
cardinality constraint $\pi'=q|D|\cdot\pi+q$, then $\cP$ has a
solution satisfying constraint $\pi$.

\smallskip

Let $a\in D$ and $U_a(\psi)=\psi^{-1}(a)=\{u\in W\mid
\psi(u)=a\}$. Observe first that if $\vf:V\to D$ is a mapping such that
$U_{\vf(v)}(\psi)\cap W_v\neq\emptyset$ for every $v\in V$ (i.e.,
$\psi(v')=\vf(v)$ for at least one variable $v'\in W_v$), then 
$\vf$ satisfies all the constraints of $\cP$. Indeed, consider a constraint $C=\ang{\bs,\relo}$
of $\cP$ where $\relo\neq\rel$. Let $\bs=(v_1,\dots,v_\ell)$. For every $v_i$, there is a
$v'_i\in W_{v_i}$ such that $\vf(v_i)=\psi(v'_i)$. By the way $\cP'$
is defined, it contains a constraint $C'=\ang{\bs',\relo}$ where
$\bs'=(v'_1,\dots,v'_\ell)$. Now the fact that $\psi$ satisfies $C'$
immediately implies that $\vf$ satisfies $C$: 
$(\vf(v_1)\zd\vf(v_\ell))=(\psi(v'_1)\zd\psi(v'_\ell))\in\relo$ .
The argument is similar if $\relo=\rel$.

We show that it is possible to construct such a $\vf$ that also
satisfies the cardinality constraint $\pi$.  Since $|W_v|=q|D|$, for any $a\in D$ with $\pi(a)\ne0$, even
if set $U_a(\psi)$ contains all $q|D|$ variables of the form $w_i$ and
$w_i^j$, it has to intersect at least $\pi(a)$ sets $W_v$ (as
$(\pi(a)-1)q|D|+q|D|<\pi'(a)=\pi(a)\cdot q|D|+q$).  Consider the bipartite
graph $G=(T_1\cup T_2,E)$, where $T_1,T_2$ is a bipartition and
\begin{enumerate}[$\bullet$]
\item
$T_1$ is the set of variables $V$;
\item $T_2$ is constructed from the set $D$ of values by taking
  $\pi(a)$ copies of each value $a\in D$;
\item
edge $(v,a')$, where $a'$ is a copy of $a$ from $T_2$, belongs to $E$ if
and only if $W_v\cap U_a(\psi)\ne\emptyset$. 
\end{enumerate}
Note that $|T_1|=|T_2|$ and a perfect matching $E'\sse E$ corresponds
to a required mapping $\vf$: $\vf(v)=a$ if $(v,a')\in E'$ for some
copy $a'$ or $a$. 

Take any subset $S\sse T_2$, let $S$ contains some copies of $\vc as$.
Then by the observation above, $S$
has at least  $\pi(a_1)+\ldots+\pi(a_s)$ neighbours in $T_1$. Since $S$ contains
at most $\pi(a_i)$ copies of $a_i$,
$$
\pi(a_1)+\ldots+\pi(a_s)\ge |S|.
$$
By Hall's Theorem on perfect matchings in bipartite graphs, $G$ has a
perfect matching, concluding the proof that the required $\vf$ exists.
\end{proof}

\subsection{Constant relations and the complexity of cardinality
  constraints}\label{sec:constants} 

Let $D$ be a set, and let $a\in D$. The \emph{constant relation} $C_a$
is the unary relation that contains only one tuple, $(a)$. If a
constraint language $\Gm$ over $D$ contains all the constant
relations, then they can be used in the corresponding constraint
satisfaction problem to force certain variables to take some fixed
values. The goal of this section is to show that for any constraint
language $\Gm$ the problem $\CCSP(\Gm\cup\{C_a\mid a\in D\})$ is
polynomial time reducible to $\CCSP(\Gm)$. For the ordinary decision
CSP such a reduction exists when $\Gm$ does not have unary
polymorphisms that are not permutations, see
\cite{Bulatov05:classifying}. 

We make use of the
notion of multi-valued morphisms, a generalization of homomorphisms,
that in a different context has appeared in the literature for a while 
(see, e.g.\ \cite{Rosenberg98:hyperstructures}) under the guise hyperoperation.
Let $\rel$ be a (say, $n$-ary) relation on a set $D$, and let $f$ be a
mapping from $D$ to $2^D$, the powerset of $D$. Mapping $f$ is said to
be a \emph{multi-valued morphism} of $\rel$ if for any tuple $(\vc
an)\in\rel$ the set $f(a_1)\tm\ldots\tm f(a_n)$ is a subset of
$\rel$. Mapping $f$ is a multi-valued morphism of a constraint
language $\Gm$ if it is a multi-valued morphism of every relation in
$\Gm$. 
For a multi-valued morphism $f$ and set $A\subseteq D$, we
define $f(A):=\bigcup_{a\in A}f(a)$. The product of two
multi-valued morphisms $f_1$ and $f_2$ is defined by $(f_1\circ
f_2)(a):=f_1(f_2(a))$ for every $a\in D$. We denote by $f^i$ the
$i$-th power of $f$, with the convention that $f^0$ maps $a$ to
$\{a\}$ for every $a\in A$. 

\begin{theorem}\label{the:adding-constants}
Let $\Gm$ be a finite constraint language over a set $D$. Then
$\CCSP(\Gm\cup\{C_a\mid a\in D\})$ is polynomial time reducible to $\CCSP(\Gm)$. 
\end{theorem}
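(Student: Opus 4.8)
The plan is to reduce $\CCSP(\Gm\cup\{C_a\mid a\in D\})$ to $\CCSP(\Gm)$ by a ``substitution'' argument: given an instance $\cP$ with some variables pinned to constants via the relations $C_a$, I want to build an instance $\cP'$ of $\CCSP(\Gm)$ (with a new cardinality constraint $\pi'$) so that solutions of $\cP$ satisfying $\pi$ correspond to solutions of $\cP'$ satisfying $\pi'$. The constant relations cannot be used directly, so the first step is to replace each pinned variable by a bunch of ordinary variables and use multi-valued morphisms to argue that ``most'' of them can be forced to the intended value anyway. Concretely, I would first observe that $\Gm$ (being finite) has only finitely many multi-valued morphisms; consider the set $\cF$ of multi-valued morphisms $f$ of $\Gm$ for which $f(a)\ni a$ for every $a\in D$ (the identity is one, so $\cF\ne\emptyset$), and look at compositions/powers of these. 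The key structural fact I expect to need is that for a suitable multi-valued morphism $g$ (e.g.\ a high power of the ``union'' of all members of $\cF$, which stabilizes), $g$ has the property that for each $a$ there is a canonical representative, and the blocks $g(a)$ behave well under the relations of $\Gm$; this is exactly where the hypotheses on $\Gm$ (majority and conservative Mal'tsev polymorphisms, equivalently non-crossing decomposability) should come in, via Lemmas~\ref{lem:triple}--\ref{lem:join-union} and the thick-mapping structure of binary pp-definable relations.

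The construction of $\cP'$ I have in mind parallels the proof of Theorem~\ref{the:decision-pp-defin}: blow up each variable $v$ of $\cP$ into a set $W_v$ of $N$ fresh copies for a uniform $N$ chosen large enough (polynomial in the instance size and depending only on $|D|$), replace every non-constant constraint $\ang{(v_1,\dots,v_\ell),\relo}$ by all constraints $\ang{(u_1,\dots,u_\ell),\relo}$ with $u_j\in W_{v_j}$, and simply \emph{drop} every constraint of the form $\ang{v,C_a}$. The new cardinality constraint is $\pi' = N\cdot\pi$ adjusted by the pinned variables. For the forward direction, a solution $\vf$ of $\cP$ satisfying $\pi$ lifts by setting $\psi(u)=\vf(v)$ for $u\in W_v$, and this satisfies $\pi'=N\pi$. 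For the backward direction, given a solution $\psi$ of $\cP'$ satisfying $\pi'$, I want to extract $\vf$: since each block $W_v$ has $N$ copies and the cardinality budget for each value is tight up to lower-order terms, for each pinned variable $v$ (pinned to $a$) the block $W_v$ must contain a copy taking value $a$ — here is where the multi-valued morphism/block structure is used to guarantee that a globally consistent solution can be perturbed so that the pinned blocks actually hit their designated constants while preserving the cardinality vector $\pi$ and all constraints. As in Theorem~\ref{the:decision-pp-defin}, I expect to finish by setting up a bipartite graph between $V$ and $\pi(a)$-many copies of each $a\in D$, with $v$ adjacent to a copy of $a$ iff $W_v\cap\psi^{-1}(a)\ne\emptyset$ (and forcing the pinned ones), and invoking Hall's theorem to produce the matching that defines $\vf$.

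The main obstacle is the backward direction for the pinned variables: dropping the $C_a$ constraints is only sound if we can argue that in any solution of $\cP'$ hitting the cardinality vector $N\pi$ we can recover an assignment in which each pinned variable \emph{can} be set to its constant. Unlike the purely existential-quantifier elimination in Theorem~\ref{the:decision-pp-defin}, there is no a priori reason a copy of a variable pinned to $a$ takes value $a$; one must use the rich polymorphism structure of $\Gm$ to ``rotate'' the value into its block. I expect the crux to be a lemma roughly stating: if $\Gm$ has conservative majority and Mal'tsev polymorphisms, then for every $a\in D$ there is a multi-valued morphism $f_a$ of $\Gm$ with $f_a(a)=\{a\}$ and $f_a(b)\ni b$ for all $b$ — so applying $f_a$ to a solution lets us pin the block for $a$ without disturbing anything else or changing the cardinality count of the other values (and only increasing that of $a$, which we arrange to be fine by the choice of $N$ and by handling values coordinate by coordinate, one pinned block at a time). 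Verifying the existence and compatibility of these $f_a$'s, and checking that the bookkeeping of cardinalities survives applying several of them in succession, is the technical heart of the argument; everything else is a routine adaptation of the blow-up-and-Hall's-theorem scheme already used above.
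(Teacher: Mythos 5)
There is a genuine gap, and it starts with the hypotheses: Theorem~\ref{the:adding-constants} is stated for an \emph{arbitrary} finite constraint language $\Gm$, with no polymorphism assumptions, and it must be so --- it is invoked in Section~\ref{sec:hardness} precisely for languages $\Gm$ that \emph{fail} the tractability conditions, in order to add constants before deriving NP-hardness. Your plan leans repeatedly on ``the hypotheses on $\Gm$ (majority and conservative Mal'tsev polymorphisms)'' and on a conjectured lemma producing, from those polymorphisms, a multi-valued morphism $f_a$ with $f_a(a)=\{a\}$ and $f_a(b)\ni b$. In the setting where the theorem is actually needed, none of that structure is available, so the argument cannot get off the ground. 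A second, independent problem is that simply \emph{dropping} the constraints $\ang{v,C_a}$ destroys all information tying the pinned variable to $a$; no cardinality bookkeeping on uniform-size blocks $W_v$ can recover which specific variables were supposed to be pinned, since the cardinality constraint only counts values globally.

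The paper's proof avoids both issues by a different mechanism. It attaches a separate gadget $\MVM(\Gm,n)$ with blocks $V_{d_1},\dots,V_{d_k}$ of \emph{strictly decreasing} sizes $n^{|D|+1-i}$, deletes the pinned variables, and reroutes every constraint that mentioned a pinned variable into the block $V_a$ of the gadget. The multi-valued morphism is then not derived from any polymorphism of $\Gm$: it is read off from the solution $\psi$ itself, via $f(d_i)=\{\psi(v)\mid v\in V_{d_i}\}$, which is a multi-valued morphism of $\Gm$ by construction of the gadget (Claim 1). The geometric block sizes are what make the counting work: they force at least one variable of $V_a$ to take value $a$ (so the rerouted constraints behave like $C_a$), they let one prove by induction that $f^*(b)=f(b)\cup\{b\}$ is again a multi-valued morphism (Claim 2), and they drive the final exchange argument (Claim 3) that repairs the cardinalities by shifting values along $f$-chains between the gadget and the main instance. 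With uniform block sizes $N$, the inductive inequality $\sum_{\ell>i}\pi'(d_\ell)<n^{|D|+1-i}$ that underlies Claims 2 and 3 is unavailable. If you want to salvage your outline, the essential changes are: keep the constant-constraint information by rerouting rather than dropping, make the block sizes a decreasing geometric scale indexed by the values of $D$, and extract the multi-valued morphism from the solution of the gadget rather than from hypotheses on $\Gm$.
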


\begin{proof}
Let $D=\{\vc dk\}$ and
$a=d_1$. We show that $\CCSP(\Gm\cup\{C_a\})$ is polynomial time reducible to
$\CCSP(\Gm)$. This clearly implies the result. We make use of the
following multi-valued morphism gadget $\MVM(\Gm,n)$ (i.e.\ a CSP
instance). Observe that it is somewhat similar to the \emph{indicator
  problem} \cite{Jeavons99:expressive}. 
\begin{enumerate}[$\bullet$]
\item
The set of variables is $V(n)=\displaystyle\bigcup_{i=1}^k
V_{d_i}$, where $V_{d_i}$ contains $n^{|D|+1-i}$ elements.
All sets $V_{d_i}$ are assumed
to be disjoint. 
\item
The set of constraints is constructed as follows: For every (say,
$r$-ary) $\rel\in\Gm$ and every $(\vc ar)\in\rel$ we include all
possible constraints of the form $\ang{(\vc vr),\rel}$ where $v_i\in
V_{a_i}$ for $i\in\{1\zd r\}$. 
\end{enumerate}

Now, given an instance $\cP=(V,\Cc)$ of $\CCSP(\Gm\cup\{C_a\})$, we
construct an instance $\cP'=(V',\Cc')$ of $\CCSP(\Gm)$.  
\begin{enumerate}[$\bullet$]
\item
Let $W\sse V$ be the set of variables $v$, on which the constant relation
$C_a$ is imposed, that is, $\Cc$ contains the constraint
$\ang{(v),C_a}$. Set $n=|V|$. The set $V'$ of variables of
$\cP'$ is the disjoint union of the set $V(n)$ of variables of
$\MVM(\Gm,n)$ and $V\setminus W$. 
\item
The set $\Cc'$ of constraints of $\cP'$ consists of three parts:
\begin{enumerate}[$-$]
\item[(a)]
$\Cc'_1$, the constraints of $\MVM(\Gm,n)$;
\item[(b)]
$\Cc'_2$, the constraints of $\cP$ that do not include variables from $W$;
\item[(c)]
$\Cc'_3$, for any constraint $\ang{(\vc vn),\rel}\in\Cc$ whose scope
  contains variables constrained by $C_a$ (without loss of generality
  let $\vc v\ell$ be such variables), $\Cc'_3$ contains all constraints
  of the form $\ang{(\vc w\ell,v_{\ell+1}\zd v_n),\rel}$, where $\vc w\ell\in
  V_a$. 
\end{enumerate}
\end{enumerate}

We show that $\cP$ has a solution satisfying a cardinality constraint
$\pi$ if and only if $\cP'$ has a solution satisfying cardinality
constraint $\pi'$ given by 
$$
\pi'(d_i)=\left\{\begin{array}{ll}
\pi(a)+(|V_a|-|W|), & \text{if $i=1$},\\
\pi(d_i)+|V_{d_i}|, & \text{otherwise}.
\end{array}\right.
$$

Suppose that $\cP$ has a right solution $\vf$. Then a required
solution for $\cP'$ is given by 
$$
\psi(v)=\left\{\begin{array}{ll}
\vf(v), & \text{if $v\in V\setminus W$},\\
d_i, & \text{if $v\in V_{d_i}$}.
\end{array}\right.
$$
It is straightforward that $\psi$ is a solution to $\cP'$ and that it satisfies $\pi'$.

Suppose that $\cP'$ has a solution $\psi$ that satisfies $\pi'$. 
Since $\pi'(a)=\pi(a)+n^{|D|}-|W|\ge n^{|D|}-n>|V'\setminus V_a|$, there is $v\in V_a$ such
that $\psi(v)=a$. Thus the assignment
$$
\vf(v)=\left\{\begin{array}{ll}
\psi(v), & \text{if $v\in V\setminus W$},\\
a & \text{if $v\in W$}
\end{array}\right.
$$
is a satisfying assignment $\cP$, but it might not satisfy $\pi$. 
Our goal is to show that $\cP'$ has a solution $\psi$, where $\vf$
obtained this way satisfies $\pi$. Observe that what we need is that
$\psi$ assigns value $d_i$ to exactly $\pi'(d_i)-|V_{d_i}|$
variables of $V\setminus W$.

\medskip

{\sc Claim 1.} Mapping $f$ taking every $d_i\in D$ to the set
$\{\psi(v)\mid v\in V_{d_i}\}$ is a multi-valued morphism of
$\Gm$. 

\smallskip

Indeed, let $(\vc an)\in \rel$, $\rel$ is an ($n$-ary) relation from
$\Gm$. Then by the construction of $\MVM(\Gm,n)$ the instance
contains all the constraints of the form $\ang{(\vc vn),\rel}$ with
$v_i\in V_{a_i}$, $i\in\{1\zd n\}$. Therefore, 
\begin{eqnarray*}
\lefteqn{\hspace*{-1cm}\{\psi(v_1)\mid v_1\in V_{a_1}\}\tm\ldots\tm\{\psi(v_n)\mid
v_n\in V_{a_n}\}}\\
&=&  f(a_1)\tm\ldots\tm f(a_n)\sse\rel. 
\end{eqnarray*}

\medskip

\medskip

{\sc Claim 2.} Let $f$ be the mapping defined in Claim 1. Then $f^*$
defined by $f^*(b):=f(b)\cup \{b\}$ for every $b\in D$ is also a multi-valued morphism of
$\Gm$. 
\smallskip

We show that for every $d_i\in D$, there is an $n_i\ge 1$ such that
$d_i\in f^{j}(d_i)$ for every $j\ge n_i$. Taking the maximum
$n=\max_{1\le i \le k}n_i$ of
all these integers, we get that $d_i\in f^{n+1}(d_i)$ and
$f(d_i)\subseteq f^{n+1}(d_i)$ (since $d_i\in f^{n}(d_i)$) for every $i$, proving the claim.

The proof is by induction on $i$. If $d_i\in f(d_i)$, then we are done
as we can set $n_i=1$ (note that this is always the case for $i=1$,
since we observed above that $\psi$ has to assign value $d_1$ to a variable
of $V_{d_1}$). So let us suppose that
$d_i\not\in f(d_i)$. Let $D_i=\{d_1,\dots,d_i\}$ and let $g_i:D_i\to
2^{D_i}$ be defined by $g_i(d_j):=f(d_j)\cap D_i$. Observe that
$g_i(d_j)$ is well-defined, i.e., $g_i(d_j)\neq \emptyset$: the set
$V_{d_j}$ contains $n^{|D|+1-j}\ge n^{|D|+1-i}$ variables, while the
number of variables which are assigned by $\psi$ values outside $D_i$ is
$\sum_{\ell=i+1}^k\pi'(d_\ell)\le n+\sum_{\ell=i+1}^k
n^{|D|+1-\ell}<n^{|D|+1-i}$.

Let $T:=\bigcup_{\ell\ge 1}g_i^\ell(d_i)$. We claim that $d_i\in T$.
Suppose that $d_i\not \in T$. By the definition of $T$ and the
assumption $d_i\not\in f(d_i)$, for every $b\in T\cup \{d_i\}$, the
variables in $V_b$ can have values only from $T$ and from $D\setminus
D_i$. The total number of variables in  $V_b$, $b\in T\cup
\{d_i\}$ is $\sum_{b\in T\cup \{d_i\}}n^{|D|+1-b}$, while the total
cardinality constraint of the values from $T\cup(D\setminus D_i)$ is
\begin{multline*}
\sum_{b\in T\cup(D\setminus D_i)}\pi'(b) =
n+\sum_{b\in T}n^{|D|+1-b}+\sum_{\ell=i+1}^k
n^{|D|+1-\ell} \\< \sum_{b\in T}n^{|D|+1-b}+n^{|D|+1-i}
= \sum_{b\in T\cup \{d_i\}}n^{|D|+1-b},
\end{multline*}
a contradiction. Thus $d_i\in T$, that is, there is a value $j<i$ such
that $d_j\in f(d_i)$ and $d_i\in f^s(d_j)$ for some $s\ge 1$. By the
induction hypothesis, $d_j\in f^n(d_j)$ for every $n\ge n_j$, thus we have
that $d_i\in f^n(d_i)$ if $n$ is at least $n_i:=1+n_j+s$.
This concludes the proof of Claim 2.
\medskip

Let $D^+$ (resp., $D^-$) be the set of those values $d_i\in D$ that
$\psi$ assigns to more than (resp., less than) $\pi'(i)-|V_{d_i}|$ variables of
$V\setminus W$. It is clear that if $|D^+|=|D^-|=0$, then $\vf$
obtained from $\psi$ satisfies $\pi$. Furthermore, if $|D^+|=0$, then
$|D^-|=0$ as well. Thus suppose that $D^+\neq \emptyset$ and let
$S:=\bigcup_{b\in D^+, s\ge 1}f^s(b)$.

\medskip

{\sc Claim 3.}
$S\cap D^-\neq \emptyset$.
\smallskip

Suppose $S\cap D^- =\emptyset$. Every $b\in S\subseteq 
D\setminus D^-$ is assigned by $\psi$ to at least
$\pi'(b)-|V_b|$ variables in $V\setminus W$, implying that $\psi$ assigns
every such $b$ to at most $|V_b|$ variables in the gadget $\MVM(\Gm,n)$.
Thus the total number of variables in the gadget having value from $S$
is at most $\sum_{b\in S}|V_b|$; in fact, it is strictly less than
that since $D^+$ is not empty. By the definition of $S$, only values
from $S$ can be assigned to variables in $V_b$ for every $b\in S$. 
However, the total
number of these variables is exactly $\sum_{b\in S}|V_b|$, a
contradiction.
\medskip

By Claim 3, there is a value $d^-\in S\cap D^-$, which means that there is a
$d^+\in D^+$ and a sequence of distinct values $b_0=d^+$,  $b_1$, $\dots$, $b_\ell=d^-$ such that
$b_{i+1}\in f(b_{i})$ for every $0\le i < \ell$. Let $v\in
V\setminus W$ be an
arbitrary variable having value $d^+$. We modify $\psi$ the following
way:
\begin{enumerate}
\item The value of $v$ is changed from $d^+$ to $d^-$.
\item For every $0\le i <\ell$, one variable in $V_{b_i}$ with value
  $b_{i+1}$ is changed to $b_i$. Note that since $b_{i+1}\in f(b_i)$ 
  and $b_{i+1}\ne b_i$ such a variable exists.
\end{enumerate}
Note that these changes do not modify the cardinalities of the values:
the second step increases only the cardinality of $b_0=d^+$ (by one)
and decreases only the cardinality of $b_\ell=d^-$ (by one). We have
to argue that the transformed assignment still satisfies the constraints
of $\cP'$. Since $d^-\in f^\ell(d^+)$, the multi-valued morphism $f^*$
of Claim 2 implies that changing $d^+$ to $d^-$ on a single variable
and not changing anything else also gives a satisfying assignment. To
see that the second step does not violate the constraints, observe
first that constraints of type (b) are not affected and constraints of
type (c) cannot be violated (since variables in $V_{d_1}$ are changed
only to $d_1$, and there is already at least one variable with value $d_1$
in $V_{d_1}$). To show that constraints of type (a) are not affected,
it is sufficient to show that the mapping $f'$ described by the gadget
after the transformation is still a multi-valued morphism. This can be
easily seen as $f'(b)\subseteq f(b)\cup \{b_i\}=f^*(b)$, where $f^*$
is the multi-valued morphism of Claim 2.

Thus the modified assignment is still a solution of $\cP'$ satisfying
$\pi'$. It is not difficult to show that repeating this operation, in a
finite number of steps we reach a solution where $D^+=D^-=\emptyset$,
i.e., every value $b\in D(b)$ appears exactly $\pi'(b)-|V(b)|$ times
on the variables of $V\setminus W$. As observed above, this implies
that $\cP$ has a solution satisfying $\pi$.
\end{proof}

\section{Hardness}\label{sec:hardness}

Now we prove that if constraint language
$\Gm$ does not
satisfy the conditions of Theorem~\ref{the:main-poly} then $\CCSP(\Gm)$ is
NP-complete. 
First, we can easily simulate the
restriction to a subset of the domain by setting to 0 the cardinality
constraint on the unwanted values:

\begin{lemma}\label{lem:restriction}
For any constraint language $\Gm$ over a set $D$ and any $D'\sse D$,
the problem $\CCSP(\Gm_{|D'})$ is polynomial time reducible to
$\CCSP(\Gm)$. 
\end{lemma}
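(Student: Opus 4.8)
The plan is to give a straightforward many-one reduction that transfers cardinality weight away from the forbidden values. Recall that every relation in $\Gm_{|D'}$ has the form $\rel_{|D'}$ for some $\rel\in\Gm$; fix such a witness $\rel$ for each relation in $\Gm_{|D'}$. Given an instance $\cP=(V,\Cc)$ of $\CCSP(\Gm_{|D'})$ together with a cardinality constraint $\pi\colon D'\to\nat$ satisfying $\sum_{a\in D'}\pi(a)=|V|$, I would build an instance $\cP'=(V,\Cc')$ of $\CCSP(\Gm)$ over the \emph{same} variable set, where $\Cc'$ is obtained from $\Cc$ by replacing each constraint $\ang{\bs,\rel_{|D'}}$ with $\ang{\bs,\rel}$. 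The accompanying cardinality constraint is $\pi'\colon D\to\nat$ defined by $\pi'(a)=\pi(a)$ for $a\in D'$ and $\pi'(a)=0$ for $a\in D\setminus D'$; note $\sum_{a\in D}\pi'(a)=|V|$, so $\pi'$ is a legitimate cardinality constraint. This construction is clearly computable in time linear in the size of $\cP$.

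The key claim is that $\cP$ has a solution satisfying $\pi$ if and only if $\cP'$ has a solution satisfying $\pi'$. For the forward direction, if $\vf\colon V\to D'$ is a solution of $\cP$ satisfying $\pi$, then regarding $\vf$ as a map $V\to D$ it is a solution of $\cP'$: for each constraint $\ang{\bs,\rel}\in\Cc'$ we have $\vf(\bs)\in\rel_{|D'}\sse\rel$. Moreover every variable receives a value in $D'$, and $\pi'$ agrees with $\pi$ on $D'$ and is $0$ elsewhere, so $\vf$ satisfies $\pi'$.

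For the backward direction, suppose $\vf\colon V\to D$ is a solution of $\cP'$ satisfying $\pi'$. Since $\pi'(a)=0$ for every $a\in D\setminus D'$, no variable is mapped outside $D'$, so in fact $\vf\colon V\to D'$. Then for each constraint $\ang{\bs,\rel}\in\Cc'$ originating from $\ang{\bs,\rel_{|D'}}\in\Cc$, we have $\vf(\bs)\in\rel$ and all coordinates of $\vf(\bs)$ lie in $D'$, hence $\vf(\bs)\in\rel_{|D'}$. Thus $\vf$ is a solution of $\cP$, and it satisfies $\pi$ because $\pi$ is just the restriction of $\pi'$ to $D'$ and all values used lie in $D'$.

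There is essentially no obstacle here: the only point requiring a modicum of care is the bookkeeping between a relation $\rel_{|D'}\in\Gm_{|D'}$ and a chosen preimage $\rel\in\Gm$ (any preimage works, since the argument only uses $\rel_{|D'}\sse\rel$ and that tuples with all coordinates in $D'$ that lie in $\rel$ also lie in $\rel_{|D'}$). The reduction is in fact a polynomial-time many-one reduction, which is stronger than the stated Turing reducibility.
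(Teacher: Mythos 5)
Your proposal is correct and follows essentially the same route as the paper: replace each constraint $\ang{\bs,\rel_{|D'}}$ by $\ang{\bs,\rel}$ for a witness $\rel\in\Gm$, and extend the cardinality constraint by assigning $0$ to every value in $D\setminus D'$, which forces all variables into $D'$. You merely spell out the two directions of the equivalence that the paper declares straightforward, and you correctly note that this is in fact a many-one reduction.
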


\begin{proof}
For an instance $\cP'=(V,\Cc')$ of $\CCSP(\Gm_{|D'})$ with a global cardinality constraint $\pi':D'\to\nat$ we construct an instance $\cP=(V,\Cc)$ of $\CCSP(\Gm)$ such that for each $\ang{\bs,\rel_{|D'}}\in\Cc'$ we include $\ang{\bs,\rel}$ into $\Cc$. The cardinality constraint $\pi'$ is replaced with $\pi:D\to\nat$ such that $\pi(a)=\pi'(a)$ for $a\in D'$, and $\pi(a)=0$ for $a\in D\setminus D'$. It is straightforward that $\cP$ has a solution satisfying $\pi$ if and only if $\cP'$ has a solution satisfying $\pi'$.
\end{proof}

Suppose now that a constraint language $\Gm$ does not satisfy the
conditions of Theorem~\ref{the:main-poly}. Then one of the following
cases takes place: (a) $\dang\Gm$ contains a binary relation that is
not a thick mapping; or (b) $\dang\Gm$ contains a crossing pair of
equivalence relations; or (c) $\dang\Gm$ contains a relation which is not
2-decomposable. Since relations occurring in cases (a), (b) are not
redundant, and relations that may occur in case (c) can be assumed to
be not redundant, by Lemma~\ref{lem:redundancy} $\dang\Gm$ can be
replaced with $\dang\Gm'$. Furthermore, by
Theorem~\ref{the:adding-constants} all the constant relations can be
assumed to belong to $\Gm$. We consider these three cases in turn.
Furthermore, by a minimality argument, we can assume that if $\Gm$ is
over $D$, then for every $S\subset D$, constraint language $\Gm_{|S}$
satisfies the requirements of Theorem~\ref{the:main-poly}.

One of the NP-complete problems we will reduce to $\CCSP(\rel)$ is the
{\sc Bipartite Independent Set} problem (or BIS for short). In this
problem given a connected bipartite graph $G$ with bipartition
$V_1,V_2$ and numbers $k_1,k_2$, the goal is to verify if there exists
an independent set $S$ of $G$ such that $|S\cap V_1|= k_1$ and
$|S\cap V_2|= k_2$. The NP-completeness of the problem follows for
example from \cite{DBLP:journals/jcss/ChenK03}, which shows the
NP-completeness of the complement problem under the name {\sc
  Constrained Minimum Vertex Cover}. It is easy to see that the
problem is hard even for graphs containing no isolated vertices. By
representing the edges of a bipartite graph with the relation
$\rel=\{(a,c),(a,d),(b,d)\}$, we can express the problem of finding a
bipartite independent set. Value $b$ (resp., $a$) represents selected
(resp., unselected) vertices in $V_1$, while value $c$ (resp., $d$)
represents selected (resp., unselected) vertices in $V_2$. With this
interpretation, the only combination that relation $\rel$ excludes is
that two selected vertices are adjacent. By
Lemma~\ref{lem:thick-mapping}, if a binary relation is not a thick
mapping, then it contains something very similar to $\rel$. However,
some of the values $a$, $b$, $c$, and $d$ might coincide and the
relation might contain further tuples such as $(c,d)$. Thus we need a
delicate case analysis to show that the problem is NP-hard for binary
relations that are not thick mappings.

\begin{lemma}\label{lem:non-thick-mapping}
Let $\rel$ be a binary relation which is not a thick mapping. Then
$\CCSP(\{\rel\})$ is NP-complete. 
\end{lemma}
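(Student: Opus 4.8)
The plan is to show that a binary relation $\rel$ that is not a thick mapping contains, up to the identification of some coordinates, a configuration rich enough to encode the {\sc Bipartite Independent Set} problem, which is NP-complete on graphs with no isolated vertices. By Lemma~\ref{lem:thick-mapping}(1), since $\rel$ is not a thick mapping, it is not rectangular, so there exist elements with $(a,c),(a,d),(b,d)\in\rel$ but $(b,c)\not\in\rel$. The idea is that the sub-relation $\{(a,c),(a,d),(b,d)\}$ already behaves like the edge relation from the discussion preceding the lemma, provided the four elements $a,b,c,d$ are genuinely distinct and $\rel$ contains no further ``bad'' tuples on $\{a,b\}\times\{c,d\}$. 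In the clean case, given a connected bipartite graph $G=(V_1\cup V_2,E)$ with parameters $k_1,k_2$, I would build a $\CCSP(\{\rel\})$ instance with a variable for each vertex of $G$, impose the constraint $\rel$ (with the correct orientation) on the scope $(u,v)$ for each edge $uv\in E$ with $u\in V_1$, $v\in V_2$, and set the cardinality constraint to require exactly $k_1$ occurrences of $b$, $|V_1|-k_1$ of $a$, $k_2$ of $c$, and $|V_2|-k_2$ of $d$; then solutions correspond exactly to independent sets with the prescribed sizes, since the only forbidden local pattern is ``both endpoints selected.''

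The main obstacle, as the paragraph before the statement flags, is the \emph{degenerate cases}: some of $a,b,c,d$ may coincide (note $a\ne b$ and $c\ne d$ are forced by $(b,c)\notin\rel$, but we could have $a=c$, $a=d$, $b=c$, or $b=d$), and $\rel$ may contain extra tuples such as $(c,d)$, $(b,a)$, $(c,a)$, etc., that spoil the naive encoding. I would first reduce the clutter: using Lemma~\ref{lem:restriction} and Theorem~\ref{the:adding-constants} we may add the constant relations and restrict to the subdomain $\{a,b,c,d\}$ actually involved, and by the minimality assumption every proper sub-restriction of $\Gm$ is already tractable, so $\rel$ restricted to any proper subset of $\{a,b,c,d\}$ is a thick mapping — this constrains which extra tuples can be present. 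Then I would enumerate the possible coincidence patterns and, for each, either (i) exhibit a direct pp-definition (without equalities) of a cleaner relation to which the clean reduction applies, invoking Theorem~\ref{the:decision-pp-defin}, or (ii) give an ad hoc reduction from BIS (or from another standard NP-complete problem, e.g.\ a cardinality-constrained variant of {\sc Independent Set} or {\sc 2-Coloring}) tailored to that pattern. Membership in NP is immediate since a solution plus verification of the cardinality constraint is checkable in polynomial time.

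Concretely, the case split I anticipate is along the lines: if $|\{a,b,c,d\}|=4$, handle the ``extra tuple'' sub-cases by noting that adding $(c,d)$ or $(b,a)$ etc.\ still leaves the forbidden pattern $(b,c)$ as the unique obstruction on a suitable bipartition, so the BIS reduction goes through with possibly swapped roles; if two elements coincide, say $b=c$ (the relation then looks like $\{(a,b),(a,d),(b,d)\}$ on $\{a,b,d\}$, a ``path/triangle-minus-edge'' pattern), one checks directly that the corresponding $\CCSP$ encodes a cardinality-constrained independent set or list-colouring problem on a digraph, which is NP-complete. In each sub-case the key point to verify is exactly one: that the set of tuples \emph{missing} from $\rel$ on the relevant product of two-element (or three-element) sets forbids precisely the ``two selected, adjacent'' configuration and nothing else, so that feasible cardinality vectors are in bijection with the desired combinatorial objects. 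I expect the bookkeeping of these sub-cases — roughly a dozen small relations on at most four elements — to be the bulk of the proof, with each individual reduction being routine once the right target problem is identified.
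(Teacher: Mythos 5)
Your overall strategy coincides with the paper's: extract a non-rectangular witness $(a,c),(a,d),(b,d)\in\rel$, $(b,c)\not\in\rel$, restrict to $\{a,b,c,d\}$, assume $\rel$ minimal, use the constant relations and pp-definability to prune extra tuples, and reduce from {\sc Bipartite Independent Set}, splitting on how many of $a,b,c,d$ coincide. The clean four-element case and the two-element case (which the paper dispatches via the Creignou et al.\ result) are handled essentially as you describe.

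There is, however, a genuine gap in your treatment of the three-element cases, which you dismiss with ``one checks directly.'' When $|\{a,b,c,d\}|=3$ a single domain element plays two roles in the encoding --- e.g.\ if $d=b$, the value $b$ represents both ``selected vertex of $V_1$'' and ``unselected vertex of $V_2$.'' With one variable per vertex of $G$, the global cardinality constraint only counts total occurrences of each value, so it cannot separate the two roles: a solution with $k_1+1$ selected $V_1$-vertices and one fewer unselected $V_2$-vertex satisfies the same cardinality vector, and (worse) $V_2$-vertices may also legitimately take values from $\{a,b\}$ since the projections need not be disjoint. The naive reduction therefore does not establish the required bijection between feasible cardinality vectors and independent sets of the prescribed sizes. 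The paper repairs this by replacing each vertex $w\in V_1$ with a block $V^w$ of $2|V|$ variables, so that the $V_1$-side contribution to each value count is a multiple of $2|V|$ and dominates the $V_2$-side contribution, letting the two roles be disentangled arithmetically; your proof would need this (or an equivalent) gadget. Two further points you would discover in carrying out the case analysis: the step that makes it finite is showing (via minimality) that one of $\pr_1\rel$, $\pr_2\rel$ has only two elements, and one of the apparent three-element configurations ($d\notin\{a,b\}$ and $(b,x)\in\rel$) is in fact vacuous under minimality rather than requiring a reduction.
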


\begin{proof}
  Since $\rel$ is not a thick mapping, there are
  $(a,c),(a,d),(b,d)\in\rel$ such that $(b,c)\not\in\rel$. By
  Lemma~\ref{lem:restriction} the problem $\CCSP(\rel')$, where
  $\rel'=\rel_{|\{a,b,c,d\}}$, is polynomial time reducible to
  $\CCSP(\rel)$. Replacing $\rel$ with $\rel'$ if necessary we can
  assume that $\rel$ is a relation over $D=\{a,b,c,d\}$ (note that
  some of those elements can be equal). We suppose that $\rel$ is a
  `smallest' relation that is not a thick mapping, that is, for any
  $\rel'$ definable in $\{\rel\}$ with $\rel'\subset\rel$, the relation $\rel'$
  is a thick mapping, and for any subset $D'$ of $D$ the restriction
  of $\rel$ onto $D'$ is a thick mapping.

Let $B=\{x\mid (a,x)\in\rel\}$. Since $B(x)=\exists y R(y,x)\meet C_a(y)$, 
the unary relation $B$ is definable in $\rel$. If $B\ne\pr_2\rel$, 
by setting $\rel'(x,y)=\rel(x,y)\wedge B(y)$ we get a binary
  relation $\rel'$ that is not a thick mapping. Thus by the minimality
  of $\rel$, we may assume that $(a,x)\in\rel$ for any
  $x\in\pr_2\rel$, and symmetrically, $(y,d)\in\rel$ for any
  $y\in\pr_1\rel$.

\smallskip

{\sc Case 1.} $|\{a,b,c,d\}|=4$.

\smallskip

We claim that $|\pr_1\rel|=|\pr_2\rel|=2$. Suppose, without loss of
generality that $x\in \{a,b\}$ appears in $\pr_2\rel$. If $(b,x)\in
\rel$, then $(a,x),(b,x),(a,c)\in\rel$ and $(b,c)\not\in\rel$. Therefore 
the restriction $\rel_{|\{a,b,c\}}$ is not a thick mapping,
contradicting the minimality of $\rel$. Otherwise $(a,x),(a,d),(b,d)\in\rel$ 
while $(b,x)\not\in\rel$. Hence $\rel_{|\{a,b,d\}}$ is
not a thick mapping. Thus we have $\pr_1\rel=\{a,b\}$ and 
$\pr_2\rel=\{c,d\}$.

Let $G=(V,E), V_1, V_2, k_1,k_2$ be an instance of BIS. Construct an
instance $\cP=(V,\Cc)$ by including into $\Cc$, for every edge $(v,w)$
of $G$, the constraint $\ang{(v,w),\rel}$, and defining a cardinality
constraint as $\pi(a)=|V_1|-k_1$, $\pi(b)=k_1$, $\pi(c)=k_2$,
$\pi(d)=|V_2|-k_2$. It is straightforward that for any solution $\vf$
of $\cP$ the set $S_\vf=\{v\in V\mid \vf(v)\in\{b,c\}\}$ is an
independent set, $S_\vf\cap V_1=\{v\mid \vf(v)=b\}$, $S_\vf\cap
V_2=\{v\mid \vf(v)=c\}$. Set $S_\vf$ satisfies the required conditions
if and only if $\vf$ satisfies $\pi$. Conversely, for any independent
set $S\sse V$ mapping $\vf$ given by
$$
\vf_S(v)=\left\{\begin{array}{ll}
a, & \text{if $v\in V_1\setminus S$},\\
b, & \text{if $v\in V_1\cap S$},\\
c, & \text{if $v\in V_2\cap S$},\\
d, & \text{if $v\in V_2\setminus S$},
\end{array}\right.
$$
is a solution of $\cP$ that satisfies $\pi$ if and only if $|S\cap V_1|=k_1$ and $|S\cap V_2|=k_2$.

\smallskip

{\sc Case 2.} $|\{a,b,c,d\}|=2$.

\smallskip

Then $\rel$ is a binary relation with 3 tuples in it over a 2-element set. By \cite{Creignou08:cardinality} $\CCSP(\rel)$ is NP-complete. 

\smallskip 

Thus in the remaining cases, we can assume that
$|\{a,b,c,d\}|=3$, and therefore $\{a,b\}\cap\{c,d\}\ne\eps$.  

\medskip

{\sc Claim 1.} One of the projections $\pr_1\rel$
or $\pr_2\rel$ contains only 2 elements. 

\smallskip

Let us assume the converse. Let $\pr_2\rel=\{c,d,x\}$,
$x\in\{a,b\}$ (as $\rel$ is over a 3-element set). We consider two
cases. Suppose  first $c\not\in\{a,b\}$ (implying $d\in\{a,b\}$).  If $(b,x)\not\in\rel$, then the
restriction of $\rel$ onto $\{a,b\}$ contains $(a,d),(b,d),(a,x)$, but
does not contain $(b,x)$. Thus $\rel_{|\{a,b\}}$ is not a thick mapping, a
contradiction with the minimality assumption. If $(b,x)\in\rel$ then the 
set $B=\{a,b\}=\{x\mid
(b,x)\in\rel\}$ is definable in $\rel$.  Observe that
$\rel'(x,y)=\rel(x,y)\wedge B(x)$ is not a thick mapping and definable
in $\rel$. A contradiction with the choice of $\rel$.

  Now suppose that $d\not\in\{a,b\}$ (implying $c\in\{a,b\}$). If $(b,x)\in\rel$, then the
  restriction $\rel_{|\{a,b\}}$ is not a thick mapping, as
  $(a,c),(a,x),(b,x)\in\rel$ and $(b,c)\not\in\rel$. Otherwise let
  $(b,x)\not\in\rel$. By the assumption made $|\pr_1\rel|=3$, that is,
  $d\in\pr_1\rel$. We consider 4 cases depending on whether $(d,c)$
  and $(d,x)$ are contained in $\rel$. If $(d,c),(d,x)\not\in\rel$, then, as $a\in\{x,c\}$,
  the relation $\rel_{|\{a,d\}}$ is not a thick mapping (recall that
  $(d,d)\in\rel$). If $(d,c),(d,x)\in\rel$, then we can restrict $\rel$
  on $\{d,b\}$ (note that $b\in \{c,x\}$). Finally, if $(d,c)\in\rel, (d,x)\not\in\rel$ [or
  $(d,x)\in\rel, (d,c)\not\in\rel$], then the relation $B=\{d,c\}$
  [respectively, $B=\{d,x\}$] is definable in $\rel$. It remains to
  observe that $\rel'(x,y)=\rel(x,y)\wedge B(x)$ is not a thick
  mapping. This concludes the proof of the claim.
  
  \smallskip

 Thus we can assume that one of the projections $\pr_1\rel$ or $\pr_2\rel$
  contains only 2 elements. Without loss of generality, let
  $\pr_1\rel=\{a,b\}$.  In the remaining cases, we assume
  $\pr_2\rel=\{c,d,x\}$, where $x\in \{a,b\}$ and $x$ may not be
  present.

\smallskip

{\sc Case 3.} 
Either
\begin{enumerate}[$\bullet$]
\item   $c\not\in\{a,b\}$ ({\sc Subcase 3a}), or
\item  $d\not\in\{a,b\}$ and $(b,x)\not\in\rel$ ({\sc Subcase 3b}). 
\end{enumerate}

\smallskip

In this case, given an instance $G=(V,E), V_1, V_2, k_1,k_2$ of BIS, we construct an instance $\cP=(V',\Cc)$ of $\CCSP(\rel)$ as follows.
\begin{enumerate}[$\bullet$]
\item
$V'=V_2\cup \displaystyle\bigcup_{w\in V_1} V^w$, where all the sets $V_2$ and $V^w$, $w\in V_1$, are disjoint, and $|V^w|=2|V|$.
\item
For any $(u,w)\in E$ the set $\Cc$ contains all constraints of the form $\ang{(v,w),\rel}$ where $v\in V^u$. 
\item
The cardinality constraint $\pi$ is given by the following rules:
\begin{enumerate}[$-$]
\item  Subcase 3a:\\
 $\pi(c)=k_2$, $\pi(a)=(|V_1|-k_1)\cdot 2|V|$,
  $\pi(b)=k_1\cdot 2|V|+(|V_2|-k_2)$ if $d=b$, and \\$\pi(c)=k_2$,
  $\pi(a)=(|V_1|-k_1)\cdot 2|V|+(|V_2|-k_2)$, $\pi(b)=k_1\cdot 2|V|$
  if $d=a$.
\item Subcase 3b:\\ $\pi(d)=|V_2|-k_2$, $\pi(a)=(|V_1|-k_1)\cdot 2|V|$,
  $\pi(b)=k_1\cdot 2|V|+k_2$ if $c=b$, and \\$\pi(d)=|V_2|-k_2$,
  $\pi(a)=(|V_1|-k_1)\cdot 2|V|+k_2$, $\pi(b)=k_1\cdot 2|V|$ if $c=a$.
\end{enumerate}
\end{enumerate}
If $G$ has a required independent set $S$, then consider a mapping
$\vf:V'\to D$ given by
$$
\vf(v)=\left\{\begin{array}{ll}
a, & \text{if $v\in V^w$  and $w\in V_1\setminus S$},\\
b, & \text{if $v\in V^w$ and $w\in V_1\cap S$},\\
c, & \text{if $v\in V_2\cap S$},\\
d, & \text{if $v\in V_2\setminus S$},
\end{array}\right.
$$
For any $\ang{(u,v),\rel}\in\Cc$, $u\in V^w$, either $w\not\in S$ or $v\not\in S$. In the first case $\vf(u)=a$ and so $(\vf(u),\vf(v))\in\rel$. In the second case $\vf(u)=b$ and $\vf(v)=d$. Again, $(\vf(u),\vf(v))\in\rel$. Finally it is straightforward that $\vf$ satisfies the cardinality constraint $\pi$.

Suppose that $\cP$ has a solution $\vf$ that satisfies $\pi$. Since
$\pr_1\rel=\{a,b\}$ and we can assume that $G$ has no isolated
vertices, for any $u\in V^w$, $w\in V_1$, we have $\vf(u)\in\{a,b\}$.
Also if for some $u\in V^w$ it holds that $\vf(u)=b$ and $\vf(v)=c$
for $v\in V_2$ then $(w,v)\not\in E$. We include into $S\sse V$ all
vertices $w\in V_1$ such that there is $u\in V^w$ with $\vf(u)=b$. By
the choice of the cardinality of $V^w$ and $\pi(b)$ there are at least
$k_1$ such vertices. In Subcase 3a, we include in $S$ all
vertices $v\in V_2$ with $\vf(v)=c$. There are exactly $k_2$ vertices
like this, and by the observation above $S$ is an independent set. In
Subcase 3b, we include in $S$ all vertices $v\in V_2$ with
$\vf(v)\in\{a,b\}$. By the choice of $\pi(d)$, there are at least $k_2$
such vertices. To verify that $S$ is an independent set it suffices to
recall that in this case $(b,x)\not\in\rel$, and so
$(b,a),(b,b)\not\in\rel$.

\smallskip

{\sc Case 4.} $d\not\in\{a,b\}$ and $(b,x)\in\rel$.

\smallskip

In this case $\{c,x\}=\{a,b\}$ and $(a,c),(a,x),(b,x)\in\rel$ while $(b,c)\not\in\rel$. 
Therefore $\rel_{|\{a,b\}}$ is not a thick mapping. A contradiction with the choice of $\rel$.
\end{proof}

Next we show hardness in the case when  there is a  crossing pair of equivalence
relations.
With a simple observation, we can obtain
a binary relation that is not a thick mapping and then apply
Lemma \ref{lem:non-thick-mapping}.

\begin{lemma}\label{lem:non-non-crossing}
Let $\al,\beta$ be a  crossing pair of equivalence relations. Then
$\CCSP(\{\al,\beta\})$ is NP-complete.   
\end{lemma}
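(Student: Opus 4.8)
The plan is to prove NP-hardness by producing a binary relation that is not a thick mapping and then invoking Lemma~\ref{lem:non-thick-mapping}; membership of $\CCSP(\{\al,\beta\})$ in NP is immediate (guess an assignment and check the constraints and the cardinality vector). The twist is that such a relation need not be pp-definable in $\{\al,\beta\}$ over the original domain, so the first move will be to pass to a well-chosen three-element sub-domain.

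Concretely: since $\al$ and $\beta$ cross, fix an $\al$-class $A$ and a $\beta$-class $B$ with $A\setminus B$, $A\cap B$ and $B\setminus A$ all non-empty, and pick $a\in A\cap B$, $b\in A\setminus B$, $c\in B\setminus A$; these are pairwise distinct. Put $S=\{a,b,c\}$, $\al'=\al_{|S}$ and $\beta'=\beta_{|S}$. Because $c\notin A$ and $b\notin B$, the equivalence $\al'$ has exactly the classes $\{a,b\}$ and $\{c\}$, and $\beta'$ has exactly the classes $\{a,c\}$ and $\{b\}$. By Lemma~\ref{lem:restriction}, $\CCSP(\{\al',\beta'\})$ is polynomial-time reducible to $\CCSP(\{\al,\beta\})$, so it suffices to show $\CCSP(\{\al',\beta'\})$ is NP-hard. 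Consider $\rel:=\al'\circ\beta'$, i.e.\ $\rel(x,y)=\exists z\,(\al'(x,z)\wedge\beta'(z,y))$, which is pp-definable in $\{\al',\beta'\}$ without equalities. Reading off the two class partitions: from $a$ and from $b$ one can $\al'$-reach $\{a,b\}$ and then $\beta'$-reach every element of $S$, whereas from $c$ one can only $\al'$-reach $c$ and then $\beta'$-reach $\{a,c\}$; hence $\rel=S^2\setminus\{(c,b)\}$. Now $(a,a),(a,b),(c,a)\in\rel$ but $(c,b)\notin\rel$, so $\rel$ is not rectangular, and therefore not a thick mapping by Lemma~\ref{lem:thick-mapping}(1). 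By Theorem~\ref{the:decision-pp-defin}, $\CCSP(\{\al',\beta',\rel\})$ reduces to $\CCSP(\{\al',\beta'\})$; since every instance of $\CCSP(\{\rel\})$ is already an instance of $\CCSP(\{\al',\beta',\rel\})$ and $\CCSP(\{\rel\})$ is NP-complete by Lemma~\ref{lem:non-thick-mapping}, composing all the reductions gives NP-hardness of $\CCSP(\{\al,\beta\})$.

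The delicate point — and the reason the restriction to $S$ cannot be skipped — is that $\al\circ\beta$ over the original domain need not be non-rectangular at all; for instance, if $\al$ and $\beta$ happen to share a Mal'tsev polymorphism, then by Lemma~\ref{lem:thick-mapping}(2) every binary relation in $\dang{\{\al,\beta\}}$ is a thick mapping. Cutting down to the three-element set $S$ collapses the configuration to the unique ``crossing triangle'', which is small enough that the composition $\al'\circ\beta'$ is forced to omit exactly the pair $(c,b)$ — precisely the failure of rectangularity needed. Everything else is routine bookkeeping with the reductions of Lemma~\ref{lem:restriction}, Theorem~\ref{the:decision-pp-defin} and Lemma~\ref{lem:non-thick-mapping}, together with the (immediate) verification that $\rel=S^2\setminus\{(c,b)\}$.
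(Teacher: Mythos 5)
Your proof is correct and follows essentially the same route as the paper's: restrict $\al,\beta$ to a three-element crossing configuration via Lemma~\ref{lem:restriction}, form the composition of the two restricted equivalence relations to obtain a binary relation equal to the full square minus a single pair, observe that this relation is not rectangular (hence not a thick mapping), and conclude via Theorem~\ref{the:decision-pp-defin} and Lemma~\ref{lem:non-thick-mapping}. The only difference is cosmetic labelling (your ``middle'' element is $a\in A\cap B$ where the paper's is $c$, so your excluded pair is $(c,b)$ rather than $(b,a)$); your closing remark that the restriction to three elements is genuinely necessary is a correct and worthwhile observation that the paper leaves implicit.
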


\begin{proof}
Let $\al,\beta$ be equivalence relations on the same domain $D$.
This means that there are
$a,b,c\in D$ such that $\ang{a,c}\in\al\setminus\beta$ and
$\ang{c,b}\in\beta\setminus\al$. Let $\al'=\al_{|\{a,b,c\}}$ and
$\beta'=\beta_{|\{a,b,c\}}$. Clearly, 
\begin{eqnarray*}
&& \al'=\{(a,a),(b,b),(c,c),(a,c),(c,a)\},\\ 
&& \beta'=\{(a,a),(b,b),(c,c),(b,c),(c,b)\}.
\end{eqnarray*}
By Lemma~\ref{lem:restriction}, $\CCSP(\{\al',\beta'\})$ is
polynomial-time reducible to $\CCSP(\{\al,\beta\})$. Consider the
relation $\rel=\al'\circ\beta'$, that is, $\rel(x,y)=\exists z
(\al'(x,z)\wedge\beta'(z,y))$. We have that $\CCSP(\rel)$ is
reducible to $\CCSP(\{\al',\beta'\})$ and
$$
\rel=\{(a,a),(b,b),(c,c),(a,c),(c,a),(b,c),(c,b),(a,b)\}.
$$
Observe that $\rel$ is not a thick mapping and by
Lemma~\ref{lem:non-thick-mapping}, $\CCSP(\rel)$ is NP-complete. 
\end{proof} 

Finally, we prove hardness in the case when there is a relation that
is not 2-decomposable. An example of such a relation is a ternary
Boolean affine relation $x+y+z=c$ for $c=0$ or $c=1$.
The CSP with global cardinality
constraints for this relation is NP-complete by
\cite{Creignou08:cardinality}. Our strategy is to obtain such a relation
from a relation that is not 2-decomposable. However, as in
Lemma~\ref{lem:non-thick-mapping}, we have to consider several cases.

We start with the following simple lemma:
\begin{lemma}\label{lem:classes}
  Let $\al$ be an equivalence relation on a set $D$ and $a\in D$.
 Then $a^\al\in\dang{\al,C_a}'$.
\end{lemma}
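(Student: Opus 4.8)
The plan is to exhibit an explicit equality-free primitive positive definition of $a^\al$ and verify that it works. Recall that by definition $a^\al=\{b\in D\mid\ang{a,b}\in\al\}$, that $C_a$ is the unary relation $\{(a)\}$, and that $\dang{\al,C_a}'$ consists precisely of the relations expressible from $\al$ and $C_a$ using conjunction and existential quantification, with no equality atoms permitted.

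First I would consider the formula
\[
\vf(x)\;=\;\exists y\,\bigl(C_a(y)\wedge\al(y,x)\bigr).
\]
This is a primitive positive formula over $\{\al,C_a\}$ that contains no equality relation, so the relation it defines belongs to $\dang{\al,C_a}'$. It then remains to check that $\vf$ defines exactly $a^\al$. An element $b\in D$ satisfies $\vf$ iff there is some $y\in D$ with $(y)\in C_a$ and $\ang{y,b}\in\al$; since $C_a=\{(a)\}$, the first conjunct forces $y=a$, so $\vf(b)$ holds iff $\ang{a,b}\in\al$, i.e.\ iff $b\in a^\al$. Hence $a^\al$ is definable by $\vf$ and therefore $a^\al\in\dang{\al,C_a}'$, as claimed.

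There is no real obstacle here; the statement is a routine warm-up lemma. The only point deserving a word of care is that the witnessing formula must avoid equality atoms (this is what distinguishes $\dang{\cdot}'$ from $\dang{\cdot}$), and indeed the bound variable $y$ above is ``pinned'' to the value $a$ by the constant relation $C_a$ rather than by an equality constraint, so the definition is genuinely equality-free.
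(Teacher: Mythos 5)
Your proof is correct and matches the paper's argument essentially verbatim: both define $a^\al$ by existentially quantifying a variable pinned to $a$ via $C_a$ and related to $x$ by $\al$ (the order of the arguments to $\al$ is immaterial since $\al$ is symmetric). Nothing further is needed.
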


\begin{proof}
The unary relation $R(x)=\exists y((\ang{x,y}\in \alpha) \wedge C_a(y))$ is equal
to $a^\al$ and is clearly in $\dang{\al,C_a}'$.
\end{proof}

\begin{lemma}\label{lem:non-2-decomposable}
Let $\rel$ be a relation that is not 2-decomposable. Then
$\CCSP(\{\rel\})$ is NP-complete. 
\end{lemma}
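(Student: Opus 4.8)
Membership in NP is immediate (guess a satisfying assignment realising the prescribed cardinality vector and verify it in polynomial time), so all the content is NP-hardness, and the plan is to follow the strategy already used for Lemma~\ref{lem:non-thick-mapping}: shrink $\rel$ as much as possible and then extract a known hard relation. By Theorem~\ref{the:decision-pp-defin}, Theorem~\ref{the:adding-constants} and Lemma~\ref{lem:restriction}, passing from $\{\rel\}$ to the class $\cC$ of all relations pp-definable without equalities from $\rel$ and the constant relations $C_a$, and then restricted to an arbitrary subset of the domain, only makes $\CCSP$ easier; hence it suffices to exhibit an NP-hard relation in $\cC$. If some binary relation in $\cC$ is not a thick mapping we are done by Lemma~\ref{lem:non-thick-mapping}, and if $\cC$ contains a crossing pair of equivalence relations we are done by Lemma~\ref{lem:non-non-crossing}; so I assume from now on that neither situation occurs. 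Among the relations in $\cC$ that are not 2-decomposable I pick one, still denoted $\rel$, of minimum arity $n$ and, subject to that, with the fewest tuples. Since unary and binary relations are trivially 2-decomposable, $n\ge 3$.

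The first step is to show $n=3$. Fix a tuple $\ba=(a_1\zd a_n)\notin\rel$ with $\pr_{ij}\ba\in\pr_{ij}\rel$ for all $i<j$; such a tuple exists because $\rel$ is not 2-decomposable. Suppose $n\ge 4$. For any three distinct coordinates the projection $\pr_{\{i,j,n\}}\rel$ is a relation of $\cC$ of arity $3<n$, so by the minimality of $n$ it is 2-decomposable. Form $\rel'(x_1\zd x_{n-1})=\ex x_n\,(\rel(x_1\zd x_n)\meet C_{a_n}(x_n))$, which lies in $\cC$. For each pair $i<j\le n-1$ the three pairs $\pr_{ij}\ba$, $(a_i,a_n)$, $(a_j,a_n)$ belong to the corresponding binary projections of the 2-decomposable relation $\pr_{\{i,j,n\}}\rel$, whence $(a_i,a_j,a_n)\in\pr_{\{i,j,n\}}\rel$ and therefore $\pr_{ij}(a_1\zd a_{n-1})\in\pr_{ij}\rel'$; on the other hand $(a_1\zd a_{n-1})\notin\rel'$, for otherwise $\ba\in\rel$. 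So $\rel'\in\cC$ has arity $n-1$ and is not 2-decomposable, contradicting minimality. Hence $n=3$.

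It remains to treat a ternary relation $\rel\in\cC$ that is not 2-decomposable and has as few tuples as possible. Besides the witnessing tuple $\ba=(a_1,a_2,a_3)$, for each pair $i<j$ fix a tuple $\bc^{ij}\in\rel$ witnessing $\pr_{ij}\ba\in\pr_{ij}\rel$; the coordinate of $\bc^{ij}$ outside $\{i,j\}$ necessarily differs from that of $\ba$, since $\ba\notin\rel$. Restricting the domain to the values occurring in $\ba$ and the three $\bc^{ij}$'s (Lemma~\ref{lem:restriction}), I may assume $|D|\le 6$. A relation of the form $\{a\}\tm S$ and its coordinate permutations are 2-decomposable, so every projection $\pr_i\rel$ has at least two elements; and fixing coordinate $i$ of $\rel$ to $a_i$ produces a binary relation of $\cC$ that is a thick mapping by assumption and, because $\ba\notin\rel$, a non-trivial one. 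The plan from here is a case analysis in the style of Lemma~\ref{lem:non-thick-mapping}, organised according to which of the values $a_1,a_2,a_3$ and of the coordinates of the $\bc^{ij}$'s coincide, the sizes of the projections $\pr_i\rel$, and which additional tuples $\rel$ contains. In the branches where the three coordinates can be restricted to a common two-element set $\{0,1\}$ one arrives at a ternary relation on $\{0,1\}$ whose binary projections are all $\{0,1\}^2$ but which is a proper subset of $\{0,1\}^3$ --- such a relation is not width-2-affine, so by \cite{Creignou08:cardinality} its $\CCSP$ is NP-complete, the affine relations $x+y+z=0$ and $x+y+z=1$ being the prototypical examples. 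In the remaining branches the mismatch between the relevant values instead exposes a binary relation that is not a thick mapping, or a crossing pair of equivalence relations, and the case is closed via Lemma~\ref{lem:non-thick-mapping} or Lemma~\ref{lem:non-non-crossing}.

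The arity reduction to $n=3$ is routine once one observes that every ternary projection of a minimal counterexample must itself be 2-decomposable. The real work --- and what I expect to be the main obstacle --- is the final case analysis on ternary relations: as in the proof of Lemma~\ref{lem:non-thick-mapping}, it is delicate because the values $a_1,a_2,a_3$ may coincide in several ways and $\rel$ may contain extra tuples, so a number of subcases must be isolated and routed either to a two-element relation that is not width-2-affine or back to the two preceding hardness lemmas.
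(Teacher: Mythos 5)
Your setup and the arity reduction are sound and match the paper: reduce to the class of relations pp-definable without equalities from $\rel$ and the constants, assume all binary relations there are thick mappings and no two equivalence relations cross, take a minimal counterexample, and collapse it to a ternary relation by fixing coordinates of the witnessing non-tuple to constants. But from that point on you have not given a proof; you have given a promissory note. The entire content of the lemma is precisely the step you defer (``the plan from here is a case analysis\dots''), and the plan you sketch is not the one that works. You suggest restricting the domain to the at most six values occurring in $\ba$ and the three witnesses $\bc^{ij}$ and then splitting on coincidences among these values, expecting each branch to yield either a two-element non-width-2-affine relation or a violation of the thick-mapping/non-crossing hypotheses. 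A minimal ternary non-2-decomposable relation on a 6-element domain need not do either of these things after mere domain restriction: its three coordinates can carry three genuinely different non-trivial equivalence relations, all binary pp-definable relations can be pairwise non-crossing thick mappings, and no pair of values need be identifiable. No finite enumeration of value-coincidences closes the argument.

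What the paper actually does in this stretch is a sequence of structural normalizations, each implemented by a new pp-definition rather than a restriction: first it forces every binary projection $\pr_{ij}\rel$ to be the full product $\pr_i\rel\times\pr_j\rel$ (using minimality of $\rel$ under definable sub-relations); then it studies the fibres $\rel^p_i$ obtained by fixing one coordinate, shows they are non-trivial thick mappings, and replaces $\rel$ by $\rel'(x,y,z)=\exists x',y',z'(\rel(x',y',z')\wedge\eta_1(x,x')\wedge\eta_2(y,y')\wedge\eta_3(z,z'))$ to make the associated equivalence relations independent of the fixed value and coordinate; then a further definition $\exists y',z'(\rel(x,y',z)\wedge\rel(y,y',z')\wedge C_d(z'))$ aligns the three unary projections, equalizes $\eta_1=\eta_2=\eta_3$, and produces a reflexive fibre. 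Only after all of this can one restrict to a 3-element set $\{p,q,r\}$ and split on whether the common $\eta$ is the equality relation: if not, one lands on the Boolean affine relation $x+y+z=0$ (your Creignou et al.\ endpoint is the right one here); if so, the three fibres are disjoint permutations partitioning $B^2$ and a restriction yields a binary relation that is not a thick mapping, a contradiction. None of these normalizing constructions appears in your proposal, and without them the ``delicate case analysis'' you announce cannot be carried out. You have correctly identified where the difficulty lies, but identifying it is not the same as resolving it.
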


\begin{proof}
  We can assume that every binary relation in
  $\dang{\{\rel\}\cup\{C_a\mid a\in D\}}'$ is a thick mapping,
  and no pair of equivalence relations from this set cross,
  otherwise the problem is NP-complete by
  Theorems~\ref{the:decision-pp-defin}, ~\ref{the:adding-constants},
  and Lemmas~\ref{lem:non-thick-mapping},~\ref{lem:non-non-crossing}.
  Furthermore, we can choose
  $\rel$ to be the `smallest' non-2-decomposable relation in the sense
  that every relation obtained from $\rel$ by restricting on a proper
  subset of $D$ is 2-decomposable, and every relation
  $\rel'\in\dang{\{\rel\}\cup\{C_a\mid a\in D\}}'$ that either have
  smaller arity, or $\rel'\subset\rel$, is also 2-decomposable.

We claim that relation $\rel$ is ternary. Indeed, it cannot be binary by assumptions
made about it. Suppose that $\ba\not\in\rel$  
is a tuple such that $\pr_{ij}\ba\in\pr_{ij}\rel$ for any $i,j$. Let 
\begin{eqnarray*}
\rel'(x,y,z) &=& \exists x_4\zd x_n (\rel(x,y,z,x_4\zd x_n)\wedge\\
& &\quad C_{\ba[4]}(x_4)\wedge\ldots \wedge C_{\ba[n]}(x_n)).
\end{eqnarray*}
It is straightforward that
$(\ba[1],\ba[2],\ba[3])\not\in\rel'$, while, since any proper
projection of $\rel$ is 2-decomposable, $\pr_{\{2\zd
  n\}}\ba\in\pr_{\{2\zd n\}}\rel$, $\pr_{\{1,3\zd
  n\}}\ba\in\pr_{\{1,3\zd n\}}\rel$, $\pr_{\{1,2,4\zd
  n\}}\ba\in\pr_{\{1,2,4\zd n\}}\rel$, implying 
$(\ba[2],\ba[3])\in\pr_{23}\rel'$, $(\ba[1],\ba[3])\in\pr_{13}\rel'$,
$(\ba[1],\ba[2])\in\pr_{12}\rel'$, respectively. Thus $\rel'$ is not 2-decomposable,
a contradiction with assumptions made.

Let $(a,b,c)\not\in\rel$ and $(a,b,d),(a,e,c),(f,b,c)\in\rel$, and let $B=\{a,b,c,d,e,f\}$.  
As $\rel_{|B}$ is not 2-decomposable, we should have $\rel=\rel_{|B}$.

If $\rel_{12}=\pr_{12}\rel$ is a thick mapping with respect to some 
equivalence relations $\eta_{12},\eta_{21}$ (see p.~\ref{page:with-respect}), 
$\rel_{13}=\pr_{13}\rel$ is a thick mapping
with respect to $\eta_{13},\eta_{31}$, and $\rel_{23}=\pr_{23}\rel$ is
a thick mapping with respect to $\eta_{23},\eta_{32}$, then
$\ang{a,f}\in\eta_{12}\cap\eta_{13}$,
$\ang{b,e}\in\eta_{21}\cap\eta_{23}$, and
$\ang{c,d}\in\eta_{31}\cap\eta_{32}$. Let the corresponding classes of
$\eta_{12}\cap\eta_{13}$, $\eta_{21}\cap\eta_{23}$, and
$\eta_{31}\cap\eta_{32}$ be $B_1,B_2$, and $B_3$, respectively. Then
$B_1=\pr_1\rel$, $B_2=\pr_2\rel$, $B_3=\pr_3\rel$. Indeed, if one of
these equalities is not true, since by Lemma~\ref{lem:classes}
the relations $B_1,B_2,B_3$ are pp-definable in $\rel$ without equalities, the
relation $\rel'(x,y,z)=\rel(x,y,z)\wedge B_1(x)\wedge B_2(y)\wedge
B_3(z)$ is pp-definable in $\rel$ and the constant relations, is
smaller than $\rel$, and is not 2-decomposable. 

Next we show that $(a,g)\in\pr_{12}\rel$ for all $g\in\pr_2\rel$. If
there is $g$ with $(a,g)\not\in\pr_{12}\rel$ then setting
$C(y)=\exists z (\pr_{12}\rel(z,y)\wedge C_a(z))$ we have $b,e\in C$
and $C\ne\pr_2\rel$. Thus $\rel'(x,y,z)=\rel(x,y,z)\wedge C(y)$ is
smaller than $\rel$ and is not 2-decomposable. The same is true for
$b$ and $\pr_2\rel$, and for $c$ and $\pr_3\rel$. Since every binary
projection of $\rel$ is a thick mapping this implies that
$\pr_{12}\rel=\pr_1\rel\tm\pr_2\rel$,
$\pr_{23}\rel=\pr_2\rel\tm\pr_3\rel$, and
$\pr_{13}\rel=\pr_1\rel\tm\pr_3\rel$. 

For each $i\in\{1,2,3\}$ and every $p\in\pr_i\rel$, the relation
$\rel^p_i(x_j,x_k)=\exists x_i(\rel(x_1,x_2,x_3)\wedge C_p(x_i))$,
where $\{j,k\}=\{1,2,3\}\setminus\{i\}$, is definable in $\rel$ and
therefore is a thick mapping with respect to, say, $\eta^p_{ij},\eta^p_{ik}$.
Our next step is to show that $\rel$ can be chosen
such that $\eta^p_{ij}$ does not depend on the choice of $p$ and $i$.

If one of these relations, say, $\rel^p_1$, equals
$\pr_2\rel\tm\pr_3\rel$, while another one, say $\rel^q_1$ does not,
then consider $\rel^c_3$. We have $\{p\}\tm\pr_2\rel\sse\rel^c_3$.
Moreover, since by the choice of $\rel$ relation $\rel^q_1$ is a
non-trivial thick mapping there is $r\in\pr_2\rel$ such that
$(r,c)\not\in\rel^q_1$, hence $(q,r)\not\in\rel^c_3$. Therefore
$\rel^c_3$ is not a thick mapping, a contradiction. 
Since $\rel^a_1$ does not equal $\pr_2\rel\tm\pr_3\rel$ 
(and $\rel^b_2\ne\pr_1\rel\tm\pr_3\rel$, $\rel^c_3\ne\pr_1\rel\tm\pr_2\rel$), 
every $\eta^p_{ij}$ is non-trivial. Observe that due to the equalities 
$\pr_{ij}\rel=\pr_i\rel\tm\pr_j\rel$, we also have that the unary 
projections of $\rel^p_i$ are equal to those of $\rel$ for any $p$; 
and therefore all the equivalence relations $\eta^p_{ji}$, for a 
fixed $j$, are on the same domain, $\pr_j\rel$.
Let
$$
\eta_i=\bigvee_{\substack{j\in\{1,2,3\}\setminus\{i\}\\
    p\in\pr_j\rel}} \eta^p_{ji}.
$$
Since for any non-crossing $\al,\beta$ we have $\al\join\beta=\al\circ\beta$, the relation $\eta_i$ is
pp-definable in $\rel$ and constant relations without
equalities. Since all the $\eta^p_{ji}$ are non-trivial, $\eta_i$ is
also non-trivial. We set  
\begin{eqnarray*}
\rel'(x,y,z) &=& \exists x',y',z'(\rel(x',y',z')\wedge\eta_1(x,x')\wedge\\
& &\quad \eta_2(y,y')\wedge\eta_3(z,z')).
\end{eqnarray*}
Let $\relo^p_i$ be defined for $\rel'$ in the same way as $\rel^p_i$
for $\rel$. Observe that since $(p,q,r)\in\rel'$ if and only if there
is $(p',q',r')\in\rel$ such that $\ang{p,p'}\in\eta_1$,
$\ang{q,q'}\in\eta_2$, $\ang{r,r'}\in\eta_3$, the relations
$\relo^p_1$, $\relo^q_2$, $\relo^r_3$ for $p\in\pr_1\rel',
q\in\pr_2\rel', r\in\pr_3\rel'$ are thick mappings with respect to the
equivalence relations $\eta_2,\eta_3$, relations $\eta_1,\eta_3$, and
relations $\eta_1,\eta_2$, respectively. All the binary
projections of $\rel'$ equal to the direct product of the
corresponding unary projections, while $\eta_1,\eta_2,\eta_3$ are
non-trivial, which means $\rel'$ is not the direct product of its unary 
projections, and therefore it is not 2-decomposable. We then can 
replace $\rel$ with $\rel'$. Thus we have achieved that  $\eta^p_{ij}$ 
does not depend on the choice of $p$ and $i$. 

Next we show that $\rel$ can be chosen such that
$\pr_1\rel=\pr_2\rel=\pr_3\rel$, $\eta_1=\eta_2=\eta_3$, and for each
$i\in\{1,2,3\}$ there is $r\in\pr_i\rel$ such that $\rel^r_i$ is a
reflexive relation. If, say, $\pr_1\rel\ne\pr_2\rel$, or
$\eta_1\ne\eta_2$, or $\rel^r_3$ is not reflexive for any
$r\in\pr_3\rel$, consider the following relation 
$$
\rel'(x,y,z)=\exists y',z' (\rel(x,y',z)\wedge \rel(y,y',z')\wedge C_d(z')).
$$
First, observe that $\pr_{ij}\rel'=\pr_i\rel'\tm\pr_j\rel'$ for any
$i,j\in\{1,2,3\}$. Then, for any fixed $r\in\pr_3\rel'=\pr_3\rel$ the
relation $\relo^r_3=\{(p,q)\mid (p,q,r)\in\rel'\}$ is the product
$\rel^r_3\circ(\rel^d_3)^{-1}$, that is, a non-trivial thick mapping.
Thus $\rel'$ is not 2-decomposable. Furthermore,
$\pr_1\rel'=\pr_2\rel'=\pr_1\rel$, for any $r\in\pr_3\rel'$ the
relation $\relo^r_3$ is a thick mapping with respect to
$\eta_1,\eta_1$, and $\relo^d_3$ is reflexive. Repeating this
procedure for the two remaining pairs of coordinate positions, we obtain a
relation $\rel''$ with the required properties. Observe that repeating 
the procedure does not destroy the desired properties where they 
are already achieved.
Replacing $\rel$ with $\rel''$ if necessary, we may assume that $\rel$
also has all these properties. 

Set $B=\pr_1\rel=\pr_2\rel=\pr_3\rel$ and
$\eta=\eta_1=\eta_2=\eta_3$. Let $p\in B$ be such that $\rel^p_1$ is
reflexive. Let also $q\in B$ be such that $\ang{p,q}\not\in\eta$. Then
$(p,p,p),(p,q,q)\in\rel$ while $(p,p,q)\not\in\rel$. Choose $r$ such
that $(r,p,q)\in\rel$. Then the restriction of $\rel$ onto 3-element
set $\{p,q,r\}$ is not 2-decomposable. Thus $\rel$ can be assumed to
be a relation on a 3-element set.  

If $\eta$ is not the equality relation, say, $\ang{p,r}\in\eta$, then
as the restriction of $\rel$ onto $\{p,q\}$ is still a not
2-decomposable relation, $\rel$ itself is a relation on the set
$\{p,q\}$. Identifying $p$ and $q$ with 0 and 1 it is not hard to 
see that it is the affine relation $x+y+z=0$ on
$\{p,q\}$. The CSP with global cardinality constraints for this
relation is NP-complete by \cite{Creignou08:cardinality}. 

Suppose that $\eta$ is the equality relation. Since each of
$\rel^p_1,\rel^q_1,\rel^r_1$ is a mapping and
$\rel^p_1\cup\rel^q_1\cup\rel^r_1=B^2$, it follows that the three
relations are disjoint. As $\rel^p_1$ is the identity mapping,
$\rel^q_1$ and $\rel^r_1$ are two cyclic permutations of (the
3-element set) $B$. Hence either $(p,q)$ or $(q,p)$ belongs to
$\rel^q_1$. Let it be $(p,q)$.  Restricting $\rel$ onto $\{p,q\}$ we
obtain a relation $\rel'$ whose projection $\pr_{23}\rel'$ equals
$\{(p,p),(q,q),(p,q)\}$, which is not a thick mapping. A contradiction
with the choice of $\rel$.
\end{proof}

\bibliographystyle{plain}  

\vspace{-20 pt}

\end{document}